\documentclass[conference,10pt,letterpaper]{IEEEtran}
\IEEEoverridecommandlockouts  % Required to re-enable the thanks command
\usepackage{graphicx}
\usepackage{amssymb}
\usepackage{cite}
\usepackage{algorithm}
\usepackage{algpseudocode}
%\usepackage[utf8]{inputenc}
%\usepackage[tight,footnotesize]{subfigure}

% \hypersetup{pdfpagemode=UseNone}

%\graphicspath{{figures/}}

\newcommand{\remove}[1]{}

\newtheorem{property}{Property}[section]
\newtheorem{lemma}{Lemma}[section]

\newcommand{\stackold}{S_\mathit{old}}
\newcommand{\stacknew}{S_\mathit{new}}

\newcommand{\fid}{\mathit{FID}}
\newcommand{\chains}{\mathit{chains}}
\newcommand{\crossing}{\mathit{crossing}}
\newcommand{\final}{\mathit{final}}
\newcommand{\nexthop}{\mathit{nh}}
\newcommand{\fids}{\mathit{fids}}

\newenvironment{proof}{\textbf{Proof}}{\hfill$\square$}

\begin{document}

\title{Intra-Domain Pathlet Routing}

\author{\IEEEauthorblockN{Marco Chiesa\IEEEauthorrefmark{1},
Gabriele Lospoto\IEEEauthorrefmark{1},
Massimo Rimondini\IEEEauthorrefmark{1}, and
Giuseppe Di Battista\IEEEauthorrefmark{1}}
\IEEEauthorblockA{Roma Tre University\\
Department of Computer Science and Automation\\
\IEEEauthorrefmark{1}\texttt{\{chiesa,lospoto,rimondin,gdb\}@dia.uniroma3.it}}
\thanks{Work partially supported by
% EC Seventh Framework
% Programme (FP7/2007-2013) grant no. 317647 (``Leone: From Global
% Measurements to Local Management,''), and by
ESF project
10-EuroGIGA-OP-003 ``GraDR.''}}

\maketitle

\begin{abstract}
Internal routing inside an ISP network is the foundation for lots of
services that generate revenue from the ISP's customers. A fine-grained
control of paths taken by network traffic once it enters the ISP's network is
therefore a crucial means to achieve a top-quality offer and, equally important,
to enforce SLAs. Many widespread network technologies and approaches (most
notably, MPLS) offer limited (e.g., with RSVP-TE), tricky (e.g., with OSPF
metrics), or no control on internal routing paths. On the other hand, recent
advances in the research community~\cite{bib:pathlet} are a good starting point
to address this
shortcoming, but miss elements that would enable their applicability in an ISP's
network.

We extend pathlet routing~\cite{bib:pathlet} by introducing a new control plane
for internal
routing that has the following qualities: it is designed to operate in the
internal network of an ISP; it enables fine-grained management of network paths
with suitable configuration primitives; it is scalable because routing changes
are only propagated to the network portion that is affected by the
changes; it supports independent configuration of specific network portions 
without the need to know the configuration of the whole network; it is robust 
thanks to the adoption of multipath routing; it supports
the enforcement of QoS levels; it is independent of the specific data plane used
in the ISP's network; it can be incrementally deployed and it can nicely coexist
with
other control planes.
Besides formally introducing the algorithms and messages of our control plane,
we propose an experimental validation in the simulation framework OMNeT++ that
we use to assess the effectiveness and scalability of our approach.
\end{abstract}

\section{Introduction}

It is unquestionable that routing choices inside the network of an Internet
Service Provider (ISP) are critical for the quality of its service offer and, in
turn, for its revenue, and several technologies have been introduced over time to
provide ISPs with different levels of control on their internal routing paths.
These technologies, ranging from approaches as simple as assigning costs to
network links (like, e.g., in OSPF) to real traffic engineering solutions
(like, e.g., RSVP), usually fall short in at least one among: complexity of
setup, predictability of the effects, and degree of control on the routing
paths.
The research community has worked and still contributes to this hot matter 
from different points of view: control over paths is attained by means of source
routing techniques; besides this, many papers advocate the use of
multipath routing as a means to ensure resiliency and quick recovery from
failures; moreover, the granularity of the routing information to be 
disseminated to
support multipath and source routing is sometimes controlled by using 
hierarchical
routing mechanisms.
However, to the extent of our knowledge, existing technological and research
solutions still fail in conjugating a fine-grained control of network paths, 
support for multipath, differentiation of Quality of Service levels, and the 
possibility to independently configure different network portions, a few goals 
that an ISP 
is much interested in achieving without impacting the simplicity of 
configuration primitives, the scalability of the control plane (in terms of 
consumed device memory and of exchanged messages, especially in the presence of 
topological changes), the robustness to faults, and the compatibility with 
existing deployed routing mechanisms.

In this paper we propose the design of a new control plane for internal
routing in an ISP's network which combines all these advantages. Our control 
plane is built on top of pathlet routing~\cite{bib:pathlet}, which we believe to 
be one of the most convenient approaches introduced so far to tackle the ISP's
requirements
described above.

The foundational principles of our control plane are as follows.
A \emph{pathlet} is a path fragment described by a t-uple $\left<\fid, v_1,
v_2, \sigma,\delta\right>$, the semantic being the following: a pathlet,
identified by a value $\fid$, describes the possibility to reach a network node
$v_2$ starting from another network node $v_1$, without specifying any of the
intermediate devices that are traversed for this purpose. A pathlet need not be
an end-to-end path, but can represent the availability of a route from an
intermediate system $v_1$ to an intermediate system $v_2$ in the ISP's network.
An end-to-end path can then be constructed by concatenating several pathlets.
The $\delta$ attribute carries information about the network destinations
(e.g., IP prefixes) that can be reached by using that pathlet (given that
pathlets are not necessarily end-to-end, this attribute can be empty).
In the control
plane we propose, routers are grouped into \emph{areas}: an area is a portion of
the ISP's network wherein routers exchange all information about the available
links, in a much similar way to what a link-state routing protocol does;
however, when announced outside the area, such information is summarized in a
single pathlet that goes from an entry router for the area directly to an exit
router, without revealing routing choices performed by routers that are internal
to the area. This special pathlet, which we call \emph{crossing
pathlet}, is considered outside the area as if it were a single link. An area 
can enclose other areas, thus forming a hierarchical structure with an arbitrary 
number of levels:
the $\sigma$ attribute in a pathlet encodes a restriction about the
areas where that pathlet is supposed to be visible.

In designing our control plane we took into account several aspects, among
which: efficient reaction to topological changes and administrative
configuration changes, meaning that the effects of such changes are only
propagated to the network portion that is affected by them; support for several
kinds of routing policies; support for multipath and differentiation of QoS 
levels; and compatibility
and integration with other technologies that are already deployed in the ISP's
network, to allow an incremental deployment. By introducing areas we also offer 
the 
possibility for different network administrators to independently configure 
different portions of an ISP's network without the need to be aware of the 
overwhelmingly complex setup of the whole network.

Our contribution consists of several parts. First of all, we introduce a
model for a network where nodes are grouped in a hierarchy of areas. Based on
this model, we define the basic mechanisms adopted in the creation and
dissemination of pathlets in the network. We then present a detailed description
of how network dynamics are handled, including the specification of the messages
of our control plane and of the algorithms executed by a network node upon
receiving such messages or detecting topological or configuration changes.
Further, we elaborate on the practical applicability of our control plane in an
ISP's network in terms of possible deployment technologies and propose some
possible extensions to accommodate further requirements. Last, we present an
experimental assessment of the scalability of our approach in a simulated
scenario.

The rest of the paper is organized as follows.
In Section~\ref{sec:related-work} we review and classify the state of the art on
routing mechanisms that could match the requirements of ISPs.
In Section~\ref{sec:model} we introduce our formal network model.
In Section~\ref{sec:dissemination-basics} we describe the basic
pathlet creation and dissemination mechanisms.
In Section~\ref{sec:dynamics} we detail the message types of our control
plane and describe the network dynamics.
In Section~\ref{sec:applicability} we present applicability considerations
and possible extensions to accommodate other requirements.
In Section~\ref{sec:experiments} we present the results of our experiments run
in the OMNeT++ simulation framework.
Last, conclusions and plan for future work are presented in
Section~\ref{sec:conclusions}\mbox{.}

\section{Related Work}\label{sec:related-work}
Many of the techniques that we adopt in our control plane have already been
proposed in the literature. Most notably, these techniques include source
routing (intended as the possibility for the sender of a packet to select the
nodes that the packet should traverse), hierarchical routing (intended as a
method to hide the details of routing paths within certain portions of the
network by defining areas), and multipath routing (intended as the possibility
to compute and keep multiple paths between each source-destination pair).
However, none of the contributions we are aware of combines them in a way that
provides all the benefits offered by our approach. We provide
Table~\ref{tab:state-of-the-art} as a reading key to compare the state of the
art on relevant control plane mechanisms, discussed in the following.

\begin{table}[t]
	\centering
   \begin{tabular}{|l|c|c|c|}
		\cline{2-4}
      \multicolumn{1}{l|}{} & \textbf{Source routing} & \textbf{Hierarchical
routing} & \textbf{Multipath routing}\\
		\hline
		\cite{bib:path-splicing} & Limited & No & Yes\\
		\hline
		\cite{bib:nira} & Yes & No & No\\
		\hline
		\cite{bib:hlp} & No & Limited & No\\
		\hline
		\cite{bib:macro-routing} & No & Yes & No\\
		\hline
		\cite{bib:hdp} & Limited & Yes & No\\
		\hline
		\cite{bib:slick-packet} & Yes & Limited & Yes\\
		\hline
		\cite{bib:bgp-add-paths} & No & No & Yes\\
		\hline
		\cite{bib:hierarchical-routing-lv} & No & Yes & Limited\\
		\hline
		\cite{bib:landmark} & No & Yes & Yes\\
		\hline
		\cite{bib:miro} & Limited & No & Yes\\
		\hline
		\cite{bib:yamr} & No & Limited & Yes\\
		\hline
   \end{tabular}
   \caption{A classification of the state of the art according to the adoption
of some relevant routing techniques.}
   \label{tab:state-of-the-art}
\end{table}

Pathlet routing~\cite{bib:pathlet} is probably the contribution that is
closest to our control plane approach: its most evident drawback is the lack of
a clearly defined mechanism for the dissemination of pathlets, which the
authors only hint at.
Path splicing~\cite{bib:path-splicing} is a mechanism designed with
fault tolerance in mind (see also~\cite{bib:pathsplicing-fault-tolerance}): it
exploits multipath to ensure connectivity between network nodes as long as the
network is not partitioned. However, actual routing paths are not exposed, and
this limits the control that the ISP could enforce on internal routing. Even in
MIRO~\cite{bib:miro}, where multiple paths can be negotiated to satisfy the
diverse requirements of end users, there can be no full control of a whole
routing path.
NIRA~\cite{bib:nira} compensates this shortcoming, but it is designed only for
an interdomain routing architecture, like MIRO, and it relies on a
constrained address space allocation, a hardly feasible choice for an ISP that
is taken also by Landmark~\cite{bib:landmark}.
Slick packets~\cite{bib:slick-packet} is also designed for fault tolerant source
routing, achieved by encoding in the forwarded packets a directed acyclic graph
of different alternative paths to reach the destination. Besides the intrinsic
difficulty of this encoding, it inherits the limits of the dissemination
mechanisms it relies on: NIRA or pathlet routing. BGP
Add-Paths~\cite{bib:bgp-add-paths} and YAMR~\cite{bib:yamr} also address
resiliency by announcing multiple paths selected according to different
criteria, but they only adopt multipath routing, provide very limited or no
support for hierarchical routing, and have some dependencies on the BGP
technology.
A completely different approach is taken by HLP~\cite{bib:hlp}, which proposes a
hybrid routing mechanism based on a combination of link-state and path-vector
protocols. This paper also presents an in-depth discussion of routing
policies that can be implemented in such a scenario. Although this
contribution matches more closely our approach, it is not conceived for
internal routing in an ISP's network, it constrains the way in which areas are
defined on the network, and it has limits on the configurable routing
policies. A similar hybrid routing mechanism called
ALVA~\cite{bib:hierarchical-routing-lv} offers more flexibility in the
configuration of areas but, like Macro-routing~\cite{bib:macro-routing},
it does not explicitly envision source routing and multipath routing.
HDP~\cite{bib:hdp} is a variant of this approach that, although natively
supporting Quality of Service and traffic engineering objectives, is closely
bound to MPLS and accommodates source routing and multipath routing only in the
limited extent allowed by this technology.

Some of the papers we mention here also point out an aspect that is key to
attain the nice control plane features we are looking for: path-vector protocols
allow the setup of complex information hiding and manipulation policies, whereas
link-state protocols offer fast convergence with a low overhead. Therefore,
a suitable combination of the two mechanisms, which is considered in our
approach, should be pursued to inherit the advantages of both.

\section{A Hierarchical Network Model}\label{sec:model}
We now describe the hiearchical model we use to represent the network.
\begin{figure}
   \centering
   \includegraphics[width=4.5cm]{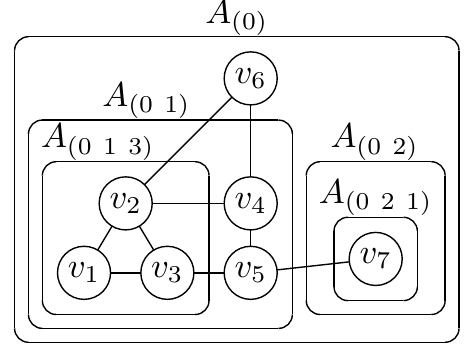}
   \caption{A sample network. Stack labels are integer numbers. Rounded boxes
represent areas $A_\sigma$, with the associated stacks specified as subscript
$\sigma$.}
   \label{fig:example}
\end{figure}
We model the physical network topology as a graph, with vertices being
routers and edges representing links between routers: let
$G=(V,E)$ be an undirected graph, where $V$ is a set of vertices and
$E=\{(u,v)|u,v\in V\}$ is a set of edges that connect vertices.
Fig.~\ref{fig:example} shows an example of such graph.
We assume that any vertex in the graph is interested in establishing a
path to special vertices that represent routers that announce network
destinations. Therefore, we introduce a set of \emph{destination
vertices} $D\subseteq V$.
We highlight that the same representation can be adopted to capture the topology
of overlay networks, while keeping the model unchanged.

% \vspace{1mm}
% {\bf Areas and label stacks.}
In order to improve scalability and limit the propagation of routing
information that is only relevant in certain portions of the network, we group
vertices into structures called areas. To describe the assignment of a vertex
$v\in V$ to an area we associate to $v$ a stack of labels $S(v)=(l_0\ l_1\
\dots\ l_n)$, where each label is taken from a set $L$. To simplify notation
and further reasoning, we
assume that $l_0$ is the same for every
$S(v)$ and that $\bot\in L$. $\bot$ is a special label that we will use to
identify routing information (actually, pathlets) that represents network links.
Referring to the example in Fig.~\ref{fig:example}, we have
$L=\{0, 1, 2, 3, \bot\}$,
$S(v_1)=S(v_2)=S(v_3)=(0\ 1\ 3)$,
$S(v_4)=S(v_5)=(0\ 1)$, $S(v_6)=(0)$, and $S(v_7)=(0\ 2\ 1)$.

We now define some operations on label stacks that allow us to introduce the
notion of area and will be useful in the rest of the paper.
Given two stacks
$\sigma_1=(l_1\ l_2\ \dots\ l_i)$ and
$\sigma_2=(l_{i+1}\ l_{i+2}\ \dots\ l_n)$, we define their concatenation as
$\sigma_1 \circ \sigma_2=(l_1\ l_2\ \dots\ l_i\ l_{i+1}\ l_{i+2}\ \dots\ l_n)$.
Assuming that $()$ indicates the empty stack, we have that $\sigma \circ ()=()
\circ \sigma = \sigma$. Given two stacks $\sigma_1$ and $\sigma_2$, we say that
$\sigma_2$ \emph{strictly extends} $\sigma_1$, denoted by $\sigma_1\sqsubset
\sigma_2$, if $\sigma_2$ is longer than $\sigma_1$ and $\sigma_2$ starts
with the same sequence of labels as in $\sigma_1$, namely there exists a
nonempty stack $\bar\sigma$ such that
$\sigma_2=\sigma_1 \circ \bar\sigma$. We say that $\sigma_2$ \emph{extends}
$\sigma_1$, indicated by $\sigma_1\sqsubseteq\sigma_2$, if $\bar\sigma$ can be
empty.

% Given two arbitrary stacks $\sigma_1=(l_0\ \dots\  l_n)$ and $\sigma_2=(l'_0\
% \dots\ l'_m)$, we define operator $\sigma_1 \Join\sigma_2=(l_0\ \dots\ l_i)$ as
% the longest common prefix between $\sigma_1$ and $\sigma_2$, and
% $\sigma_1 \rightarrowtail \sigma_2= (l_0\ \dots\ l_k)$, where $k=min(i+1,n)$.

We call \emph{area} $A_\sigma$ a non-empty set of
vertices whose stack extends $\sigma$,
namely a set $A_\sigma\subseteq V$ such that $\forall v\in A_\sigma:
\sigma\sqsubseteq S(v)$. The following property is a consequence of this
definition:
\begin{property}\label{pro:belonging-areas}
Given a vertex $v\in V$ with stack $S(v)$, $v$ belongs to the
following set of areas: $\{A_\sigma | \sigma\sqsubseteq S(v)\}$.
\end{property}
Our definition of area has a few interesting consequences.
% First, an area cannot be empty, because it is defined by the stack of at
% least one vertex.
First, by Property~\ref{pro:belonging-areas}, specifying the stack $S(v)$ for a
vertex $v$ defines all areas $A_\sigma$ such that $\sigma\sqsubseteq S(v)$.
Thus, areas can be conveniently defined by simply specifying the label stacks
for all vertices.
Considering again the example in Fig.~\ref{fig:example}, the
assignment of label stacks to vertices implicitly
defines areas $A_{(0)}$, $A_{(0\ 1)}$, $A_{(0\ 1\ 3)}$, $A_{(0\ 2)}$, and
$A_{(0\ 2\ 1)}$ (note that $A_{(0\ 2)}=A_{(0\ 2\ 1)}$).
Moreover, areas can contain other areas, thus forming a hierarchical
structure. However, areas can never overlap partially, that is, given any two
areas $A_1$ and $A_2$, it is always $A_1\subseteq A_2$ or $A_2\subseteq A_1$.
Also, the first label $l_0$ in any stack plays a special role, because it is:
$A_{(l_0)}=V$.

% \vspace{1mm}
%  {\bf Basic operations over stack of labels to support information hiding.}
% We now introduce some additional operators on label stacks that we use to limit
% the propagation of routing information.

Areas are introduced to hide the detailed internal topology of portions of
the network and, therefore, to limit the scope of propagation of routing
information.  As a general rule, assuming that the internal topology of an area
$A_\sigma$ consists of all the vertices in $A_\sigma$ and the edges of $G$
connecting those vertices, our control plane propagates only a
summary of this information to vertices outside $A_\sigma$.
With this approach in mind, we introduce two additional operators on
label stacks, that are used to determine the correct level of granularity
to be used in propagating routing information.
Given two areas $A_{\sigma_a}$ and $A_{\sigma_b}$, the first operator, indicated
by $\Join$, is used to determine the most nested area that contains both
$A_{\sigma_a}$ and $A_{\sigma_b}$, namely the area within which routing
information that is relevant only for vertices in $A_{\sigma_a}$ and
$A_{\sigma_b}$ is supposed to be confined: this area is defined by
$A_{\sigma_a\Join\sigma_b}$. Referring to the example
in Fig.~\ref{fig:example}, the most nested area containing both $v_5$ and $v_7$
is $A_{S(v_5)\Join S(v_7)}=A_{(0)}$.
The second operator, indicated by $\rightarrowtail$, is used to determine the
least nested area that includes all vertices in $A_{\sigma_a}$ but not those in
$A_{\sigma_b}$, namely the area that vertices in $A_{\sigma_a}$ declare to be
member of when sending routing information to neighboring vertices in
$A_{\sigma_b}$: this area is defined by $A_{\sigma_a\rightarrowtail\sigma_b}$
(in case $\sigma_a\sqsubseteq\sigma_b$, such an area does not exist and
$A_{\sigma_a\rightarrowtail\sigma_b}=A_{\sigma_a}$).
Considering again Fig.~\ref{fig:example}, $v_7$ communicates with $v_5$ as a
member of area $A_{S(v_7)\rightarrowtail S(v_5)}=A_{(0\ 2)}$.
We now define the two operators formally. Given two arbitrary stacks
$\sigma_a=(a_0\ \dots\ a_i\ \dots\ a_n)$ and
$\sigma_b=(b_0\ \dots\ b_i\ \dots\ b_m)$ such that $a_0=b_0$, $a_1=b_1$,
$\dots$, $a_i=b_i$ for some $i\leq \min(m,n)$ and $a_{i+1}\neq b_{i+1}$ if
$i<\min(m,n)$,
% with $i\in\{0, \dots,
% \min(m,n)\}$, $l_j=l'_j$ for any $j\in\{0, \dots, i\}$, and $l_{i+1}\neq
% l'_{i+1}$ if $i<\min(m,n)$,
we define $\sigma_a \Join\sigma_b=(a_0\ \dots\ a_i)$ and
$\sigma_a \rightarrowtail \sigma_b=(a_0\ \dots\ a_k)$ where
$k=\min(i+1,n)$.
We extend these definitions in a natural way by assuming that
$()\Join\sigma_b=\sigma_a\Join()=()$ and
$\sigma_a\rightarrowtail()=()\rightarrowtail\sigma_b=()$.
Be aware that $\Join$ is commutative, whereas
$\rightarrowtail$ is not.
For each area, a subset of the vertices belonging to the area
are in charge of summarizing internal routing information and propagating it
outside the area: these vertices are called border vertices.
% Hiding the inner details of an area $A_\sigma$ is an operation performed by
% vertices that belong to $A_\sigma$ and  have neighbors not in $A_\sigma$.
In particular, a vertex $u\in A_\sigma$ incident on an edge $(u,v)$ such that
$v\notin A_\sigma$ is called a \emph{border vertex for area
$A_\sigma$}. In the example in Fig.\ref{fig:example}, $v_2$ is a border vertex
for area $A_{(0\ 1\ 3)}$ because $v_2\in A_{(0\ 1\ 3)}$, $(v_2,v_6)\in E$, and
$v_6\notin A_{(0\ 1\ 3)}$.
Because of Property~\ref{pro:belonging-areas}, a single vertex can be a border
vertex for more than one area: in Fig.~\ref{fig:example}, $v_2$ is also a border
vertex for area $A_{(0\ 1)}$ because $v_2\in A_{(0\ 1)}$ and $v_6\notin A_{(0\
1)}$. Also, by definition it may be the case that a
neighbor of a border vertex is not a border vertex for any areas: looking again
at Fig.~\ref{fig:example}, $v_6$ is not a border vertex.
Derived from the definition of border vertex, we can state the following
property:

\begin{property}\label{pro:no-border}
   There can be no border vertex for area $A_{(l_0)}$.
\end{property}

\section{Basic Mechanisms for the Dissemination
of Routing Information}\label{sec:dissemination-basics}

After introducing our network model, we can now illustrate how routing
information is disseminated over the network. In order to do so, we first
define the concept of pathlet and describe how pathlets are created and
propagated. We then introduce conditions on label stacks and routing policies 
that regulate the propagation of pathlets.

% The underlying principle is that detailed routing
% information collected in the interior of an area is hidden outside that area.
% 
% 
% 
% our basic
% routing information structures and how such routing information is disseminated
% over the network. In particurlar, vertices that lay outside an area $A$ have no
% visibility of any internal vertex of $A$ and consider border vertices of $A$
% as if they are each other directly connected. 

% After giving the details of our network model, we can now illustrate  how
% routing information is exchanged on the network. 

\vspace{1mm}
{\bf Pathlets} -- 
In order to learn about paths to the various destinations, vertices in graph $G$
exchange path fragments called pathlets~\cite{bib:pathlet}. In order to support
the definition of areas and the consequent information hiding mechanisms, we
present an enhanced definition of a pathlet that is
slightly different from the original one.  A \emph{pathlet} $\pi$ is a t-uple
$\left<\fid, v_1, v_2, \sigma,\delta\right>$ where all fields are assigned
by vertex $v_1$: $\fid$ is an identifier of the pathlet called \emph{forwarding
identifier}, and is unique at $v_1$; $v_1\in V$ is the
\emph{start vertex}; $v_2\in
V|v_2\neq v_1$ is the \emph{end vertex}; $\sigma$ is a stack of labels from
$L$ called \emph{scope stack}, and is a new field introduced
to restrict the areas where pathlet $\pi$ should be propagated; and $\delta$ is 
a (possibly empty)
set of network destinations (e.g., network prefixes) available at $v_2$.
% $\fid$ values are assigned by vertices in $V$ and are unique at each vertex.
$\fid$s are used to distinguish between different pathlets starting at the
same vertex $v_1$ and are exploited
by the data plane of $v_1$ to determine where traffic is to be
forwarded.
Even pathlets that have the same scope stack and, using different network 
paths, connect the same pair of vertices, can still be distinguished based on 
the $\fid$.
We assume $\fid$s are integer numbers.
%
% \begin{property}\label{pro:unique-fids}
% Given a vertex $u$ and two different pathlets $\left< \fid_1, u, v_1,
% \sigma_1\right>$ and $\left< \fid_2, u, v_2, \sigma_2\right>$, it must be
% $\fid_2\neq \fid_1$.
% \end{property}
%
%The scope stack $\sigma$ is used to restrict the network region where pathlet
%$\pi$ should be propagated. Elements in $\delta$ indicate network destinations
%that are available at vertex $v_2$.

\vspace{1mm}
{\bf Packet forwarding} --
Each vertex has to keep forwarding state information to support the
operation of the data plane. Since our control plane has to update these
information, we now define the forwarding state of a vertex by providing hints
about the packet forwarding mechanism, which is the same
presented in~\cite{bib:pathlet}.
In pathlet routing, each data packet carries in a dedicated header a sequence
of $\fid$s: this sequence indicates the pathlets that the packet should be
routed along to reach the destination. When a vertex $u$ receives a packet, it
considers the first $\fid$ in the sequence contained in its header: this
$\fid$, referenced as $f$ in the following, uniquely identifies a pathlet $\pi$
that is known at $u$ and that has $u$ as start vertex.
Now, in the general case pathlet $\pi$ may lead to an end vertex that is not
adjacent to $u$. Since a pathlet does not contain the detailed specification of
the routing path to be taken to reach the end vertex, before forwarding the 
packet $u$ has
to modify the sequence of $\fid$s contained in the packet header to insert such 
specification: $u$ achieves this by replacing
$f$ with another sequence of $\fid$s that indicates the pathlets to be used to
reach the end vertex of $\pi$. Therefore, the first part of the forwarding
state of $u$ is a correspondence between each value of the $\fid$ and a (possibly
empty) sequence of $\fid$s, which we indicate as $\fids_u(\fid)$.
At this point, $u$ has to pick a neighboring vertex to forward the packet to. 
Since 
also this information is missing in pathlet $\pi$, it must be
kept locally at vertex $u$. The second part of the forwarding
state of $u$ is therefore the specification of the \emph{next-hop} vertex, 
namely of
the vertex that immediately comes after $u$ along $\pi$, which we refer to as
$\nexthop_u(\fid)$.
%
% $u$ along $\pi$, and the sequence $\fids(\pi)$ of $\fid$s of the pathlets that
% make up $\pi$.
%
% Intuitively, a pathlet $\pi=\left<\fid, u, v, \sigma,\delta\right>$
% guarantees that when $u$ receives a packet whose header contains a sequence of
% fids $(\fid\ f_1\ \dots\ f_n)$, with $n \ge 0$, the packet will arrive at $v$
% with a sequence of FIDs $(f_1\ \dots\ f_n)$.
% traversing a fixed path $(u,p_1,\dots,p_m,v)$, with $m\ge 0$. 
% 
% In order to send the packet to $v$,  $u$ replaces $\fid$ from the header of the
% packet  with a (possibly empty) sequence of FIDs
% $fids_{u}(\fid)=(f_1,\dots,f_n)$ and forward the packet to a neighbor
% $\nexthop_{u}(\fid)$ of $v_1$. FIDs in $fids_u(\fid)$ are used to deliver
Both $fids_u$ and $\nexthop_u$ are computed by the control plane, as
explained in the following section.

% \section{Basic Mechanisms for Information
% Hiding}\label{sec:dissemination-basics}

% We now describe which pathlets are disseminated on the network and the
% mechanism used for dissemination (the details about protocol messages and
% algorithms to realize these mechanisms are described in
% Section~\ref{sec:dynamics}). First of all, we describe how pathlets are
% created by vertices, then we explain how such information is propagated over
% the network.
% \vspace{1mm}
% We now describe how pathlets are created by vertices, how forwarding states
% are updated, how pathlets are propagated over the network, and how a border vertex for 
% a certain area can distinguish which are the other border vertices for the same area. 
% 
% First of all, we explain how to create pathlets inside a network in order to
% summarize inner  details of an area. These pathlets are established among border vertices 
% of that area in such a way that 

\vspace{1mm}
{\bf Atomic, crossing, and final pathlets} --
We distinguish among three types of pathlets: atomic, crossing, and final.
A pathlet $\pi=\left<\fid, v_1, v_2, \sigma, \delta\right>$ is called
\emph{atomic pathlet} if its start and end vertices are
adjacent on graph $G$. Atomic pathlets carry in the $\delta$ field the network 
destinations possibly available at $v_2$. They are used to propagate information 
about the network topology and are propagated only inside the most nested area 
that contains both $v_1$ and $v_2$. To represent the fact that a network link 
$(v_1,v_2)$ is bidirectional, two atomic pathlets need to be created for that 
link, one from $v_1$ to $v_2$ (created by $v_1$) and another from $v_2$ to $v_1$ 
(created by $v_2$). Atomic pathlets are always marked by putting
the special label $\bot$ at the end of the scope stack. More formally, an atomic
pathlet is such that $(v_1,v_2)\in E$ and
$\exists\bar\sigma\neq()|\sigma = \bar\sigma\circ\bot$.
Besides serving as a distinguishing mark for atomic pathlets, label
$\bot$ has been introduced to simplify the description of pathlet
dissemination mechanisms, because it avoids the need to consider several special 
cases.

Pathlet $\pi$ is a \emph{crossing pathlet} for area $A_\sigma$ if its
start and end vertices are border vertices for area $A_\sigma$. Crossing 
pathlets always have $\delta=\emptyset$ and do not contain label $\bot$ in the 
scope stack. A pathlet of this type offers vertices 
outside
$A_\sigma$ (that is, 
whose label stack is strictly extended by $\sigma$) the possibility to traverse 
$A_\sigma$ without knowing its internal
topology: crossing pathlets are therefore one of the fundamental building blocks 
of our control plane, as they realize the possibility to hide detailed routing 
information about the interior of an area.
Since a vertex can be a border vertex for more than one area, different pathlets
with the same start and end vertices can act as crossing pathlets for different
areas (they would have different scope stacks and $\fid$s).

Last, $\pi$ is a \emph{final pathlet} if it leads to some network destination 
available at $v_2$, 
that is, if $\delta\neq\emptyset$. Like crossing pathlets, final pathlets do not 
contain label $\bot$ in the scope stack. Final pathlets are created by a border 
vertex 
$v_1$ for an area $A_\sigma$ to inform vertices outside $A_\sigma$ about the 
possibility to 
reach a destination vertex $v_2\in A_\sigma\cap D$.

Notice that between two neighboring vertices it possible to create an atomic, a 
crossing, and a final pathlet: these pathlets are disseminated independently 
and have each a different role, as described above. The type (and, therefore, 
the 
role and scope of propagation) of these pathlets can be determined
based on the contents of $\delta$ and on the presence of the special label 
$\bot$ in the scope stack.
Since the creation and dissemination mechanisms are very similar for crossing 
and 
final pathlets, in the following we detail only those applied to crossing 
pathlets, assuming that they are the same for final pathlets unless differently 
stated.

% $\bar\sigma\circ\bot$, then $\pi$ is called \emph{atomic pathlet}; otherwise,
% $\pi$ is a \emph{crossing pathlet}. 
% A crossing pathlet  is made up by concatenating several pathlets 
% for some area $A_{\sigma'}$ (the specific area
% depends on how the pathlet has been created, but it is always $\sigma
% \sqsubset % \sigma'$).

\vspace{1mm}
{\bf Pathlet creation} --
We now describe how atomic and crossing pathlets are created at each vertex 
(similar mechanisms are applied for final pathlets). 
When we say ``create'' we mean that a vertex defines these pathlets, assigns to 
each of them a unique $\fid$, and keeps them in a local data structure, as 
illustrated in Section~\ref{sec:dynamics}. In the following, we also use the 
term ``composition'' to refer to the creation of crossing and final pathlets.

Each vertex $u\in V$ creates atomic pathlets $\left<\fid,u,v,\sigma\circ 
(\bot),\delta\right>$ such that $(u,v)\in E$, $\sigma=S(u)\Join S(v)$, and 
$\delta$ contains the set of network destinations possibly available at $v_2$. 
The 
scope stack $\sigma$ is chosen in such a way to restrict propagation of each 
atomic 
pathlet up to the most nested area that contains both $u$ and $v$.
%where 
%$\delta$ is the set of
%network destinations available at $v$.
These pathlets are used to disseminate information about the physical
network topology and act as building blocks for creating crossing and final
pathlets. When creating an atomic pathlet, vertex $u$ also updates its 
forwarding state with $\nexthop_u(\fid)=v$ 
and
$\fids_u(\fid)=()$.
Looking at the example of Fig.~\ref{fig:example}, $v_4$ creates 
pathlets
$\left<1,v_4,v_5,(0\ 1\ \bot),\emptyset\right>$,
$\left<2,v_4,v_2,(0\ 1\ \bot),\emptyset\right>$, and
$\left<3,v_4,v_6,(0\ \bot),\emptyset\right>$ (we assigned $\fid$s randomly).
$v_4$ 
then sets $\nexthop_{v_4}(1)=v_5$, $\nexthop_{v_4}(2)=v_2$, 
$\nexthop_{v_4}(3)=v_6$, and $\fids_{v_4}(1)=\fids_{v_4}(2)=\fids_{v_4}(3)=()$.

Atomic pathlets can be concatenated to create pathlets between
non-neighboring vertices.
To achieve this, we introduce a set $\chains(\Pi,u,v,\sigma)$ that contains all 
the possible concatenations of pathlets taken from a set $\Pi$, that start at 
$u$ and end at $v$, and whose scope stack extends $\sigma$, regardless of 
$\fid$s and network destinations.
$\chains(\Pi,u,v,\sigma)$ is formally defined as the set of all
possible sequences of pathlets in $\Pi$, where each sequence
$(\pi_1\ \pi_2\ \dots\ \pi_n)$ is finite, cycle-free, and such that
$\pi_i=\left<
\fid_i,w_i,w_{i+1},\sigma_i,\delta_i\right>$, $\sigma\sqsubseteq\sigma_i$,
$\pi_{i+1}=\left<\fid_{i+1},w_{i+1},w_{i+2},\sigma_{i+1},\delta_{i+1}\right>$,
and
$\sigma\sqsubseteq\sigma_{i+1}$, with $i\in\{1, \dots, n-1\}$,
$w_1=u$, $w_{n+1}=v$.
A border vertex $u$ exploits these concatenations to create crossing pathlets, 
that can be used to traverse the areas that $u$ belongs to as if they consisted 
of a single link. Although $u$ may be a border vertex for several 
areas, it creates crossing pathlets only for those areas that $u$'s 
neighbors are actually interested in traversing. To find out which are these 
areas, we 
must consider how $u$ appears to its neighbors: we assume that each neighbor 
$n$ of $u$ that is not in $A_{S(u)}$ considers $u$ as a member of the least 
nested area 
that includes $u$ but not $n$, that is, area $A_{S(u)\rightarrowtail S(n)}$. For 
this reason, $u$ creates a set of crossing pathlets for each area $\bar
A = A_{S(u)\rightarrowtail S(n)}$: these pathlets start at $u$ and end at any 
other border vertex $v$ for $\bar A$, $v\neq u$. Similarly, $u$ creates final 
pathlets that start at $u$ and end at any other destination vertex $v\in 
D\cap\bar A$.
In the example in Fig.~\ref{fig:example}, $v_6$ considers $v_2$ as a member of 
area
$A_{(0\ 1\ 3)\rightarrowtail (0)=(0\ 1)}$, whereas $v_4$ considers $v_2$ as a 
member of area $A_{(0\ 1\ 3)\rightarrowtail (0\ 1)=(0\ 1\ 3)}$. For this reason, 
$v_2$ will create crossing and final pathlets for $A_{(0\ 1)}$ to be offered to 
$v_6$ and 
crossing and final pathlets for $A_{(0\ 1\ 3)}$ to be offered to $v_4$.
%
%
% $u$ summarizes routing information about the interior of area $\bar
%A = A_{S(u)\rightarrowtail S(n)}$ by creating a set of crossing pathlets for
%that area: 
% 
% To support the operation of the data plane, $u$ keeps the following
% information
% about the pathlets used to compose each crossing or final pathlet $\pi$: the
% \emph{next-hop} $\nexthop(\pi)$, namely the vertex that immediately comes after
% $u$ along $\pi$, and the sequence $\fids(\pi)$ of $\fid$s of the pathlets that
% make up $\pi$.
% 
% Let $\lambda(\Pi,\sigma)$ be a function that
% selects pathlets $\pi\in \Pi$, $\pi=\left<\fid,u,v,\sigma'\right>$ where
% $\sigma\sqsubseteq \sigma'$.
More formally, for each neighbor $n$, a border vertex $u\in A_\sigma$ creates
crossing pathlets by populating a set $\crossing_u(\Pi,\sigma)$, with
$\sigma=S(u)\rightarrowtail S(n)$. Each set $\crossing_u(\Pi,\sigma)$
contains a pathlet 
$\pi=\left<\fid,u,w,\sigma,\delta\right>$ for each border vertex $w\neq u$ for 
$A_\sigma$ and for each sequence 
$(\pi_1\ \pi_2\ \dots\ \pi_n)$ in set $\chains(\Pi,u,w,\sigma)$. $\fid$ is 
chosen in such 
a way to be unique at $u$ and $\delta$ is set to the empty set $\emptyset$.
Assuming that $\pi_i=\left<\fid_i,u_i,v_i,\sigma_i,\delta_i\right>$, the 
forwarding state of 
$u$ is updated by setting $\fids_u(\fid)=(\fid_2\ \fid_3\ \dots \ \fid_n)$ and 
$\nexthop_u(\fid)=\nexthop_u(\fid_1)=v_1$. Note that, in general, pathlet 
$\pi_1$ may not be an atomic pathlet: in this case, $u$ has to 
recursively expand $\pi_1$ into the component atomic pathlets in order to get 
the 
correct 
sequence of $\fid$s to be put in $\fids_u(\fid)$ and the correct next-hop to be 
assigned as $\nexthop_u(\fid)$. However, because of the way in which set 
$\chains(\Pi,u,w,\sigma)$ will be used in the following, and in particular 
because of the composition of set $\Pi$ on which it will be constructed, we 
assume without loss of generality that $\pi_1$ is always an atomic pathlet.
As an example taken from Fig.~\ref{fig:example}, let
$\Pi=\{\left<2,v_2,v_4,(0\ 1\ \bot),\emptyset\right>,$
$\left<3,v_4,v_5,(0\ 1\ \bot),\emptyset\right>,$
$\left<1,v_1,v_3,(0\ 1\ 3\ \bot),\emptyset\right>,$
$\left<2,v_3,v_5,(0\ 1\ \bot),\emptyset\right>\}$. $v_2$ may have in its set 
$\crossing_{v_2}(\Pi, (0\ 1))$ a pathlet
$\left<1,v_2,v_5,(0\ 1),\emptyset\right>$ corresponding to the sequence of 
atomic pathlets
$(\left<2,v_2,v_4,(0\ 1\ \bot),\emptyset\right>\ \left<3,v_4,v_5,(0\ 1\ 
\bot),\emptyset\right>)$ taken from set $\chains(\Pi,v_2,v_5,(0\ 1))$. $v_2$ 
will therefore set $\fids_{v_2}(1)=(3)$ and $\nexthop_{v_2}(1)=v_4$.

%is
%constructed in the following way: for
%each border vertex $w\neq u$ of area $A_\sigma$ and for each sequence $(\pi_1\
%\pi_2\
%\dots\ \pi_n)$ in $\chains(\Pi,u,w,\sigma)$, with

%$\pi_i=\left<\fid_i,u_i,v_i,\sigma_i,\delta_i\right>$, insert into
%$\crossing_u(\Pi,\sigma)$ an element
%$\pi=\left<\fid,u,w,\sigma,\delta\right>$ with
%$\fid$ chosen in such a way to be unique at $u$ and $\delta=\emptyset$, set

% $\fids(\pi)=(\fid_1\ \fid_2\ \dots \ \fid_n)$, and let $\nexthop(\pi)$ be
% $v_1$ if $\pi_1$ is an atomic pathlet, or $\nexthop(\pi_1)$ otherwise.
%$\fids_u(\fid)=(\fid_2\ \dots \ \fid_n)$, and 
%$\nexthop_u(\fid)$=$\nexthop_u(\fid_1)$=$v_1$.
% \footnote{mc. forse bisognerebbe dire 
% che $\pi_1$ \`e atomico.}
Final pathlets are created in a much similar way as crossing pathlets, except 
that 
they are composed towards vertices in $A_\sigma\cap D$ and $\delta$ is set to 
the set $\delta_n$ of network destinations of the last 
component pathlet in the sequence. Final pathlets are put in a set 
$\final_u(\Pi,\sigma)$.
%We will see in Section~\ref{sec:dynamics} that set $\Pi$, used in the
%construction of crossing and final pathlets, contains atomic pathlets and 
%pathlets that $u$ has learned from other vertices; moreover, a vertex builds
%two sets $\crossing_u(\Pi,\sigma)$ and $\final_u(\Pi,\sigma)$ for each area
%$A_{S(u)\rightarrowtail S(n)}$ that $u$ appears to be member of when seen by a
%neighbor $n$.

% For each
% border vertex $v$, $u$ creates a crossing pathlet $\pi$ for every element in
% $\chains(\Pi,u,v,S(u)\rightarrowtail S(n))$; the scope stack of $\pi$ is set to
% $S(u)\rightarrowtail S(n)$ to allow propagating $\pi$ to $n$, while the $\fid$
% of $\pi$ is chosen in such a way to be unique at $u$. 

% In a similar way, a border vertex $u$ creates final pathlets taken from a set
% $\final(\Pi,u,\sigma)$ constructed exactly in the same way as
% $\crossing(\Pi,u,\sigma)$ but for the fact that vertex $w$ belongs to the set
% $D$.
% \footnote{Attenzione: l'intersezione tra $\final$ e $\crossing$ potrebbe non
% essere vuota (quindi si avrebbero stessi pathlet con FID diversi). \`E un
% problema?}
Because of the way in which pathlets are created and of the fact that there are 
no
crossing or final pathlets for area $A_{(l_0)}$ (Property~\ref{pro:no-border}), 
we can easily conclude that there are always at least two labels in the scope 
stack of any pathlet. This is stated by the following property:
\begin{property}\label{pro:nonempty-scope-stack}
   For any pathlet $\left< \fid,u,v,\sigma,\delta\right>$ there exists
$\bar\sigma\neq ()$ such that $\sigma=(l_0)\circ\bar\sigma$.
\end{property}

% \begin{property}
%    Any pathlet $\left< \fid,u,v,\sigma\right>$ has $\sigma\neq
% (l_0)$.
% \end{property}

% =\{\left<
% \fid,u,w,\sigma\right> |w\ \textrm{is a border vertex for area}\
% A_{S(u)\rightarrowtail S(v)}\ \wedge\ \sigma=S(u)\Join S(v)\ \wedge\ |\{\left<
% \overline{\fid},u,w,\sigma\right>\}|=|\chains(u,w)|\}$. In a similar way, a
% border vertex $u$ creates final pathlets taken from a set
%  $\final(u,v)=\{\left<
% \fid,u,w,\sigma\right> |w\in D\ \wedge \ \sigma=S(u)\Join S(v)\ \wedge\
% |\{\left< \overline{\fid},u,w,\sigma\right>\}|=|\chains(u,w)|\}$.

\vspace{1mm}
{\bf Discovery of border vertices} --
In order to be able to compose crossing pathlets for an area, a border
vertex $u$ must be able to discover which are the other border vertices for the
same area. The only information that $u$ can exploit to this
purpose are the pathlets it has received. Given that a border vertex connects
the inner part of an area with vertices outside that area, a simple
technique to detect whether a vertex $v$ is a border vertex
consists therefore in comparing the scope stacks of suitable pairs of
pathlets that have $v$ as a common vertex.

% vertex $v$ appears as the start vertex of a pathlet and the end vertex of the other.
The technique is based on the following lemma.
% Each vertex $u$ that is a border vertex for a certain area $A_\sigma$ is
% required to construct 
% crossing pathlets for that area. To do this, $u$ must determine from its
% knowledge
% of the network (i.e., its known pathlets), which are the border vertices for
% $A_\sigma$.
% The following lemma provides a simple technique to determine whether a
% vertex $v$ is a border vertex for $A_\sigma$.
\begin{lemma}
If a vertex $u\in A_\sigma$ receives two pathlets
$\pi_1=\left<\fid_1,v_1,w_1,\sigma_1\circ (l),\delta_1\right>$ and
$\pi_2=\left<\fid_2,v_2,w_2,\sigma_2\circ (\bot),\emptyset\right>$, with
$l\in L$, $\sigma_1\neq ()$, $\sigma_2\neq ()$, the start and end vertices of 
$\pi_1$ and $\pi_2$ are such that $v_1\neq v_2$ or $w_1\neq w_2$, the scope 
stacks of $\pi_1$ and $\pi_2$ are
such that $\sigma_2\sqsubset \sigma=\sigma_1$, and there exists a vertex $v$ 
such that both $\pi_1$ and $\pi_2$ start or end at $v$, then $v$ is a border 
vertex 
for
$A_\sigma$.
\end{lemma}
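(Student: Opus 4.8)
The plan is to unfold the definition of \emph{border vertex for $A_\sigma$} and verify its two requirements for the common vertex $v$: that $v\in A_\sigma$, i.e. $\sigma\sqsubseteq S(v)$, and that $v$ is incident on an edge leaving $A_\sigma$. First I would fix notation: let $v$ be the common endpoint guaranteed by the hypothesis, and let $v'$ be the \emph{other} endpoint of the atomic pathlet $\pi_2$ (which is distinct from $v$ since a pathlet has distinct start and end vertices). Because $\pi_2$ is atomic, the creation rule for atomic pathlets applies: it gives at once that $(v,v')\in E$ and that the scope prefix of $\pi_2$ is the most nested common area, namely $\sigma_2=S(v)\Join S(v')$, which by definition of $\Join$ is exactly the longest common prefix of $S(v)$ and $S(v')$.

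The key intermediate fact I would extract from $\pi_1$ is that $\sigma\sqsubseteq S(v)$. This is where the pathlet types must be reconciled. If $l=\bot$, then $\pi_1$ is atomic and $\sigma=S(v_1)\Join S(w_1)$ is a prefix of both endpoints' stacks. If $l\neq\bot$, then $\pi_1$ is crossing or final, so by the composition rules both of its endpoints belong to $A_{\sigma\circ(l)}$, whence $\sigma\circ(l)\sqsubseteq S(v)$ and \emph{a fortiori} $\sigma\sqsubseteq S(v)$. In either case, since $v$ is an endpoint of $\pi_1$, I obtain $\sigma\sqsubseteq S(v)$, which already settles the first requirement, $v\in A_\sigma$.

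For the second requirement I would argue by contradiction that $\sigma\not\sqsubseteq S(v')$. Suppose instead $\sigma\sqsubseteq S(v')$. Together with $\sigma\sqsubseteq S(v)$, this makes $\sigma$ a common prefix of $S(v)$ and $S(v')$; but the hypothesis $\sigma_2\sqsubset\sigma$ says $\sigma$ strictly extends $\sigma_2$, so $\sigma$ would be a common prefix strictly longer than $\sigma_2=S(v)\Join S(v')$, contradicting the maximality built into $\Join$. Hence $v'\notin A_\sigma$. Combining $v\in A_\sigma$, $(v,v')\in E$, and $v'\notin A_\sigma$ gives precisely that $v$ is a border vertex for $A_\sigma$, as required.

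The main obstacle I anticipate is the uniform treatment of $\pi_1$ across the three pathlet types when establishing $\sigma\sqsubseteq S(v)$: the scope stack carries a trailing $\bot$ for atomic pathlets but not for crossing or final ones, so the ``endpoint membership'' invariant has to be phrased carefully (membership in the area obtained after discarding any trailing $\bot$). By contrast, the technical hypotheses play only a supporting role: the nonemptiness conditions $\sigma_1\neq()$ and $\sigma_2\neq()$ keep the prefix and $\Join$ reasoning nondegenerate, while the distinctness condition ($v_1\neq v_2$ or $w_1\neq w_2$) excludes the degenerate case in which $\pi_1$ and $\pi_2$ share both endpoints, which the scope hypothesis $\sigma_2\sqsubset\sigma$ would in any event make impossible.
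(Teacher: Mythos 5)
Your proof is correct and follows essentially the same route as the paper's: the identical case split on $l=\bot$ versus $l\neq\bot$ (atomic versus crossing/final) to establish $v\in A_\sigma$, followed by using the atomic pathlet $\pi_2$ to exhibit a neighbor of $v$ outside $A_\sigma$. The only difference is that you spell out, via the maximality of $\Join$, the final step that the paper merely asserts ("$v$ has some neighbor that is not in $A_\sigma$"), which is a welcome elaboration rather than a different argument.
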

\begin{proof}
The statement follows from the way in which scope stacks are assigned to
pathlets. The fact that $v\in\{v_1,w_1\}$ implies that $v\in A_{\sigma_1}$:
in fact, if $l=\bot$, then $\pi_1$ is an atomic pathlet whose scope stack is
therefore assigned in such a way that $\sigma_1=S(v_1)\Join S(w_1)$; since we
know that $S(v_1)\Join S(w_1)\sqsubseteq S(v)$, by
Property~\ref{pro:belonging-areas} we can conclude that $v\in A_{\sigma_1}$.
Otherwise, if $l\neq\bot$, then $\pi_1$ is either a crossing pathlet for some
area $A_{\sigma_1\circ (l)}$ or a final pathlet; in both cases, being an 
endpoint of pathlet 
$\pi_1$, $v$ must belong to $A_{\sigma_1\circ (l)}$ and, using 
Property~\ref{pro:belonging-areas} again, this also implies that $v\in 
A_{\sigma_1}$.
Since $\sigma_1=\sigma$, we can conclude that $v\in
A_\sigma$. On the other hand, from the scope stack 
$\sigma_2\sqsubset\sigma$ of the atomic pathlet $\pi_2$ we know that $v$
has some neighbor that is not in $A_{\sigma}$: this makes $v$ a border vertex
for $A_\sigma$.
% 
% : let $\sigma'$ be the label stack
% of this neighbor. Now, because of the way operators $\Join$ and
% $\rightarrowtail$ are defined, we have that $\sigma\rightarrowtail \sigma' =
% \sigma\rightarrowtail(\sigma\Join\sigma')$. Moreover, from the scope stack of
% $\pi_2$ we know that $\sigma\Join\sigma'=\sigma_2$. Therefore,
% regardless of the stack $\sigma'$, which cannot be deduced from $\pi_1$ and
% $\pi_2$, we can conclude that $v$ is a border vertex for area
% $A_{\sigma\rightarrowtail\sigma_2}$.
\end{proof}
According to this lemma, a vertex $u\in A_\sigma$ can use the following simple
algorithm, formalized as function \textsc{DiscoverBorderVertices}($u$, $\sigma$, 
$\Pi$) in Algorithm~\ref{alg:detect-border-vertices}, to discover other border 
vertices for $A_\sigma$ based on a set of known pathlets $\Pi$:
consider any possible pair $(\pi_1,\pi_2)$ of pathlets in $\Pi$ whose start and 
end vertices have exactly one vertex $v$ in common; if this pair satisfies the 
conditions of the
lemma, $v$ is a border vertex for $A_\sigma$.

\begin{algorithm}[t]
   \caption{Algorithm that a vertex $u\in A_\sigma$ can use to discover remote 
   border vertices for $A_\sigma$ based on the known pathlets in $\Pi$.}
   \label{alg:detect-border-vertices}
   \begin{algorithmic}[1]
      \Function{DiscoverBorderVertices}{$u$, $\sigma$, $\Pi$}
         \State $B\leftarrow\emptyset$
         \ForAll{pair $(\pi_1,\pi_2)$ of pathlets with 
         $\pi_1=\left<\fid_1,v_1,w_1,\sigma_1,\delta_1\right>$ and 
         $\pi_2=\left<\fid_2,v_2,w_2,\sigma_2,\delta_2\right>$, 
         such that $v_1\neq v_2$ or $w_1\neq w_2$, and $\exists v$ 
         such that both $\pi_1$ and $\pi_2$ start or end at $v$}
            \If{$\exists\bar\sigma_1\neq ()$ such that 
            $\sigma_1=\bar\sigma_1\circ (l)$, $l\in L$, \textbf{and} 
            $\exists\bar\sigma_2\neq ()$ such that 
            $\sigma_2=\bar\sigma_2\circ (\bot)$ \textbf{and}
            $\sigma_1=\sigma$ \textbf{and} $\sigma_2\sqsubset \sigma$}
               \State $B\leftarrow B\cup \{v\}$
            \EndIf
         \EndFor
         \State \textbf{return} $B$
      \EndFunction
   \end{algorithmic}
\end{algorithm}

\textbf{Routing policies} --
So far we have described how to compose crossing and final pathlets by 
considering all the possible concatenations of available pathlets. Although
this produces the highest possible number of alternative paths, resulting in the
best level of robustness and in the availability of different levels of Quality 
of Service, depending on the topology and on the assignment of 
areas it can be demanding in terms of messages exchanged on the network and of 
pathlets kept at each router. However,
our control plane can also easily accommodate routing policies that influence 
the way in which pathlets are composed and disseminated.
We stress that these policies can be implemented independently for each area: 
that is, the configuration of routing policies on the internal vertices of an 
area may have no impact on the routing information propagated outside that area. 
We believe this is a significant relief for network administrators, who 
do not necessarily need any longer to keep a complete 
knowledge of the network setup and to perform a complex 
planning of configuration changes.

We envision two kinds of
policies: \emph{filters} and \emph{pathlet composition rules}. Filters
can be used to restrict the propagation of pathlets. For example, the 
specification of a
filter on a vertex $u$ can consist of a neighboring vertex $v$ and a
triple $\left<w_1,w_2,\sigma\right>$: when such a filter is applied, $u$ will
avoid propagating to $v$ all those pathlets whose start vertex, end vertex, and
scope stack match the triple.

Pathlet composition
rules can be used to affect the creation of crossing and final pathlets. We 
describe here a few possible pathlet composition rules. As opposed to the 
strategy of considering all the possible concatenations of pathlets,
a border vertex $v$ can create, for each end vertex $w$ of interest, only one 
crossing (or 
final) pathlet that corresponds to an optimal sequence of pathlets 
to that end 
vertex. Several optimality criteria can be pursued. For example, $v$ could 
select the shortest sequence of pathlets by running
Dijkstra's algorithm on the graph resulting by the union of the pathlets it 
knows. We 
highlight that, with
this approach, $v$ can still keep track of possible alternative paths
but does not propagate them as pathlets: in case the shortest sequence of 
pathlets to a
certain vertex $w$ is no longer available (for example because of a failure),
$v$ can transparently switch to an alternative sequence of pathlets leading to
$w$ by just updating the forwarding state and without sending any messages 
outside its area $A_{S(v)}$.
As a variant of this approach, pathlets can be weighted
according to performace indicators (delay, packet loss, jitter) of the network 
portion they traverse: in this case the optimal sequence of pathlets corresponds 
to the one offering the best performance. Alternatively, pathlets 
can be weighted according to their nature of atomic or crossing pathlet: 
assuming that atomic pathlets are assigned weight 
0 and crossing pathlets are assigned weight 1, the optimal pathlet 
tries to avoid transit through areas.
Another pathlet composition rule could accommodate the requirement of an 
administrator that wants to prevent traffic from a specific set $\bar V$ of 
vertices from traversing a specific area $A$. Since detailed routing 
information about the interior of an area is not propagated outside that area, 
it may not be possible to 
establish whether a specific pathlet traverses $A$ or not. Therefore, to 
implement this pathlet composition rule, pathlets could carry an
additional attribute that is a set of \emph{shaded vertices}: crossing pathlets
for $A$ disseminated by the border vertices of $A$ will have the set of shaded
vertices set to $\bar V$; upon receiving a pathlet, a vertex $v$ will
check whether $v$'s identifier appers in the set of shaded vertices and, if so, 
will refrain from using that pathlet for composition or for sending traffic. A
similar mechanism could be implemented to prevent traffic to specific
destinations from traversing $A$: in this case, a set of \emph{shaded
destinations} could be carried in the pathlets instead. Of course, the two
techniques can be combined by using both the set of shaded vertices and the set 
of shaded destinations: in this way, a set of vertices $\bar V$ can be prevented 
from traversing an area $A$ to send traffic to specific destinations.

\vspace{1mm}
{\bf Pathlet dissemination} --
All the created pathlets are disseminated to other vertices in $G$ based on
their scope stacks, as explained in the following. Consider any pathlet
$\pi=\left< \fid,u,v,\sigma,\delta\right>$ and let $\sigma=\bar\sigma\circ (l)$
(by Property~\ref{pro:nonempty-scope-stack}, such $\bar\sigma\neq ()$ and
$l\in L$ must exist). The dissemination of $\pi$ is regulated by the following 
\emph{propagation conditions}. A vertex $w$ can propagate $\pi$ to a
neighboring vertex $n$ either if $n = u$ or if $\pi$'s scope stack does not 
satisfy any of the following conditions:
\begin{enumerate}
   \item $S(w)\Join S(n)\sqsubset\bar\sigma$: restricts propagation of
any pathlets outside the area in which they have been created;
   \item $\sigma\sqsubseteq S(w)\Join S(n)$: prevents propagation of
crossing and final pathlets inside the area of the vertex that created
them;
   \item $\sigma = S(n)\rightarrowtail S(w)$:  prevents $w\notin A$ from
propagating crossing and final pathlets for $A$ inside $A$;
   \item $n = v$: prevents sending to $n$ a pathlet that is useless for $n$.
\end{enumerate}
Conditions 2), 3), and 4) are introduced to prevent the propagation of 
pathlets to vertices that would never use them, thus limiting the amount of 
exchanged 
information during pathlet 
dissemination. Condition 1) can be expressed from the
point of view of a single vertex, leading to the following invariant:
\begin{property}
%    All the pathlets received by a vertex $v\in A_\sigma$ have a scope stack
% $\sigma'=\bar\sigma\circ (l)$ such that $\bar\sigma\sqsubseteq \sigma$.
   All the pathlets received by a vertex $v$ have a scope stack
$\sigma'=\bar\sigma\circ (l)$ such that $\bar\sigma\sqsubseteq S(v)$.
\end{property}

For convenience, given a vertex $w$ that is assigned scope stack 
$S(w)=\sigma_w$, we define $N(w,\sigma_w,\sigma)$ as the set of neighbors of $w$ 
to which $w$ can propagate a pathlet with scope stack $\sigma$ according to the 
propagation conditions and to the routing policies. We assume that 
$N(w,\sigma_w,())=\emptyset$ for any $\sigma_w$.

So far we have mentioned that the propagation conditions regulate the 
propagation of pathlets. However, we will see in Section~\ref{sec:dynamics} that 
other kinds of messages exchanged by our control plane are also propagated 
according to the same conditions.

\vspace{1mm}
{\bf Example of pathlet creation and dissemination} --
To show a complete example of creation and dissemination of pathlets, consider 
again the example in
Fig.~\ref{fig:example} and let $v_6$ host network destination $d$. In the
following we assume that there are no filters applied, 
that the pathlet composition rule is to compose all possible sequences of 
pathlets (although we show only some of them), and that $\fid$s are randomly 
assigned integer numbers, yet
obeying the rules specified in this section.
The atomic pathlet
$\pi_{24,\bot}=\left<1,v_2,v_4,(0\ 1\ \bot),\emptyset\right>$, created by vertex 
$v_2$,
is propagated by $v_2$
to $v_3$ because
$S(v_2)\Join S(v_3) = (0\ 1\ 3)\not\sqsubset (0\ 1)$,
$(0\ 1\ \bot)\not\sqsubseteq (0\ 1\ 3)$,
$(0\ 1\ \bot)\neq S(v_3)\rightarrowtail S(v_2)=(0\ 1\ 3)$,
and
$v_3\neq v_4$;
it is also propagated to $v_1$ for the same reasons.
Instead, $\pi_{24,\bot}$ is not propagated by $v_2$ to $v_6$ because
$S(v_2)\Join S(v_6)=(0)\sqsubset (0\ 1)$ (the first propagation condition
applies), and it is not propagated by
$v_2$ to $v_4$ because the end vertex of $\pi_{24,\bot}$ is $v_4$ itself. For 
similar
reasons, $\pi_{24,\bot}$ is further propagated by $v_3$ to $v_5$, but in turn 
$v_5$ does
not propagate it to $v_7$. Therefore, the visibility
of $\pi_{24,\bot}$ is restricted to vertices inside $A_{(0\ 1)}$. In a similar 
way,
$v_5$ creates the atomic pathlets
$\pi_{53,\bot}=\left<2,v_5,v_3,(0\ 1\ \bot),\emptyset\right>$ and
$\pi_{54,\bot}=\left<3,v_5,v_4,(0\ 1\ \bot),\emptyset\right>$, while
$v_4$ creates the atomic pathlet
$\pi_{46,\bot}=\left<3,v_4,v_6,(0\ \bot),\{d\}\right>$.
The reader can
easily find how these atomic pathlets are propagated. As a border vertex of
$A_{(0\ 1\ 3)}$, $v_3$ will also propagate to $v_5$ the crossing pathlet
$\pi_{32}=\left<1,v_3,v_2,(0\ 1\ 3),\emptyset\right>$ for area
$A_{S(v_3)\rightarrowtail S(v_5)=(0\ 1\ 3)}$. Once pathlets have been
disseminated, $v_5$ has learned about a set of pathlets $\Pi$ and can create a
crossing pathlet for area $A_{S(v_5)\rightarrowtail S(v_7)=(0\ 1)}$ that can be
offered to $v_7$. For example, $v_5$ can pick sequence
$(\pi_{53,\bot}\ \pi_{32}\ \pi_{24,\bot})$ from 
$\chains(\Pi,v_5,v_4,S(v_5)\rightarrowtail
S(v_7))$ and create in its set $\crossing_{v_5}(\Pi,S(v_5)\rightarrowtail
S(v_7))$ the crossing pathlet
$\pi_{54}=\left<1,v_5,v_4,(0\ 1),\emptyset\right>$. Propagation of this pathlet 
by
$v_5$ to $v_4$ is forbidden by the second propagation condition, because
$(0\ 1)\sqsubseteq S(v_5)\Join S(v_4)=(0\ 1)$, and also by the fourth
propagation condition, because $v_4$ is also the end vertex of $\pi_{54}$;
$\pi_{54}$ will however be propagated by
$v_5$ to $v_7$ because
$S(v_5)\Join S(v_7)=(0)\not\sqsubset (0)$,
$(0\ 1)\not\sqsubseteq (0)$,
$(0\ 1)\neq S(v_7)\rightarrowtail S(v_5)=(0\ 2)$,
and
$v_7\neq v_4$.
To provide an alternative path, $v_5$ can create another crossing pathlet
$\pi'_{54}=\left<9,v_5,v_4,(0\ 1),\emptyset\right>$, corresponding to the 
sequence
consisting of the single atomic pathlet $(\pi_{54,\bot})$, and propagated in the 
same 
way as
$\pi_{54}$.
Last, $v_7$ will also create an atomic pathlet
$\pi_{75,\bot}=\left<8,v_7,v_5,(0\ \bot),\emptyset\right>$.
At this point, $v_7$ has two ways to construct a path from itself to
vertex $v_6$, which contains destination $d$: it can concatenate
pathlets $\pi_{75,\bot}$, $\pi_{54}$, and $\pi_{46,\bot}$ or pathlets 
$\pi_{75,\bot}$, $\pi'_{54}$, and
$\pi_{46,\bot}$. The availability of multiple choices supports quick recovery in 
case of
fault and allows $v_7$ to select the pathlet providing the most appropriate
Quality of Service.

% Mostrare un esempio completo di creazione e disseminazione di pathlet,
% utilizzando il/i crossing pathlet da v_5 a v_2 (sottolineando che v_2 è di
% bordo per più di un'area e facendo vedere il multipath dovuto ai diversi modi
% di attraversare l'area A_(0 1 3)).
\section{A Control Plane for Pathlet Routing: Messages and
Algorithms}\label{sec:dynamics}

We now describe how the dissemination mechanisms illustrated in
Section~\ref{sec:dissemination-basics} are realized in terms of messages
exchanged among vertices and algorithms executed to update routing
information. In this section we also detail how to handle network dynamics, 
including how to deal with topological changes and administrative 
reconfigurations. This actually completes the specification of a control plane 
for pathlet routing.

% Our control plane exploits a distributed algorithm to disseminate routing
% information. The algorithm envisions exchange of messages among vertices.

\subsection{Message Types}
First of all, we detail all the messages that are used by vertices to
disseminate routing information. Each message carries one or more of the
following fields: \texttt{s}: a stack of labels;
\texttt{d}: a set of network destinations; \texttt{p}: a pathlet; \texttt{f}: a
$\fid$; \texttt{a}: a boolean flag (which tells whether a vertex has ``just been 
activated'').
We assume that every message includes an \emph{origin} field \texttt{o} that
specifies the vertex that first originated the message.
Messages can be of the following types, with their fields specified in square
brackets:

\begin{itemize}
   \item \textbf{Hello} [\texttt{s}, \texttt{d}, \texttt{a}] -- Used for neighbor
greetings. It carries the label stack \texttt{s} of the sender vertex, the set 
of network destinations \texttt{d} originated by the sender vertex, and a flag 
\texttt{a} which is set to true when this is the first message sent by a vertex 
since its activation (power-on or reboot). Unlike other message types, 
\textbf{Hello} is only sent to neighbors and is never forwarded. Moreover, in 
order to be able to detect topological variations, it is sent periodically by 
each vertex.
   \item \textbf{Pathlet} [\texttt{p}] -- Used to disseminate a
pathlet \texttt{p}.

\remove{
   \item \textbf{Update} [\texttt{t\_ops}, \texttt{t\_nps}] --
Used to refresh information about already disseminated pathlets, by replacing
every occurrence of the sequence of pathlets in \texttt{t\_ops} with the
sequence \texttt{t\_nps}.
   \item \textbf{Acknowledgment} [\texttt{t\_ops}, \texttt{t\_nps}, \texttt{k},
\texttt{v}] -- Used to acknowledge reception of an
\textbf{Update} message originated by vertex \texttt{v} with the corresponding
\texttt{t\_ops} and \texttt{t\_nps}.
Flag \texttt{k} is set to true if the acknowledged message contained old and
previously known information; to false otherwise.
%    \item \textbf{Request\_Crossing\_Pathlet} [\texttt{fs}] -- Used to explicitly
% ask a vertex whether it has composed a crossing pathlet that results from the
% concatenation of pathlets identified by the FIDs in \texttt{fs};
   \item \textbf{Pathlet\_Mapping} [\texttt{p}, \texttt{ps}] --
Used in some special cases by a vertex $v$ to disseminate a crossing
or final pathlet \texttt{p} as well as the sequence \texttt{ps} of
its component pathlets.
}
%%%%%%%%%%%%%%%%%%%%%%%%%%%%%%%
% DA RICONSIDERARE COME POSSIBILE OTTIMIZZAZIONE: atteggiamento
% "conservativo" nella composizione di nuovi pathlet di attraversamento
%    \item \textbf{Request\_Component\_Pathlets} [timestamp: \texttt{t},
% pathlet: $\pi$], used to explicitly ask a vertex for the sequence of pathlets
% that compose crossing pathlet $\pi$;

   \item \textbf{Withdrawlet} [\texttt{f}, \texttt{s}] -- Used to withdraw the
availability of a pathlet with FID \texttt{f}, scope stack \texttt{s}, and 
start vertex \texttt{o}. We assume that this message can only be 
originated by the vertex that had previously created and disseminated the 
pathlet.
%
%Section~\ref{sec:dissemination-basics}, this message can be originated only by 
%the vertex that had previously created and announced the pathlet, unless it
%has to be recreatead after a failure of a vertex.
%, \texttt{new\_fid\_sequence}: $\left< \fid_1,\dots,\fid_n \right>$
   \item \textbf{Withdraw} [\texttt{s}] -- Used
to withdraw the availability of all pathlets having \texttt{s} as scope stack 
and \texttt{o} as start vertex.
\end{itemize}

In order to keep disseminated information consistent in the presence of faults
and reconfigurations, we assume for convenience that all vertices in the network 
have a synchronized clock, and we call $T$ its value at any time.
 Every message type but \textbf{Hello}
%  , \textbf{Update}, and \textbf{Acknowledgment} 
 has a \emph{timestamp} field
\texttt{t} that, unless otherwise stated, is set by the sender to the current 
clock $T$ when sending a newly created message; the timestamp is left unchanged 
when a message is just forwarded from a vertex to another. The purpose of the 
timestamp is to let vertices discard outdated messages, which is especially 
important in the presence of faults. In practice, a local counter at each vertex 
can be used in place of the clock value, and its value can be handled in a way 
similar to OSPF sequence numbers (see in particular Section 12.1.6 
of~\cite{rfc1247}).
% Each vertex keeps in memory a timestamp value for each area it belongs to. 
% \vspace{1mm}
% {\bf Split horizon. }

With the exception of \textbf{Hello}, messages also have a
\emph{source} field \texttt{src} containing the identifier of the vertex that
has sent (or forwarded) the message. This field is also used to
avoid sending the message back to the vertex from which it has been received (a
technique similar to the \emph{split horizon} adopted in commercial routers).
Since the \textbf{Hello} message is never forwarded by any vertices, it contains 
only the origin field.

Given their particular nature, in the following we omit specifying for each
message how the origin, timestamp, and source fields are set, unless we need 
exceptions to their usual assignment.

\subsection{Routing information stored at each vertex}
In our control plane, no vertex has a complete view of all the available routing
paths. However, as a partial representation of the current network status, each
vertex $u\in V$ keeps the following information locally:
\begin{itemize}
   \item For each neighbor $v\in V$ such that $(u,v)\in E$, a label stack
$S_u(v)$ that $u$ currently considers associated with $v$ and a set $D_u(v)$ of
network destinations originated by $v$.
   \item A set $\Pi_u$ of \emph{known pathlets}, consisting of the atomic 
   pathlets created by $u$
and of pathlets that $u$ has received from neighboring vertices. $u$ can
concatenate these pathlets to reach network destinations and, in case it is a 
border vertex, to compose and disseminate crossing and final pathlets.
Each pathlet $\pi\in\Pi_u$ is associated an expiry timer $T_p(\pi)$, that 
specifies how long the pathlet is to be kept in $\Pi_u$ before being removed. 
When a new 
pathlet is created by $u$, its expiry timer is set to the special value 
$T_p(\pi)=\oslash$, meaning that the pathlet never expires.
%Moreover, although $u$ may not be a
%border vertex, it may become in the future after some topological variation:
%pathlets in $\Pi_u$ will then be used by $u$ to create crossing and final
%pathlets.
   \item For every area $A_\sigma$ for which $u$ is a border vertex, a set
$B_u(\sigma)$ of vertices $v\in A_\sigma$, $v\neq u$, that are also border
vertices for $A_\sigma$, and sets $C_u(\sigma)$ and $F_u(\sigma)$ that contain, 
respectively, 
the crossing and final pathlets for area $A_\sigma$ composed by $u$.
%    \item A mapping $H_u((\fid,v))=(\mathit{type},t)$ that tracks the
% most recent information known by $u$ about each pathlet: the mapping 
% associates to every pathlet $\pi=\left<\fid,v,\cdot,\cdot\right>$ in $\Pi_u$ or
% in any of the sets $\crossing_u(\Pi_u,\cdot)$ and $\final_u(\Pi_u,\cdot)$
% maintained by $u$ a pair where $\mathit{type}\in\{+,-\}$ is
% the type of the information known by $u$ about $\pi$ ($+$ if it is a
% positive information, i.e., the pathlet exists because it has been created by
% $u$ or has been received by some other vertex, $-$ otherwise) and $t$ is the
% timestamp of that information.
   \item A set $H_u$, called \emph{history}, that tracks the most recent piece 
   of information known by $u$ about each pathlet (i.e., not just pathlets in 
   $\Pi_u$). This set consists of t-uples 
   $\left<\fid,v,\sigma,t,\mathit{type}\right>$, where: the $\fid$ and the start 
   vertex $v$ identify a pathlet $\pi$ with scope stack $\sigma$; $t$ is the 
   timestamp of the most recent 
   information that $u$ knows about $\pi$ (it may be the time instant of when 
   $\pi$ has 
   been composed or deleted by $u$, or the timestamp contained in the most 
   recent message received by $u$ about $\pi$); and $\mathit{type}\in\{+,-\}$ 
   determines whether the last known information about $\pi$ is positive ($\pi$ 
   has been composed by $u$ or a \textbf{Pathlet} message has been received 
   about $\pi$) or negative ($\pi$ has been deleted by $u$ or a 
   \textbf{Withdrawlet} or \textbf{Withdraw} message has been received about 
   $\pi$).
   
%during the last $K$ time units, for a certain $K>0$. 
% Vertex $u$ stores into $H_u$ a triple $\left<\fid,v,t\right>$, where $t$ is the 
% timestamp of the last $\textbf{Pathlet}$ or $\textbf{Withdrawlet}$ message $M$ 
% for a pathlet with FID $f$ and origin vertex $v$, if $M$ has been received 
% during the last $X$ time units, where $X$ is a parameter. 
% 
% Alternativa:
%Set $H_u$ contains a 4-uple $\left<\fid,v,\sigma,t\right>$ if and only if at 
%time $t>T-K$, $u$ added or removed from $\Pi_u$ a pathlet 
%$\pi=\left<\fid,v,\cdot,\sigma,
%\cdot\right>$ and no pathlet $\left<\fid,v,\cdot,\cdot,\cdot\right>$ has been
%added or removed at any time after $t$. 
%  \item A mapping $T_u((\fid,v))=t$ that stores the time instant $t$ in which 
%there does no longer exist a concatenation of pathlets, with scope stack 
%different
%from $\sigma$, from $u$ to the start vertex $v$ of a pathlet $\pi=\left<\fid,v,
%\cdot,\sigma,\cdot\right>$. If $T>t+K$, for the same $K>0$ used for $H_u$,
%$\pi$ is removed from $\Pi_u$. Mapping $T_u$ is used to remove those pathlets 
%for which $u$ may  never receive a \textbf{Withdrawlet} message 
%(e.g., because of a failure that partitioned an area).
\end{itemize}

The reason why we have introduced an expiry timer $T_p(\pi)$ for each pathlet 
$\pi$ in $\Pi_u$ 
is that we want to prevent indefinite growth of $\Pi_u$. In fact, there may be 
pathlets that can no longer be used by $u$ for concatenations and for which $u$ 
may never receive a 
\textbf{Withdrawlet} or \textbf{Withdraw} message: the expiry timer is used to 
automatically purge such pathlets from 
$\Pi_u$. This situation can occur when a vertex or a link is removed from $G$.
For example, 
consider the network in Fig.~\ref{fig:example} and suppose that $v_2$ composes 
and announces a crossing pathlet
$\pi_{25}=\left<7,v_2,v_5,(0\ 1),\emptyset\right>$ for area $A_{(0\ 1)}$. If 
link $(v_2,v_6)$ fails, $v_6$ has no way to receive a \textbf{Withdrawlet} for 
$\pi_{25}$, because only $v_2$ can originate this message and the propagation 
conditions prevent it from being forwarded inside area $A_{(0\ 1)}$. However, 
$v_6$ can no longer use $\pi_{25}$ for any concatenations and therefore has no 
reason to keep this pathlet in its set $\Pi_{v_6}$: $\pi_{25}$ can indeed be 
automatically removed after timer $T_p(\pi_{25})$ has expired.
The configuration of our control plane therefore requires the specification of 
a timeout value called \emph{pathlet timeout}: this is the value to which the 
expiry timer $T_p(\pi)$ of a pathlet $\pi$ is initialized when $T_p(\pi)$ is 
activated (we will 
see in the following when this activation occurs).

Also the history $H_u$ could grow indefinitely, because an entry is stored and 
kept in $H_u$ even for each deleted or withdrawn pathlet. Therefore, our control 
plane also requires the specification of a \emph{history timeout}: this value 
determines how long negative entries (i.e., with $\mathit{type}=-$) in the 
history $H_u$ of any vertex $u$ are kept before being automatically purged from 
$H_u$. Positive entries (with $\mathit{type}=+$), on the other hand, never 
expire.

In principle, we could completely avoid timeouts and remove pathlets and history 
entries immediately. However, this would significantly increase the number of 
exchanged messages and cause the deletion of pathlets that should instead be 
preserved, even in normal operational conditions. Consider again 
Fig.~\ref{fig:example} and assume there is no pathlet expiry timer. If $v_6$ 
received only pathlet
$\pi_{57,\bot}=\left<6,v_5,v_7,(0\ \bot),\emptyset\right>$ before receiving the 
crossing pathlet $\pi_{25}$, $v_6$ would immediately withdraw $\pi_{57,\bot}$ 
because it cannot use it for concatenations and it may never receive a 
\textbf{Withdrawlet} or \textbf{Withdraw} for that pathlet.
A similar argument applies to the history timer. Look back at 
Fig.~\ref{fig:example} and assume there is no history expiry timer. Note that 
with this assumption negative entries are just not kept in the history, 
actually defeating its purpose. Suppose that, after disseminating an atomic 
pathlet
$\pi_{62,\bot}=\left<1,v_6,v_2,(0\ \bot),\emptyset\right>$ to the whole network,
vertex $v_6$ withdraws this pathlet using a \textbf{Withdrawlet} message (for 
example 
because link $(v_2,v_6)$ has failed). Also suppose that $v_5$ rebooted before 
being able to forward the \textbf{Withdrawlet} to $v_7$: pathlet 
$\pi_{62,\bot}$ would thus be held in set $\Pi_{v_7}$. When $v_5$ becomes again 
active, it 
receives a \textbf{Pathlet} message containing $\pi_{62,\bot}$ from $v_7$, and 
has no way to determine that such information is out-of-date. The only way is to 
propagate pathlet $\pi_{62,\bot}$ to all applicable vertices until it reaches 
$v_6$, which can again withdraw it from the network.

For the sake of clarity, we specify here the strategy with which the history is 
updated when creating or deleting pathlets, and avoid mentioning it again, 
unless there are exceptions to this 
strategy.
Every time a pathlet $\pi=\left<\fid,u,v,\sigma,\delta\right>$ is created by a 
vertex $u$, the history $H_u$ of $u$ is automatically updated with a positive 
entry $\left<\fid,u,\sigma,T,+\right>$, where $T$ denotes the time instant of 
the 
creation. If an entry for the same $\fid$ and start vertex $u$ already existed 
in $H_u$, that entry is replaced by this updated 
version. When pathlet $\pi$ is no longer available (for example because $u$ has 
detected that some of the component pathlets are no longer usable), $H_u$ is 
updated with a negative entry $\left<\fid,u,\sigma,T,-\right>$, where $T$ 
denotes the time instant in which $\pi$ has become unavailable. This entry 
replaces any previously existing entry referring to the same pathlet $\pi$. We 
recall that this negative entry is automatically removed from $H_u$ after the 
history timeout expires.

\subsection{Algorithms to Support Handling of Network Dynamics}
Before actually describing how network dynamics are 
handled, we introduce a few algorithms that vertices execute when they detect a 
change of the locally maintained routing information. In particular, we describe 
the operations performed by a vertex $u$ when its sets 
$\Pi_u$, $C_u$, or $F_u$ are updated. Most of the events that may trigger such 
updates, including topological changes and administrative reconfigurations, can 
be handled based on the algorithms described in this subsection. We discuss in 
detail the application of these algorithms to handle specific events in the 
following subsections.

Suppose the set $\Pi_u$ of currently known pathlets at a vertex $u$ is changed 
and is to be replaced by a new set of known pathlets $\Pi_{\mathit{new}}$. $u$ 
then undertakes the following actions, formalized as procedure 
\textsc{UpdateKnownPathlets}($u$, $\Pi_{\mathit{new}}$) in 
Algorithm~\ref{alg:updatepi}: $u$ sends messages 
to its neighbors to disseminate pathlets that are newly appeared in 
$\Pi_{\mathit{new}}$ (with respect to $\Pi_u$) and withdraw those that are no 
longer in this set. Note 
that, while withdrawn pathlets are immediately removed from $\Pi_u$ at the end 
of the algorithm, the corresponding forwarding state is only cleared by $u$ 
after a timeout $T_f$, in order to allow correct forwarding of data packets 
while the withdraw is propagated on the network (note that the statement at 
line~\ref{algline:clear-forwarding-state-knownpathlets} of 
Algorithm~\ref{alg:updatepi} is non-blocking). Of course some packets could be 
lost if the withdrawn pathlets are physically unavailable. Pathlets 
that existed in $\Pi_u$ but have their scope stack or set of destinations 
updated in 
$\Pi_{\mathit{new}}$ are handled by $u$ in a special way: for each of these 
pathlets 
$u$ disseminates the updated instance of the pathlet to selected neighbors 
(according to the propagation conditions and to the rouing policies set at $u$), 
and withdraws the old instance of the pathlet from other neighbors to which the 
new instance of the pathlet cannot be disseminated. After having updated $\Pi_u$ 
with the contents of $\Pi_{\mathit{new}}$, $u$ checks whether the start vertex 
of each pathlet in $\Pi_u$ is still reachable: it does so by concatenating 
arbitrary pathlets in $\Pi_u$, regardless of their scope stacks. If the start 
vertex of some pathlet $\pi$ is found to be unreachable, or if including $\pi$ 
in 
any sequences of pathlets would always result in a cycle, $u$ can no longer use 
pathlet $\pi$ for composition or traffic forwarding, and it schedules automated 
deletion of the pathlet from $\Pi_u$ by initizializing its expiry 
timer $T_p(\pi)$. For all the other pathlets, the expiry timer is reset, meaning 
that 
they will never expire.
Last, $u$ checks whether the component pathlets of its crossing and final 
pathlets are 
still available and whether new crossing or final pathlets can be composed, and 
updates sets $C_u$ and $F_u$ accordingly. The latter step 
requires further actions, which are detailed in the following procedure.

\begin{algorithm*}
   \caption{Algorithm to update the set $\Pi_u$ of known pathlets at a vertex 
   $u$. The first procedure addresses the case when the label stack of $u$ is 
   contextually changed from $S_{\mathit{old}}$ to $S_{\mathit{new}}$, whereas 
   the second only realizes the update of $\Pi_u$.}
   \label{alg:updatepi}
   \small
   \begin{algorithmic}[1]
      \Procedure{UpdateKnownPathletsAndStack}{$u$, $S_{\mathit{old}}$, 
      $S_{\mathit{new}}$, $\Pi_{\mathit{new}}$}
         \ForAll{$\pi=\left<\fid,v,w,\sigma,\delta\right>\in\Pi_{\mathit{new}}
            \backslash\Pi_u$}
            \label{algline:updatepi-forcycle-1}
            \State \textit{We are considering a pathlet $\pi$ that is not in 
            $\Pi_u$ but is in $\Pi_{\mathit{new}}$ (new pathlet) or the updated 
            instance of a pathlet that is both in $\Pi_u$ and in 
            $\Pi_{\mathit{new}}$}
            \If{$u=v$}
               \State Update $u$'s forwarding state according to the composition 
               of 
               $\pi$
               \State $M\leftarrow$ new \textbf{Pathlet} message
               \State $M.\texttt{p}\leftarrow\pi$
               \ForAll{$n\in N(u,S_{\mathit{new}},\sigma)$}
                  \State Send $M$ to neighbor $n$
               \EndFor
            \EndIf
         \EndFor
         \ForAll{$\pi_{\mathit{old}}=\left<\fid,v,w,\sigma_{\mathit{old}},
            \delta_{\mathit{old}}\right>\in
            \Pi_u\backslash\Pi_{\mathit{new}}$}
            \label{algline:updatepi-forcycle-2}
            \If{$u=v$}
               \State $M\leftarrow$ new \textbf{Withdrawlet} message
               \State $M.\texttt{f}\leftarrow\fid$
               \State $M.\texttt{s}\leftarrow\sigma_{\mathit{old}}$
               \If{$\exists\pi_{\mathit{new}}=\left<\fid,v,w,\sigma_{\mathit{new}},
                  \delta_{\mathit{new}}\right>\in\Pi_{\mathit{new}}$}
                  \State \textit{We are considering a pathlet 
                  $\pi_{\mathit{old}}$ 
                  that is in $\Pi_u$ and has an updated instance 
                  $\pi_{\mathit{new}}$ in $\Pi_{\mathit{new}}$}
                  \ForAll{$n\in 
                  N(u,S_{\mathit{old}},\sigma_{\mathit{old}})\backslash
                     N(u,S_{\mathit{new}},\sigma_{\mathit{new}})$}
                     \State Send $M$ to neighbor $n$
                  \EndFor
               \Else
                  \State \textit{We are considering a pathlet 
                  $\pi_{\mathit{old}}$ 
                  that is in $\Pi_u$ but has been removed in 
                  $\Pi_{\mathit{new}}$}
                  \ForAll{$n\in N(u,S_{\mathit{old}},\sigma_{\mathit{old}})$}
                     \State Send $M$ to neighbor $n$
                  \EndFor
                  \State Clear $\fids_u(\fid)$ and $\nexthop_u(\fid)$ after a 
                  timeout 
                  $T_f$
                  \label{algline:clear-forwarding-state-knownpathlets}
               \EndIf
            \EndIf
         \EndFor
         \State $\Pi_u\leftarrow \Pi_{\mathit{new}}$
         \ForAll{$\pi=\left<\fid,v,w,\sigma,\delta\right>\in\Pi_u$}
            \If{$\chains(\Pi_u,u,v,())=\emptyset$ or any concatenation of a 
            pathlet in $\chains(\Pi_u,u,v,())$ with pathlet $\pi$ has a cycle}
               \State $T_p(\pi)\leftarrow$ value of the pathlet timeout parameter
            \Else
               \State $T_p(\pi)\leftarrow\oslash$
            \EndIf
         \EndFor
         \State \Call{UpdateComposedPathletsAndStack}{$u$, $S_{\mathit{old}}$, 
            $S_{\mathit{new}}$, $\Pi_u$}
      \EndProcedure
      \Procedure{UpdateKnownPathlets}{$u$, $\Pi_{\mathit{new}}$}
         \State \Call{UpdateKnownPathletsAndStack}{$u$, $S(u)$, $S(u)$, 
         $\Pi_{\mathit{new}}$}
      \EndProcedure
   \end{algorithmic}
\end{algorithm*}

When set $\Pi_u$ is replaced by $\Pi_{\mathit{new}}$, vertex $u$ must also check 
whether the component 
pathlets for its crossing and final pathlets are still available in 
$\Pi_{\mathit{new}}$ and 
whether there are new crossing and final pathlets that $u$ should compose due to 
newly appered pathlets in $\Pi_{\mathit{new}}$. 
Function \textsc{IsPathletComposable}($u$, $\pi$, $\Pi$, $E$) in 
Algorithm~\ref{alg:is-pathlet-composable} can be used to establish whether a 
certain pathlet $\pi$ can (still) be composed by $u$ based on the set $\Pi$ of 
known pathlets at $u$ and on a set $E$ of admissible end vertices for $\pi$ (the 
check performed by this function actually reflects 
the 
mechanism for the construction of set $\crossing_u$ as 
explained in Section~\ref{sec:dissemination-basics}).
The composition (or deletion) of 
crossing and final pathlets also depends on the areas for which $u$ is a border 
vertex and on the knowledge of other border vertices.
All the operations that $u$ is supposed to 
execute to update its crossing and final pathlets are therefore formalized as 
procedure 
\begin{algorithm}[t!]
   \caption{Algorithm to check whether a pathlet $\pi$ can (still) be composed 
      by a vertex $u$ given a set $\Pi$ of known pathlets and a set $E$ of 
      admissible end vertices for $\pi$.}
   \label{alg:is-pathlet-composable}
   \small
   \begin{algorithmic}[1]
      \Function{IsPathletComposable}{$u$, $\pi$, $\Pi$, $E$}
         \State Let $\pi=\left<\fid,u,v,\sigma,\delta\right>$
         \If{$v\in E$ \textbf{ and } $\exists (\pi_1\ \pi_2\ \dots\ 
            \pi_n)\in\chains(\Pi,u,v,\sigma)$ 
            such that $\pi_i=\left<\fid_i,u_i,v_i,\sigma_i,\delta_i\right>$,
            $i=1,\dots,n$ \textbf{and} $\fids_u(\fid)=(\fid_2\ \fid_3\ \dots\ 
            \fid_n)$ 
            \textbf{and} $\nexthop_u(\fid)=u_2$ \textbf{and} the pathlet 
            composition rules allow composition of $\pi$}
            \State \textbf{return} \texttt{True}
         \Else
            \State \textbf{return} \texttt{False}
         \EndIf
      \EndFunction
   \end{algorithmic}
\end{algorithm}
\textsc{UpdateComposedPathlets}($u$, $\Pi_{\mathit{new}}$) in 
Algorithm~\ref{alg:update-crossing-final}: 
$u$ considers pathlets that it can no longer compose ($C_{\mathit{old}}$) 
because it is no longer a border vertex for some area or because some of the 
component pathlets are no longer available in $\Pi_{\mathit{new}}$; $u$ may also 
compose new crossing and final pathlets ($C_{\mathit{new}}$) because it has 
become a 
border vertex for some area or because there are new possible compositions of 
pathlets in $\Pi_{\mathit{new}}$. If possible, $u$ attempts to transparently 
replace pathlets in $C_{\mathit{old}}$ with newly composed pathlets from 
$C_{\mathit{new}}$ by just updating its forwarding 
state and without sending any messages; if this is not possible, $u$ 
withdraws the no longer available 
pathlets from those neighbors to which they had been disseminated and clears the 
forwarding state for these pathlets after a timeout $T_f$ (the statement at 
line~\ref{algline:clear-forwarding-state-composedpathlets} of 
Algorithm~\ref{alg:update-crossing-final} is non-blocking). Last, $u$ 
disseminates to selected neighbors (according to the propagation conditions and 
the routing policies) all those newly composed pathlets in $C_{\mathit{new}}$ 
that were not used as a replacement for pathlets in $C_{\mathit{old}}$. Pathlet 
composition at line~\ref{algline:update-crossing-final-nocrossing} of 
Algorithm~\ref{alg:update-crossing-final} follows the same mechanism as for set 
$\crossing$: we did not use set $\crossing(\Pi_{\mathit{new}},\sigma)$ here 
because the 
$\fid$s of already existing pathlets in $C_u(\sigma)$ must be retained.

\begin{algorithm*}
   \caption{Algorithm to update the sets of crossing and final pathlets composed 
   by a vertex $u$. The first procedure considers the case when the label stack 
   of $u$ is contextually changed from $S_{\mathit{old}}$ to $S_{\mathit{new}}$, 
   whereas the second only realizes the update of crossing and final pathlets.}
   \label{alg:update-crossing-final}
   \small
   \begin{algorithmic}[1]
      \Procedure{UpdateComposedPathletsAndStack}{$u$, $S_{\mathit{old}}$, 
      $S_{\mathit{new}}$, $\Pi_{\mathit{new}}$}
         \ForAll{area $A_\sigma$}
            \State $C_{\mathit{new}}\leftarrow\emptyset$
            \State $C_{\mathit{old}}\leftarrow\emptyset$
            \If{$u$ is a border vertex for $A_\sigma$}
               \State $B_u(\sigma)\leftarrow$ 
               \Call{DiscoverBorderVertices}{$u$, $\sigma$, $\Pi_{\mathit{new}}$}
               \If{$C_u(\sigma)=\emptyset$}
                  \State \textit{Vertex $u$ has become a border vertex for 
                  $A_\sigma$ or has not yet composed any pathlets for that area; 
                  pathlets in set 
                  $\crossing_u(\Pi_{\mathit{new}},\sigma)$ below have end 
                  vertices 
                  in $B_u(\sigma)$}
                  \State 
                  $C_{\mathit{new}}\leftarrow\crossing_u(\Pi_{\mathit{new}},\sigma)$
               \Else
                  \State \textit{Vertex $u$ continues to be a border vertex for 
                     $A_\sigma$, but it has to refresh available crossing 
                     pathlets 
                     according to the contents of $\Pi_{\mathit{new}}$}
                  \State $C_{\mathit{new}}\leftarrow$ new crossing pathlets not 
                  in
                  $C_u(\sigma)$, that $u$ can compose towards vertices in 
                  $B_u(\sigma)$ using pathlets in
                  $\Pi_{\mathit{new}}$ and according to the 
                  pathlet composition rules
                  \label{algline:update-crossing-final-nocrossing}
                  \State $C_{\mathit{old}}\leftarrow \{\pi | \pi\in C_u(\sigma) 
                  \textbf{ and not } 
                  \textsc{IsPathletComposable}(u,\pi,\Pi_{\mathit{new}},B_u(\sigma))\}$
               \EndIf
            \ElsIf{$C_u(\sigma)\neq\emptyset$}
               \State \textit{Vertex $u$ was a border vertex for $A_\sigma$ but 
               is no longer}
               \State $C_{\mathit{old}}\leftarrow C_u(\sigma)$
            \EndIf
            \State Update $u$'s forwarding state for any pathlet in 
            $C_{\mathit{new}}$
            \If{$C_{\mathit{old}}=C_u(\sigma)$}
               \State \textit{All the crossing pathlets have been removed: this 
               piece of information can be propagated with a single Withdraw 
               message}
               \State $M\leftarrow$ a new \textbf{Withdraw} message
               \State $M.\texttt{s}\leftarrow\sigma$
               \ForAll{$n\in N(u,S_{\mathit{old}},\sigma)$}
                  \State Send $M$ to neighbor $n$
               \EndFor
            \Else               
               \ForAll{$\pi_{\mathit{old}}=\left<\fid_{\mathit{old}},v,w,
                  \sigma,\delta\right>\in C_{\mathit{old}}$}
                  \If{$\exists\pi_{\mathit{new}}=\left<\fid_{\mathit{new}},v,w,
                     \sigma,\delta\right>\in C_{\mathit{new}}\backslash
                     C_{\mathit{old}}$}
                     \State \textit{Use an alternative pathlet 
                     $\pi_{\mathit{new}}$ 
                     to transparently replace a no longer available pathlet 
                     $\pi_{\mathit{old}}$ by only updating $u$'s forwarding 
                     state}
                     \State $\fids_u(\fid_{\mathit{old}})\leftarrow
                     \fids_u(\fid_{\mathit{new}})$
                     \State $\nexthop_u(\fid_{\mathit{old}})\leftarrow
                     \nexthop_u(\fid_{\mathit{new}})$
                     \State $C_{\mathit{new}}\leftarrow 
                     (C_{\mathit{new}}\backslash\{\pi_{\mathit{new}}\})\cup
                     \{\pi_{\mathit{old}}\}$
                  \Else
                     \State $M\leftarrow$ a new \textbf{Withdrawlet} message
                     \State $M.\texttt{f}\leftarrow \fid_{\mathit{old}}$
                     \State $M.\texttt{s}\leftarrow \sigma$
                     \ForAll{$n\in N(u,S_{\mathit{old}},\sigma)$}
                        \State Send $M$ to neighbor $n$
                     \EndFor
                     \State Clear $\fids_u(\fid_{\mathit{old}})$ and 
                     $\nexthop_u(\fid_{\mathit{old}})$ after a 
                     timeout $T_f$
                     \label{algline:clear-forwarding-state-composedpathlets}
                  \EndIf
               \EndFor
               
               \ForAll{$\pi_{\mathit{new}}=\left<\fid_{\mathit{new}},v,w,
                  \sigma,\delta\right>\in C_{\mathit{new}}\backslash 
                  C_{\mathit{old}}$}
                  \State $M\leftarrow$ a new \textbf{Pathlet} message
                  \State $M.\texttt{p}\leftarrow\pi$
                  \ForAll{$n\in N(u,S_{\mathit{new}},\sigma)$}
                     \State Send $M$ to neighbor $n$
                  \EndFor
               \EndFor
            \EndIf
            \State $C_u(\sigma)\leftarrow (C_u(\sigma)\backslash 
            C_{\mathit{old}})\cup 
            C_{\mathit{new}}$
         \EndFor
         \State Repeat the same steps replacing set $C_u(\sigma)$ with 
         $F_u(\sigma)$, set $B_u(\sigma)$ with $A_\sigma\cap D$, set 
         $\crossing(\Pi_{\mathit{new}},\sigma)$ with 
         $\final(\Pi_{\mathit{new}},\sigma)$, and ``crossing pathlets'' with 
         ``final pathlets''
      \EndProcedure
      \Procedure{UpdateComposedPathlets}{$u$, $\Pi_{\mathit{new}}$}
         \State \Call{UpdateComposedPathletsAndStack}{$u$, $S(u)$, $S(u)$, 
         $\Pi_{\mathit{new}}$}
      \EndProcedure
   \end{algorithmic}
\end{algorithm*}

\subsection{Handling Topological Variations and Configuration 
Changes}\label{sec:stack-change}

As soon as a vertex $u$ becomes active on the network, it sends a \textbf{Hello} 
message $M$ to all its neighbors, with $M.\texttt{s}$ set to its label stack 
$S(u)$, $M.\texttt{d}$ set to the available destinations at $u$ (if any), and 
$M.\texttt{a}=\texttt{True}$.
With this simple neighbor greeting mechanism, each vertex can learn about its 
neighborhood. Once a vertex has collected this information, it
starts creating and disseminating atomic pathlets as explained in 
Section~\ref{sec:dissemination-basics}. Although this reasonably summarizes the 
behavior of a vertex that has just appeared on the network, ``becoming active'' 
is just one of the possible topological variations that graph $G$ may undergo
during network operation. Moreover, our control plane must also support 
administrative configuration changes that can occur while the network is running.

In our model, most topological variations and configuration changes can be 
represented as a change of label stacks, in the following way: 
addition of a link $(u,v)$ is modeled by the assignment of value $S(v)$ to label 
stack $S_u(v)$ and of value $S(u)$ to label stack $S_v(u)$; removal of a link 
$(u,v)$ is 
modeled as a change of label stacks $S_u(v)$ and $S_v(u)$ to the empty stack 
$()$; addition and removal of a vertex are modeled as a simultaneous addition 
or removal of all its incident edges; an administrative 
configuration change that modifies the label stack $S(v)$ assigned to a vertex 
$v$ is modeled as an update of stacks $S_w(v)$ of all the neighbors $w$ of $v$.
%addition of a vertex $v$ is modeled by the assignment of its stack $S(v)$;
%removal of a vertex $v$ is modeled as a change of its label stack $S(v)$
%to the empty stack $()$; similarly, addition and failure of a link $(u,v)$ is
%modeled as a change of stack information stored at vertex $u$ in stack $S_u(v)$
%and at vertex $v$ in stack $S_v(u)$; a failure 
%of a vertex is modeled as a simultaneous failure of all its incident links.
To complete the picture of possible reconfigurations, we assume that a change in 
the routing policies of a vertex 
causes a reboot of that vertex (this assumption can be removed, but then each 
vertex has to keep track of the pathlets it has propagated): we therefore do not 
discuss this kind of configuration change further.
For these reasons, we can handle all relevant network dynamics by defining a 
generic 
algorithm to deal with a change of the known label stack of a vertex. We will 
see in the rest of this section that this algorithm is designed to limit the 
propagation of 
the effect of a network change: in fact, only those pathlets that involve 
vertices affected by the change are disseminated as a consequence of the change. 
Moreover, we enforce mechanisms to transparently replace a pathlet that is no 
longer available without the need to disseminate any information to the rest of 
the network.
%In order to detect failures and reconfigurations, we assume that 
%\textbf{Hello} messages are sent periodically.

% failures or intentional removal of vertices or
% links can be immediately detected (e.g., by sending neighbor greeting messages
% periodically), and we omit repeating this in the rest of the paper.

%\vspace{1mm}
%{\bf A stack change is more efficient than a sequence of pop and push
%operations.}
In principle, we could define an algorithm for ``push'' and
``pop'' primitives on the stack of $v$ and consider a generic stack
change as consisting of a suitable sequence of pop operations followed by
push operations. However, this choice has two drawbacks: first of all, care
should be taken in order to avoid that push and pop operations triggered by
different network events are mixed up, resulting in inconsistent assignments
of label stacks; second, implementing a stack change as a sequence of push and
pop operations results in more messages being exchanged. As an
example, consider again the network in Fig.~\ref{fig:example} and suppose
that vertex $v_2$ has its stack administratively changed
from $S(v_2)=(0\ 1\ 3)$ to $S(v_2)=(0\ 2\ 1)$: if this event were implemented
with push and pop primitives, $v_2$ would also be assigned the intermediate
stack $(0\ 1)$, which would make $v_1$ a border vertex for area $A_{(0\ 1\ 3)}$
and cause $v_1$ to disseminate appropriate crossing (and final) pathlets for
that area. Instead, in the final state in which $S(v_2)=(0\ 2\ 1)$, $v_1$ is not
supposed to disseminate these pathlets, because it a border vertex only for
area $A_{(0\ 1)}$.
We therefore consider the stack change as an atomic operation in the
following.

	% Over label stack can be executed push and pop operations, that cause a
	% topology variation. Stack change is more general case of topology
	% variation, because it involves one or more push and/or pop operations over the
	% stack. We assume that a stack change consists of a sequence of push and pop.

\vspace{1mm}
{\bf Stack change} --
We now describe the operations performed by a vertex when its label stack is 
administratively changed, for example because the vertex is moved to a different 
area. Despite being also modeled as a stack change, this does not include the
case when the vertex fails, because of course it would not be able to undertake
any actions: this case is handled just as if
neighbors of the failed vertex received a \textbf{Hello} message from that
vertex with \texttt{s} set to $()$, and is therefore discussed later on.

Consider a vertex $u\in V$ and suppose its label stack $S(u)$ is changed at a
certain time instant from $\stackold$ to $\stacknew$. As a consequence of this 
change, 
some pathlets may be created or deleted by $u$, or have their scope stack 
changed.
The following steps describe how pathlets are modified by $u$ and which messages 
are generated by $u$ to disseminate this information.
\begin{enumerate}
\item $u$ informs all its neighbors that its label stack has changed. 
To this purpose, $u$ sends to each of its neighbors a $\textbf{Hello}$ message 
$M$ with $M.\texttt{s}=\stacknew$, $M.\texttt{d}$ set according to the network
destinations available at $u$, and $M.\texttt{a}=\mathit{false}$.
\item $u$ considers the atomic pathlets towards its neighbors.
Since the stack change may influence the scope stack of some of these 
pathlets, $u$ may have to update and disseminate them to a relevant subset of 
neighbors. Observe that, because of the propagation 
conditions and of the routing policies, an updated atomic pathlet may not be 
propagated to the same 
neighbors to 
which it was propagated before the stack change. Hence, $u$ will send to some 
neighbors \textbf{Pathlet} messages that announce or update some atomic 
pathlets, and to other neighbors \textbf{Withdrawlet} messages that withdraw 
atomic pathlets that should no longer be visible.\\
Formally, for each neighbor $v$ of $u$, 
if $(\stackold\Join S_u(v)) \neq 
(\stacknew\Join S_u(v))$, $u$ searches $\Pi_u$ for an atomic pathlet 
$\pi_{\mathit{old}}=\left<\fid,u,v,\sigma_{\mathit{old}},\delta\right>$ from $u$ 
to $v$. 
Such pathlet must exist, because at least it has been created immediately after 
$u$ has received a \textbf{Hello} message from $v$.
Let $\pi_{\mathit{new}}=\left<\fid,u,v,\sigma_{\mathit{new}},\delta\right>$, with
$\sigma_{\mathit{new}}=(\stacknew\Join S_u(v)) \circ (\bot)$.
Then, $u$ executes procedure \textsc{UpdateKnownPathletsAndStack}($u$, 
$S_{\mathit{old}}$, $S_{\mathit{new}}$, 
$(\Pi_u\backslash\{\pi_{\mathit{old}}\})\cup\{\pi_{\mathit{new}}\}$) from 
Algorithm~\ref{alg:updatepi}.

\item $u$ considers the areas to which its neighbors belong and updates its role 
of border vertex: if $u$ 
is no longer a border vertex for some areas after the stack change, it must
delete all crossing 
and final pathlets for these areas and withdraw them to relevant neighbors.
Conversely, if after the stack 
change $u$ becomes a border vertex for some areas, it must create crossing and
final pathlets for these areas and
disseminate them to the relevant neighbors, according to the
propagation conditions and to the routing policies. For the areas for which $u$ 
continues to be a border vertex, it must check whether the pathlets that make up 
its crossing and final pathlets are still available, or whether new compositions 
are possible, and disseminate the corresponding information. To realize these 
operations, $u$ executes procedure \textsc{UpdateComposedPathletsAndStack}($u$, 
$S_{\mathit{old}}$, $S_{\mathit{new}}$, 
$(\Pi_u\backslash\{\pi_{\mathit{old}}\})\cup\{\pi_{\mathit{new}}\}$), which is 
invoked within \textsc{UpdateKnownPathletsAndStack}.
% 
%Follow a more formal description of $u$'s actions.
%Let $\bar A_{new}$ ($\bar A_{old}$) be a set of areas such that 
%an area $A_\sigma$ is in $\bar A_{new}$ ($\bar A_{old}$) if there exists a 
%neighbor $w$ of $u$ in $N(u,S_{new},\sigma)$ ($N(u,S_{old},\sigma)$).
%% 
%For each area $A_\sigma \in (\bar A_{new} \setminus \bar A_{old})$, $u$ 
%populates 
%$C_u(\sigma)$ with $\crossing_u(\Pi_u,\sigma)$ and $F_u(\sigma)$ with 
%$\final_u(\Pi_u,\sigma)$. Finally, it sends to each of its neighbors $n$ in 
%$N(u,S_{new},\sigma)$ a \textbf{Pathlet} message $M$ for each pathlet $\pi \in 
%(C_u(\sigma) \cup F_u(\sigma)$ with $M.\texttt{p}=\pi$ and $M.\texttt{t}=T$.
%% 
%Now, for each area $A_\sigma \in (\bar A_{old} \setminus \bar A_{new})$, $u$ 
%sets 
%$C_u(\sigma)=\emptyset$ and $F_u(\sigma)=\emptyset$, then, it sends to each of 
%its neighbors $n \in (N(u,S_{old},\sigma) \setminus N(u,S_{new},\sigma))$ a 
%\textbf{Withdraw} message $M$ with $M.\texttt{s}=\sigma$, $M.\texttt{origin}
%=u$ and $M.\texttt{t}=T$.
\end{enumerate}

\remove{
	3) For each neighbor $v$ of $u$, $u$ considers its role of border vertex for both
	areas $A_{ S\rightarrowtail S(v)}$ and $A_{ S_{new}\rightarrowtail S(v)}$. To illustrate
	the operations undertaken by $u$, let $\bar A$ be any of these areas. If $u$ was
	(not) a border vertex for $\bar A$ before the stack change and is still (not) a
	border vertex for $\bar A$ after the change, this step is just skipped.\\
	If $u$ was not
	a border vertex for $\bar A=A_{S_{new}\rightarrowtail S(v)}$ before the stack change,
	but becomes a border vertex for $\bar A$ after the change, then $u$
	performs the following operations:
	it builds new sets
	$C_u(S_{new}\rightarrowtail S(v))$ and $F_{S_{new}\rightarrowtail S(v)}$ as described in
	Section~\ref{sec:dissemination-basics}\footnote{mc. se esistono due vicini $x$
	e $y$ di $u$ tali che $S_{new}\rightarrowtail S(x)=S_{new}\rightarrowtail S(y)$ si creano 
	due copie di $C_u(S_{new}\rightarrowtail S(v))$.}
	then, $u$ considers the next-hop $\nexthop_u(\fid)$ of each pathlet
	$\pi=\left<\fid,u,\cdot,\sigma\right>\in C_u(S_{new}\rightarrowtail S(v))$, and sends
	to $\nexthop_u(\fid)$ a \textbf{Request\_Crossing\_Pathlet} message $M_1$ with
	$M_1.\texttt{fs}=(\fid_2\ \fid_3\ \dots\fid_n)$, with
	$\fids_u(\fid)=(\fid_1\ \fid_2\ \fid_3\ \dots\fid_n)$; $u$
	then waits for\footnote{Max: credo che qui (e altrove) dovremmo mettere un
	timeout.} a \textbf{Response\_Crossing\_Pathlet} message $M_2$ having
	$M_2.\texttt{fs}=M_1.\texttt{fs}$, as a reply from $\nexthop_u(\fid)$: if
	$M_2.\texttt{f}=(\pi_2)$, then $u$ searches $\Pi_u$ for the atomic pathlet
	$\pi_1=\left<\fid_1,u,\nexthop_u(\fid),\sigma_1\right>$ (note that one must
	exist) and sends to
	each neighbor in $A_{S_{old}\Join S_{new}}$ an \textbf{Update} message $M_3$ with
	$M_3.\texttt{ops}=(\pi_1\ \pi_2)$ and $M_3.\texttt{nps}=(\pi)$; otherwise, if
	$M_2.\texttt{f}=\oslash$, then $u$ sends to each neighbor in $A_{S_{old}\Join S_{new}}$ a
	\textbf{Pathlet} message $M_4$ with $M_4.\texttt{p}=\pi$ and a
	\textbf{Withdrawlet} message $M_5$ with $M_5.\texttt{f}=\fid_1$. The same
	operations are performed for set $F_{S_{new}\rightarrowtail S(v)}$
	\\
	If $u$ was a border vertex for $\bar A=A_{S_{old}\rightarrowtail S(v)}$ before the
	stack change, but ceases to be a border vertex for $\bar A$ after the change,
	then $u$ performs operations that are much similar to the last described case, 
	except that: sets $C_u(S_{old}\rightarrowtail S(v))$ and $F_{S_{old}\rightarrowtail S(v)}$ 
	already exist at $u$; the atomic pathlet $\pi_1$ is created and added to
	$\Pi_u$; $M_3.\texttt{ops}=(\pi)$, $M_3.\texttt{nps}=(\pi_1\ \pi_2)$,
	$M_4.\texttt{p}=\pi_1$, and $M_5.\texttt{f}=\fid$.\\

	% %
	%
	% the
	% following operations: $u$ considers the next-hop $\nexthop(\pi)$ of each pathlet
	% $\pi=\left<\fid,u,\cdot,\sigma\right>\in\crossing(\Pi_u,u,\bar A)$, and sends to
	% $\nexthop(\pi)$ a \textbf{Request\_Crossing\_Pathlet} message $M_1$
	% with $M_1.\texttt{fs}=FS$ where $\exists f$ such that $\fids(\pi)=f\circ FS$;
	% $u$ then waits for a \textbf{Response\_Crossing\_Pathlet} message $M_2$ as a reply from $w$: if
	% $M_2.\texttt{fsr}=(\pi_2)$, then $u$ searches $\Pi_u$ for the atomic pathlet
	% $\pi_1=\left<\fid_1,u,w,\sigma_1\right>$ (note that one must exist) and sends to
	% each neighbor in $A_{S\Join S'}$
	%
	%
	%
	% an \textbf{Update} message $M_3$ with
	% $M_3.\texttt{ops}=(\pi)$ and $M_3.\texttt{nps}=(\pi_1\ \pi_2)$; otherwise, if
	% $M_2.\texttt{fsr}=()$, then $u$ sends to each neighbor in $A_{S\Join S'}$ a
	% \textbf{Withdrawlet} message $M_3$ with $M_3.\texttt{fid}=\fid$ and
	% $M_3.\texttt{v}=u$.
	%
	%

	\noindent 4) For each \textbf{Update} message sent by $u$ so far (if any), $u$
	waits for a matching \textbf{Acknowledgment}. After $u$ received all expected
	\textbf{Acknowledgment}s, i.e. one for each \textbf{Update} message sent before,
	it checks whether $S_{old}\not\sqsubset S_{new}$. If this is true, then $u$ clears 
	its local information of any pathlets that it can no longer use: for
	each neighbor $v$ of $u$, $u$ removes from $\Pi_u$,
	$\crossing_u(\Pi_u,S_{old}\rightarrowtail S(v))$, and
	$\final_u(\Pi_u,S_{old}\rightarrowtail
	S(v))$ any pathlets having scope stack $\sigma$ such that $S_{old}\sqsubseteq\sigma$. 
	Besides discarding pathlets that $u$ should no longer see in
	$\Pi_u$, this means that if $u$ is no longer a border vertex for an area
	$A_{S_{old}\rightarrowtail S(v)}$, it completely clears its crossing and final
	pathlets for that area. Note that these pathlets have already been withdrawn
	by $u$ at step 3).

	The described approach to handle stack changes has an important
	benefit expressed by the following property:

	\begin{property}
	If the stack of a single vertex $u$ is changed from $S$ to $S_{new}$, only vertices
	in $A_{S_{old}\Join S_{new}}$ and their neighbors may update their local information.
	\end{property}
	\begin{proof}
	Information maintained locally at each vertex can only be updated if that vertex
	receives some message or undergoes a stack change. Assuming that $u$ is the only
	vertex that has its stack changed, we therefore need to prove that only
	vertices in $A_{S_{old}\Join S_{new}}$ and their neighbors exchange messages as a
	consequence of this change.
	Step 1) of the operations described for the stack change only involves $u$'s
	neighbors. Since, by definition, we have $u\in A_{S_{old}\Join S_{new}}$, the property is
	easily verified for this step.
	Step 2)

	**************************

\end{proof}
}

\vspace{1mm}
{\bf Update of network destinations} --
When an administrative configuration change modifies the set of network 
destinations available at a certain vertex $u$, all vertices that store a 
pathlet with $u$ as an end vertex must have this pathlet updated with the new 
available destinations. To achieve this, $u$ peforms only step 1) of the stack 
change: this is enough to propagate the updated information.
In fact, as shown in the next subsection, when a vertex $v$ receives a 
$\textbf{Hello}$ or a $\textbf{Pathlet}$ message that carries already known 
information but for the set of destinations, $v$ updates the pathlets it 
stores locally and forwards the updated information 
to its neighbors, according to the propagation conditions and to the routing 
policies.

\subsection{Message Handling}

In the previous subsections we have described the actions performed by a vertex 
when it detects a topological change or it undergoes a configuration change. 
Therefore, to complete the specification of the control plane, we need to 
specify the behavior of a vertex when it receives any of the messages introduced 
in this section.
Assume that vertex $u$ receives a message $M$. The actions performed by $u$ 
depend on the type of message $M$, and are detailed in the following.

% % % % % % % % % % % % % % % % % % % % % % % % % % % % % % % % %
% % % % % % % % % % % % % % % % % % % % % % % % % % % % % % % % %
% % % % % % % % % % % % % % % % % % % % % % % % % % % % % % % % %

\vspace{1mm}
\textbf{Receipt of a Hello Message} --
When a vertex $u$ receives a $\textbf{Hello}$ message $M$ from a neighbor 
$M.\texttt{o}$, it performs several actions.

First of all, $u$ updates its knowledge about vertex $M.\texttt{o}$ by setting 
$S_u(M.\texttt{o})=M.\texttt{s}$ and $D_u(M.\texttt{o})=M.\texttt{d}$.

After that, $u$ checks whether 
$M.\texttt{a}=\texttt{True}$, which means that this is the first \textbf{Hello} 
message sent by $M.\texttt{o}$ since its activation. If this is the case, 
vertex $M.\texttt{o}$ needs to learn about all the currently available 
pathlets. For this reason, $u$ sends to $M.\texttt{o}$ all the information it 
currently knows, and in particular: for every pathlet 
$\pi=\left<\fid,v,w,\sigma,\delta\right>$ in any of the sets $\Pi_u$, $C_u$, and 
$F_u$ kept by $u$ such that $M.\texttt{o}\in N(u,S(u),\sigma)$, $u$ sends to 
$M.\texttt{o}$ a \textbf{Pathlet} message $M_P$ with $M_P.\texttt{p}=\pi$, 
$M_P.\texttt{o}=v$, and $M_P.\texttt{t}=t$, where $t$ is taken from entry 
$\left<\fid,v,\sigma,t,+\right>$ in history $H_u$ (note that such an entry must 
exist for every pathlet learned or created by $u$). Moreover, for every entry 
$\left<\fid,v,\sigma,t,-\right>$ in history $H_u$ such that $M.\texttt{o}\in 
N(u,S(u),\sigma)$, $u$ sends to $M.\texttt{o}$ a \textbf{Withdrawlet} message 
$M_W$ with $M_W.\texttt{f}=\fid$, $M_W.\texttt{s}=\sigma$, $M_W.\texttt{o}=v$, 
and $M_W.\texttt{t}=t$. Observe that, in sending these messages, $u$ preserves 
the origin vertex and timestamp of the originally learned information, as 
specified in the history.

At this point, $u$ creates or updates pathlets as required, based on the newly 
learned information about its neighbor $M.\texttt{o}$.
As a first step, $u$ creates an atomic pathlet towards $M.\texttt{o}$, or 
updates it if it already exists in $\Pi_u$. Since this action may change the 
contents of $\Pi_u$, several crossing and final pathlets may also need to be 
created or deleted, based on the availability of their component pathlets. 
Moreover, after $u$ has learned about the label stack of $M.\texttt{o}$, it 
can detect that its status of border vertex for some areas has changed (it may 
become border vertex for some new areas and cease being border vertex for other 
areas), and this also requires updating crossing and final pathlets.

More formally, let 
$\pi_{\mathit{new}}=\left<\fid_{\mathit{new}},u,v,\sigma_{\mathit{new}},
\delta_{\mathit{new}}\right>$ be a new atomic pathlet from $u$ to 
$M.\texttt{o}$, 
with $\fid_{\mathit{new}}$ chosen to be unique at 
$u$, $v=M.\texttt{o}$, 
$\sigma_{\mathit{new}}=(S(u)\Join 
S_u(v))\circ (\bot)$, and $\delta_{\mathit{new}}=D_u(v)$. If a pathlet 
$\pi_{\mathit{old}}=\left<\fid_{\mathit{old}},u,v,\sigma_{\mathit{old}},
\delta_{\mathit{old}}\right>$ exists in $\Pi_u$, then let 
$\fid_{\mathit{new}}=\fid_{\mathit{old}}$ (that is, the old pathlet is updated) 
and $\Pi_{\mathit{old}}=\{\pi_{\mathit{old}}\}$; otherwise, let 
$\Pi_{\mathit{old}}=\emptyset$.
To realize all the required pathlet update operations, including those of 
crossing and final pathlets, and disseminate the updated information, $u$ 
executes procedure \textsc{UpdateKnownPathlets}($u$, 
$(\Pi_u\backslash\Pi_{\mathit{old}})\cup\{\pi_{\mathit{new}}\}$) from 
Algorithm~\ref{alg:updatepi}.

Note that, even if vertex $M.\texttt{o}$ has sent an updated label stack, for 
example due to a stack change, it may be the case that no pathlets are updated 
by $u$ and no messages are sent by $u$. In fact, if the atomic pathlet 
$\pi_{\mathit{old}}$ from $u$ to $M.\texttt{o}$ already existed 
in $\Pi_u$ and its scope stack $\sigma_{\mathit{old}}$ and set of destinations 
$\delta_{\mathit{old}}$ are unchanged in $\pi_{\mathit{new}}$ (which, for the 
scope stack, only requires that $S(u)\Join S_u(v)$ is unchanged), $u$ does not 
perform any actions: this is visible in Algorithm~\ref{alg:updatepi} because the 
two \textbf{for} cycles at lines~\ref{algline:updatepi-forcycle-1} 
and~\ref{algline:updatepi-forcycle-2} execute no iterations since 
$\Pi_{\mathit{new}}=\Pi_u$; moreover, if $S(u)\Join 
S_u(v)$ is unchanged, $u$ cannot change either the areas for which it is a 
border vertex or any of the sets $B_u$, and this causes sets $C_{\mathit{old}}$ 
and $C_{\mathit{new}}$ in Algorithm~\ref{alg:update-crossing-final} to be empty, 
resulting in no actions being performed even during the execution of that 
algorithm.

Last, $u$ checks whether the set of available destinations at its neighbor 
$M.\texttt{o}$ has changed. In particular, for each pathlet 
$\pi_{\mathit{old}}=\left<\fid,v,w,\sigma,\delta_{\mathit{old}}\right>$ in 
$\Pi_u$ or in any of the sets $F_u$ such that $w=M.\texttt{o}$ and 
$\delta_{\mathit{old}}\neq D_u(w)$, $u$ sends to all its neighbors in 
$N(u,S(u),\sigma)$ a \textbf{Pathlet} message $M$ with 
$M.\texttt{p}=\pi_{\mathit{new}}$, where 
$\pi_{\mathit{new}}=\left<\fid,v,w,\sigma,D_u(w)\right>$.

\remove{
   First, this message may require the creation, deletion, or changing of the 
   scope stack of an atomic pathlet from $u$ to $v$. This requires $u$ to notify 
   its neighbors by appropriate messages, which must be sent according to the 
   propagation conditions.
   Second, if $v$ is a new vertex, $u$ must send to $v$ all its knowledge about 
   the network, according to the propagation conditions. This knowledge consists 
   of 
   positive information, i.e., pathlets in $\Pi_i$, $C_u(\cdot)$, and 
   $F_u(\cdot)$, and negative information, i.e., pathlets that have been removed 
   at $u$ during the last $K$ time units. Hence, $u$ sends to $v$ both 
   $\textbf{Pathlet}$ and $\textbf{Withdrawlet}$ messages. The former (latter) 
   are 
   created based on the (non) existence of pathlets in sets $\Pi_i$, 
   $C_u(\cdot)$,
   and $F_u(\cdot)$ and on the content of $H_u$. To maintain consistency, each 
   $\textbf{Withdralet}$ message that caused the deletion of a pathlet from 
   $\Pi_u$ is recreated exactly with the same timestamp and origin vertex as when
   it has been received. 
   Third, according to the new stack label of $v$, $u$ may add pathlets into or 
   delete pathlets from sets $C_u(\cdot)$ or $F_u(\cdot)$. It may add a pathlet 
   into any set $C_u(\sigma)$ or $F_u(\sigma)$ these sets if $v$ become the only 
   vertex to whom $m$ is allowed to announce pathlet with scope stack $\sigma$. 
   It may delete a pathlet from these sets if, before the stack change, $v$ was 
   the only vertex to whom $m$ was allowed to announce pathlets with scope stack 
   $\sigma$ 
% Third, according to the new stack label of $v$, $u$ may add pathlets into or 
% delete pathlets from sets $C_u(\cdot)$ or $F_u(\cdot)$. It may add a pathlet into
% these sets for two main reasons: (i) if there exists new concatenations of 
% pathlets that use the atomic pathlet between $u$ and $v$ or (ii) $v$ become
% the only vertex to whom $m$ is allowed to announce pathlet for a certain area. 
% It may delete a pathlet from these sets for two main reasons: (i) some concatenations
% of pathlets that used the atomic pathlet between $u$ and $v$ are no longer valid
% concatenations or (ii) $v$ was the only vertex to whom $m$  announces pathlets 
% for a certain area.
% 
   Forth, $u$ checks whether some pathlet in $C_u(\cdot)$ and $F_u(\cdot)$ that 
   was
   constructed based on the atomic pathlet between $u$ and $w$, are no longer a 
   valid
   construction. In that case, $u$ may deletes these crossing and final pathlets 
   and appropriately send announcements to its neighbors.
   Fifth, after the stack change, $u$ checks whether $v$ is become (is no longer)
   a border vertex for an area $A_\sigma$ of interest for $u$. In that case, it 
   adds (deletes)
   all pathlets into (from) sets $C_u(\sigma)$ and $F_u{\sigma}$ that have start 
   vertex
   $u$, end vertex $v$, and scope stack $\sigma$.
   Sixth and last, if the destination set is changed, $u$ updates all its 
   pathlets
   with end vertex $v$ and notifies its neighbors.
   
   For every pathlet created, deleted, or updated so far, $u$ appropriately 
   notifies
   its neighbors by sending $\textbf{Pathlet}$ or $\textbf{Withdrawlet}$ 
   messages. 
   Observe that $u$ does not create any new crossing pathlet that uses the 
   atomic 
   pathlet between $u$ and $v$. It will eventually creates these pathlet only 
   when
   it receives updated information regarding $v$'s incident pathlets.
   
   Follow a more formal description of $u$'s actions.\\

   Vertex $u$ considers the previously known stack 
   and network destinations for $v$, $S_u(v)$ and $D_u(v)$, respectively,
   and checks whether $(S(u)\Join \sigma_{old}) = (S(u)\Join\sigma_{new})$ and 
   $D_u(v) = M.\texttt{d}$. 
   If this is the case, then $u$ undertakes no actions. This includes the 
   special 
   case when the received stack $\sigma_{new}$ coincides with the one previously 
   known for $v$.\\
   Otherwise, vertex $u$ executes the following three main steps. \\
   First, searches $\Pi_u$ for an  atomic pathlet $\pi$ from $u$ to $v$. 
   If such a pathlet  exists, let it be  $\pi=\left<\fid,u,v,
   \sigma,\delta\right>$: $u$ then updates $\pi$ in $\Pi_u$ changing its scope 
   stack with $(S(u)\Join \sigma_{new})\circ (\bot)$ and its destinations set 
   with $M.\texttt{d}$.
   Otherwise, $u$ creates a new atomic pathlet $\pi=\left< \fid,u,v,(S(u) \Join 
   \sigma_{new})\circ(\bot),M.\texttt{d}\right>$, where $\fid$ is a new FID 
   unique 
   at $u$ and adds it to $\Pi_u$. Finally, regardless of whether pathlet $\pi$ 
   has 
   been updated or created, $u$ sends a \textbf{Pathlet} message $M$ to each of 
   its neighbors in $N(u,S(u),S(u) \rightarrowtail \sigma_{new})$ with 
   $M.\texttt{p}=\pi$ and $M.\texttt{t}=T$ and sends 
   a \textbf{Withdrawlet} message $M$, with $M.\texttt{f}=\fid$ and 
   $M.\texttt{s}=
   (S(u) \Join \sigma_{old})\circ(\bot)$, to each neighbor 
   in $N(u,S(u),S(u)\rightarrowtail \sigma_{old}) \setminus 
   N(u,S(u),S(u)\rightarrowtail 
   \sigma_{new})$ with $M.\texttt{t}=T$.\\
   Second, if $S(u) \Join \sigma_{old} \neq S(u) \Join \sigma_{new}$ holds, $u$ 
   performs the following actions:\\
   1) It updates $S_u(v)$ with the new value $\sigma_{new}$.\\
   2) if $M.\texttt{f}$ is true, for each pathlet $\pi=\left<\fid,\cdot,\cdot,
   \sigma,\cdot\right>$ in $\Pi_u \cup C_u{\sigma} \cup F_u{\sigma}$ such that 
   $v 
   \in N(u,S(u),\sigma)$, $u$ sends to $v$ a $\textbf{Pathlet}$ message $M$ with 
   $M.\texttt{p}=\pi$, $M.\texttt{origin}=v$, and $M.\texttt{t}=t$, where $\left<
   \fid,v,t,\sigma\right> \in H_u$.
   Moreover, for each triple $\left<\fid,o,t,\sigma\right> \in H_u$ such that 
   there 
   exists no pathlet $<\fid,o,\cdot,\sigma,\cdot> \in \Pi_u \cup C_u{\sigma} 
   \cup 
   F_u{\sigma}$, if $v \in N(u,S(u), S(u)\rightarrowtail S_{new})$, $u$ sends to 
   $v$ a $\textbf{Withdrawlet}$ message $m$ with $M.\texttt{f}=\fid$, 
   $M.\texttt{origin}=v$, $M.\texttt{s}=\sigma$ and $M.\texttt{t}=t$.\\
   3) If $M.\texttt{f}$ is false, it determines whether 
   $N(u,S(u),S(u)\rightarrowtail
   \sigma_{old})=\emptyset$, i.e., $u$ has no neighbor to whom it is allowed to 
   announce pathlets with scope stack $S(u)\rightarrowtail \sigma_{old}$. In 
   that 
   case,  $u$ sets $C_u(S(u)\rightarrowtail\sigma_{old})=\emptyset$ and $F_u(S(u)
   \rightarrowtail\sigma_{old})=\emptyset$. \\
   4) $u$ determines whether $v \in N(u,S(u)\rightarrowtail\sigma_{new})$, i.e., 
   $u$ summarizes information of area $S(u)\rightarrowtail\sigma_{new}$ to $v$. 
   In that case, if both $C_u(S(u)\rightarrowtail \sigma_{new})$ and 
   $F_u(S(u)\rightarrowtail \sigma_{new})$ are empty sets, $u$ populates 
   $C_u(S(u)\rightarrowtail\sigma_{new})$ with $\crossing_u(\Pi_u,S(u)
   \rightarrowtail\sigma_{new})$ and $F_u(S(u)\rightarrowtail\sigma_{new})$ with 
   $\final_u(\Pi_u,S(u)\rightarrowtail \sigma_{new})$. Finally, it sends to $v$ 
   a 
   \textbf{Pathlet} message $M$ for each  pathlet $\pi \in  (C_u(S(u)
   \rightarrowtail\sigma_{new}) \cup F_u(S(u)\rightarrowtail\sigma_{new}))$ with 
   $M.\texttt{p}=\pi$ and $M.\texttt{t}=T$.\\ 
   5) $u$ determines whether it composed a crossing (final) pathlet $\pi=
   \left<\fid,u,\cdot,\sigma,\cdot\right>$ using the atomic pathlet from $u$ to 
   $v$. 
   If such a pathlet exists, $u$ verifies whether that composition is no longer 
   valid, 
   i.e. $\pi \notin \crossing_u(\Pi_u,\sigma)$ ($\pi \notin 
   \final_u(\Pi_u,\sigma)$).
   In that case, it removes $\pi$ from $C_u(\sigma)$ ($F_u(\sigma)$) and sends a
   \textbf{Withdralet} message $M$, with $M.\texttt{f}=\fid$, 
   $M.\texttt{origin}=u$,
    $M.\texttt{s}=\sigma$, and $M.\texttt{t}=T$.
   Third and last step, if $D_u(v) \neq M.\texttt{d}$ and $\sigma_{old}\neq ()$ 
   hold, $u$ performs these actions:\\
   1) It updates $D_u(v)$ with the new value $M.\texttt{d}$.\\
   % 2) If $u$ is border vertex for an area $A_\sigma$, then it adds into 
   %$F_u(\sigma)$ 
   % a set of pathlets $\bar P \subseteq \final(\Pi_u,\sigma)$ such that they
   % have $M.\texttt{d}$ as their set of network destinations. Next, it sends to 
   %each of
   % its neighbor in $N(u,S(u),\sigma)$ a  \textbf{Pathlet} message for each 
   %pathlet in $\bar P$ withdraw
   % $M.\texttt{t}$ set to the current time value.\\
   % 2 rephrased
   % 
   2) It  searches $C_u(\cdot)$ and $F_u(\cdot)$ for any pathlet with end vertex 
   $v$. If such a pathlet exists, let it be  $\pi=\left<\fid,u,v,\sigma,\delta
   \right>$: $u$ then updates $\pi$ in $C_u(\cdot)$ or $F_u(\cdot)$ changing its 
   network destinations  set with $M.\texttt{d}$.
   Finally, $u$ sends a \textbf{Pathlet} message $M$ to each of
   its neighbors in $N(u,S(u),\sigma)$ with $M.\texttt{p}=\pi$ and 
   $M.\texttt{t}=T$.\\
}

% % % % % % % % % % % % % % % % % % % % % % % % % % % % % % % % % % % %
% % % % % % % % % % % % % % % % % % % % % % % % % % % % % % % % % % % %
% % % % % % % % % % % % % % % % % % % % % % % % % % % % % % % % % % % %

\vspace{1mm}
\textbf{Receipt of a Pathlet Message} --
Upon receiving a \textbf{Pathlet} message carrying a pathlet 
$\pi_{\mathit{msg}}=M.\texttt{p}$, a vertex $u$ first of all checks the 
freshness of the 
information contained in that message: if the information contained in the 
message is older than the information that $u$ currently has about 
$\pi_{\mathit{msg}}$, $u$ must send back a message with the updated 
information; otherwise, $u$ accepts the fresher information and updates its 
pathlets and history accordingly.

In particular, let $\pi_{\mathit{msg}}=\left<\fid,v,w,\sigma_{\mathit{msg}},
\delta_{\mathit{msg}}\right>$. If $u$ is the originator of $\pi_{\mathit{msg}}$, 
namely $u=v$, then the information 
known by $u$ about $\pi_{\mathit{msg}}$ is to be considered always fresher, and 
the message can never carry updated information. If $u$ is not the originator of 
$\pi_{\mathit{msg}}$, the freshness of message $M$ is determined by comparing 
the message timestamp $M.\texttt{t}$ with the timestamp of the most recent 
information that $u$ keeps about $\pi_{\mathit{msg}}$ in its history $H_u$. In 
all the cases in which the information received in message $M$ is outdated, $u$ 
replies with a message containing the updated information.
Function \textsc{IsPathletMessageFresher}($u$, $M$) in 
Algorithm~\ref{alg:is-pathlet-message-fresher} realizes this freshness check and 
returns \texttt{True} only when message $M$ carries updated information. This 
function also sends updated information back to $M.\texttt{src}$, forwards the 
received message to relevant neighbors, and updates the history $H_u$ as 
required.

\begin{algorithm*}
   \caption{Algorithm to determine whether a \textbf{Pathlet} message $M$ 
      carries updated information about a pathlet: the function returns 
      \texttt{True} only in this case. It also handles message forwarding and 
      history update.}
   \label{alg:is-pathlet-message-fresher}
   \small
   \begin{algorithmic}[1]
      \Function{IsPathletMessageFresher}{$u$, $M$}
         \State $\pi_{\mathit{msg}}\leftarrow M.\texttt{p}$
         \State Let $\pi_{\mathit{msg}}=\left<\fid,v,w,\sigma_{\mathit{msg}},
         \delta_{\mathit{msg}}\right>$
         \If{$u=v$}
            \State \textit{$u$ is the originator of pathlet $\pi_{\mathit{msg}}$}
            \If{there is no pathlet identified by $\fid$ and with start vertex 
            $u$ in $\Pi_u$ or in any 
               of the sets $C_u$ and $F_u$}
               \State $M_W\leftarrow$ new \textbf{Withdrawlet} message
               \State $M_W.\texttt{f}\leftarrow\fid$
               \State $M_W.\texttt{s}\leftarrow\sigma_{\mathit{msg}}$
               \State Send $M_W$ to neighbor $M.\texttt{src}$
            \Else
               \State $\pi_{\mathit{cur}}\leftarrow$ pathlet identified by 
               $\fid$ and with start vertex $u$ that is known at $u$
               \If{$\pi_{\mathit{msg}}\neq\pi_{\mathit{cur}}$}
                  \State $M_P\leftarrow$ new \textbf{Pathlet} message
                  \State $M_P.\texttt{p}\leftarrow\pi_{\mathit{cur}}$
                  \State Send $M_P$ to neighbor $M.\texttt{src}$
               \EndIf
            \EndIf
            \State \textbf{return} \texttt{False}
         \Else
            \If{$\exists\left<\fid,v,\sigma,t,
               \mathit{type}\right>$ in $H_u$}
               \If{$t<M.\texttt{t}$}
                  \State Replace $\left<\fid,v,\sigma_{\mathit{msg}},t,
                  \mathit{type}\right>$ in $H_u$ with 
                  $\left<\fid,v,\sigma_{\mathit{msg}},M.\texttt{t},+\right>$
                  \ForAll{$n\in N(u,S(u),\sigma_{\mathit{msg}})\backslash 
                     \{M.\texttt{src}\}$}
                     \State Send $M$ to neighbor $n$
                  \EndFor
                  \State \textbf{return} \texttt{True}
               \Else
                  \If{$\mathit{type}=+$}
                     \State $\pi_{\mathit{cur}}\leftarrow$ pathlet in $\Pi_u$ 
                     identified by 
                     $\fid$ and with start vertex $v$
                     \State $M_P\leftarrow$ new \textbf{Pathlet} message
                     \State $M_P.\texttt{p}\leftarrow\pi_{\mathit{cur}}$
                     \State $M_P.\texttt{t}\leftarrow t$
                     \State Send $M_P$ to neighbor $M.\texttt{src}$
                  \Else
                     \State $M_W\leftarrow$ new \textbf{Withdrawlet} message
                     \State $M_W.\texttt{f}\leftarrow\fid$
                     \State $M_W.\texttt{s}\leftarrow\sigma$
                     \State $M_W.\texttt{t}\leftarrow t$
                     \State Send $M_W$ to neighbor $M.\texttt{src}$
                  \EndIf
                  \State \textbf{return} \texttt{False}
               \EndIf
            \Else
               \State Add 
               $\left<\fid,v,\sigma_{\mathit{msg}},M.\texttt{t},+\right>$ to 
               $H_u$
               \ForAll{$n\in N(u,S(u),\sigma_{\mathit{msg}})\backslash 
                  \{M.\texttt{src}\}$}
                  \State Send $M$ to neighbor $n$
               \EndFor
               \State \textbf{return} \texttt{True}
            \EndIf
         \EndIf
      \EndFunction
   \end{algorithmic}
\end{algorithm*}

$u$ therefore executes function \textsc{IsPathletMessageFresher}($u$, $M$): if 
it returns $\texttt{False}$, the handling of $M$ by $u$ is finished, because the 
message does not carry any useful information (and function 
\textsc{IsPathletMessageFresher} already takes care of forwarding the 
\textbf{Pathlet} message as appropriate).
Otherwise, $u$ looks in its set $\Pi_u$ for a pathlet 
$\pi_{\mathit{old}}=\left<\fid,v,w,\sigma_{\mathit{old}},
\delta_{\mathit{old}}\right>$. If this pathlet exists, $u$ sets 
$\Pi_{\mathit{old}}=\{\pi_{\mathit{old}}\}$, otherwise $u$ sets 
$\Pi_{\mathit{old}}=\emptyset$. At this point, $u$ updates its sets of known 
pathlets, crossing pathlets, and final pathlets, as well as its status of border 
vertex and sets $B_u$ of other border vertices, and disseminates updated 
information to its neighbors. All these tasks are accomplished 
by $u$ by executing procedure \textsc{UpdateKnownPathlets}($u$, 
$(\Pi_u\backslash\Pi_{\mathit{old}})\cup \{\pi_{\mathit{msg}}\}$).

\remove{
   When a vertex $u$ receives an $\textbf{Pathlet}$ message  it checks whether 
   the start vertex of the pathlet is
   itself. This is a special case used to verify whether outdated information is 
   still circulating through the network. In that case, if $u$ does not have 
   $\pi_{new}$
   in its sets $C_u(\cdot)$ or $F_u(\cdot)$, it sends a $\textbf{Withdrawlet}$ 
   for
   $\pi_{new}$ to $v$ and does not perform any other operation at all. This 
   operation
   is described in more detail later in this section. Otherwise, if the start 
   vertex 
   is different from itself, it performs the following operations.
   First, it adds $\pi_{new}$ to its set of known pathlet $\Pi_u$,
   or it updates a pathlet inside $\Pi_u$ that has the same FID and origin vertex
   of $\pi_{new}$ and forwards $M$ to its neighbors, according to propagation 
   conditions. Also, if the start vertex of $\pi_{new}$ is a neighbor of $u$, it
   forwards the packet to it, regardless of propagation conditions.
   Second, $u$ verifies whether it can use $\pi_{new}$ to construct new crossing 
   or
   final pathlets and, if $\pi_{old}$ exists, whether some pathlets that it 
   constructed 
   using $\pi_{old}$ are no longer valid compositions or they must change their 
   scope stack. 
   In that case, it populates $C_u(\cdot)$ and $F_u(\cdot)$ accordingly
   to these new compositions, it eventually deletes or updates all these 
   pathlets 
   based on $\pi_{old}$, and announces these changes to its neighbors, according 
   to propagation
   conditions.
   Third, $u$ checks whether each of the endpoint $w$ of $\pi_{new}$ become (is 
   no longer)
   a border vertex for an area $A_\sigma$ of interest for $u$. In that case, it 
   adds (deletes)
   all pathlets into (from) sets $C_u(\sigma)$ and $F_u{\sigma}$ that have start 
   vertex
   $u$, end vertex $w$, and scope stack $\sigma$.
   Follow a more formal description of $u$'s actions.
   If $o=u$, $u$ verifies whether $\pi_{new} \in \Pi_u$. In that case, it 
   undertakes no actions at all. Otherwise, $u$ updates $v$ by sending a 
   $\textbf{Withdrawlet}$ 
   message $M$ with $M.\texttt{f}=\fid$, $M.\texttt{origin}=u$, 
   $M.\texttt{s}=\sigma$, 
   and  $M.\texttt{t}=T$ to $v$. 
   Otherwise, if $o\neq u$, $u$ performs the following steps.
   First, $u$  searches $\Pi_u$ for a pathlet $\pi_{old}=\left<\fid,o,
   \cdot,\cdot, \cdot\right>$. If such a pathlet  exists, $u$ updates 
   $\left<\fid,
   o,\cdot,\cdot\right>$ with $\left<\fid,o,M.\texttt{t},\sigma\right>$. Also, 
   if $\pi_{new}\neq \pi_{old}$, it updates $\pi_{old}$ in $\Pi_u$ with 
   $\pi_{new}$, 
   otherwise $u$ undertakes no other actions at all.
   In the case pathlet $\pi_{old}$ does not exist in $\Pi_u$, $u$ adds $\pi$ into
   $\Pi_u$ and adds $\left<\fid,o,M.\texttt{t},\sigma \right>$ into $H_u$.
   Then, regardless of whether pathlet $\pi_{new}$ has been updated or created, 
   $u$ 
   forwards $M$ to each of its neighbors in $N(u,S(u),\sigma)$ and to $o$ in 
   case 
   $o$ is a neighbor of $u$.
   Second\footnote{da sistemare, alcuni pathlet non vengono ricreati se erano 
   gia'
   stati creati con $\pi_{old}$}, for each area $A_{\sigma'}$ where $\sigma' 
   \sqsubset \sigma$, if 
   $N(u,S(u),\sigma')\neq \emptyset$, then $u$ adds to (remove from) 
   $C_u(\sigma')$ all pathlets $\bar P 
   \subseteq crossing(\Pi_u,u,\sigma')$ that are constructed using $\pi_{new}$ 
   ($\pi_old$) and for each 
   pathlet  $\pi=\left<f,o,\cdot,\bar \sigma,\cdot\right> \in \bar P$, it sends 
   a 
   \textbf{Pathlet} ($\textbf{Withdrawlet}$) message $M$ to each of
   its neighbors in $N(u,S(u),\sigma')$ with $M.\texttt{p}=\pi$, 
   $M.texttt{origin}=u$ (with 
   $M.\texttt{f}=f$, $M.\texttt{origin}=u$, $M.\texttt{s}=\bar \sigma$), 
   $M.\texttt{t}=T$  .\\
   Third, if $\pi$ is a pathlet with scope stack $\sigma \circ (l)$, 
   $u$ considers each of its endpoint vertices to determine whether they are 
   border vertices for some area. Let $o\neq u$ be an endpoint vertex of $\pi$. 
   Then, $u$ checks 
   whether $o$ is a border vertex for any area $\sigma'$ such that $\sigma 
   \sqsubset \sigma' \sqsubseteq S(u)$, $u$ is a border vertex for $\sigma'$, and
   $o \notin B_u(\sigma')$. 
   In that case,  $u$ adds $o$ into $B_u(\sigma')$ and if $N(u,S(u),\sigma')
   \neq \emptyset$, it adds to $C_u(\sigma')$ all pathlets $\bar P\subseteq
   crossing(\Pi_u,u,\sigma')$ whose end vertex is $o$.
   For each new pathlet $\bar \pi \in \bar P$ created so far, $u$ sends a 
   \textbf{Pathlet} message $M'$, with $M'.\texttt{p}=\bar \pi$, 
   $M'.\texttt{origin}=u$, and $M.\texttt{t}=T$ to each 
   neighbor in $N(u,S(u),\sigma')$. 
   Next, $u$ checks whether $o$ is not a border vertex for an area $\sigma'$ 
   such that $\sigma \sqsubset \sigma' \sqsubseteq S(u)$, and $o \in 
   B_u(\sigma')$. 
   In that case,  $u$ removes $o$ from $B_u(\sigma')$ and removes from 
   $C_u(\sigma')$ 
   all pathlets $\bar P\subseteq crossing(\Pi_u,u,\sigma')$ whose end vertex is 
   $o$.
   For each pathlet $\left<\fid,u,o,\sigma',\cdot\right> \in \bar P$ removed so 
   far, $u$ sends a  \textbf{Withdrawlet} message $M'$, with 
   $M'.\texttt{f}=\fid$, 
   $M'.\texttt{origin}=u$, $M.\texttt{s}=\sigma'$, and $M.\texttt{t}=T$ to each 
   neighbor in $N(u,S(u),\sigma')$. 
   Finally, for each pathlet $\bar \pi=\left<\bar \fid,\bar o,\cdot,\bar 
   \sigma\circ (l), \cdot\right>$ such that there exists an entry $(\bar 
   \fid,\bar o)$ in 
   $T_u$, $u$ checks whether there exists a concatenation of $n$ pathlets from 
   $u$ to 
   $\bar o$ with scope stacks $\sigma_1,\dots,\sigma_n$, such that, for each 
   $i=1,\dots,n$, we have 
   $(\sigma_i =\bar \sigma \circ (m))$, where $m\neq l$ if $\bar \pi$ is non 
   atomic. 
   In that case, $u$ removes entry $(\bar \fid,\bar o)$ from $T_u$.
   Also, for each pathlet $\bar \pi=\left<\bar \fid,\bar o,\cdot,\bar \sigma 
   \circ (l),\cdot\right>$ that was created using $\pi_{old}$, $u$ checks 
   whether 
   there does not exist a concatenation 
    of $n$ pathlets from $u$ to 
   $\bar o$ with scope stacks $\sigma_1,\dots,\sigma_n$, such that, for each 
   $i=1,\dots,n$, we have 
   $(\sigma_i =\bar \sigma \circ (m))$, where $m\neq l$ if $\bar \pi$ is non 
   atomic. In that case, $u$ sets 
   $T_u(\bar \fid,\bar o))=T$, i.e. pathlet $\bar \pi$ will be removed from 
   $\Pi_u$ 
   at $T+K$, unless a proper connectivity to $\bar o$ is restored.
}

% % % % % % % % % % % % % % % % % % % % % % % % % % % % % % % % % % %
% % % % % % % % % % % % % % % % % % % % % % % % % % % % % % % % % % %
% % % % % % % % % % % % % % % % % % % % % % % % % % % % % % % % % % %

\vspace{1mm}
\textbf{Receipt of a Withdrawlet Message} --
Handling of a \textbf{Withdrawlet} message $M$ received by a vertex $u$ is much 
similar to that of a \textbf{Pathlet} message.
First of all, $u$ checks the freshness of the information carried by $M$ by 
invoking function \textsc{IsWithdrawletMessageFresher}($u$, $M$) in 
Algorithm~\ref{alg:is-withdrawlet-message-fresher}: if this function returns 
$\texttt{False}$, then handling of message $M$ is completed.

Otherwise, $u$ searches $\Pi_u$ for pathlet 
$\pi_{\mathit{old}}=\left<M.\texttt{f},M.\texttt{o},w,M.\texttt{s},\delta\right>$.
 If this pathlet exists in $\Pi_u$, $u$ updates pathlets and disseminates 
 information by executing procedure 
 \textsc{UpdateKnownPathlets}($u$, $\Pi_u\backslash\{\pi_{\mathit{old}}\}$); 
 otherwise $u$ undertakes no further actions, because there is no pathlet to be 
 withdrawn. Note that, regardless of whether $\pi_{\mathit{old}}$ exists in 
 $\Pi_u$, function \textsc{IsWithdrawletMessageFresher} already takes care of 
 appropriately forwarding the \textbf{Withdrawlet} message.

\begin{algorithm*}
   \caption{Algorithm to determine whether a \textbf{Withdrawlet} message $M$ 
      carries updated information about a pathlet: the function returns 
      \texttt{True} only in this case. It also handles message forwarding and 
      history update.}
   \label{alg:is-withdrawlet-message-fresher}
   \small
   \begin{algorithmic}[1]
      \Function{IsWithdrawletMessageFresher}{$u$, $M$}
         \If{$\exists\left<M.\texttt{f},M.\texttt{o},\sigma,t,
            \mathit{type}\right>$ in $H_u$}
            \If{$t<M.\texttt{t}$}
               \State Replace $\left<M.\texttt{f},M.\texttt{o},\sigma,t,
               \mathit{type}\right>$ in $H_u$ with 
               $\left<M.\texttt{f},M.\texttt{o},M.\texttt{s},M.\texttt{t},-\right>$
               \ForAll{$n\in N(u,S(u),M.\texttt{s})\backslash 
                  \{M.\texttt{src}\}$}
                  \State Send $M$ to neighbor $n$
               \EndFor
               \State \textbf{return} \texttt{True}
            \Else
               \If{$\mathit{type}=+$}
                  \State $\pi_{\mathit{cur}}\leftarrow$ pathlet in $\Pi_u$ 
                  identified by $M.\texttt{f}$ and with start vertex 
                  $M.\texttt{o}$
                  \State $M_P\leftarrow$ new \textbf{Pathlet} message
                  \State $M_P.\texttt{p}\leftarrow\pi_{\mathit{cur}}$
                  \State $M_P.\texttt{t}\leftarrow t$
                  \State Send $M_P$ to neighbor $M.\texttt{src}$
               \Else
                  \State $M_W\leftarrow$ new \textbf{Withdrawlet} message
                  \State $M_W.\texttt{f}\leftarrow M.\texttt{f}$
                  \State $M_W.\texttt{s}\leftarrow M.\texttt{s}$
                  \State $M_W.\texttt{t}\leftarrow t$
                  \State Send $M_W$ to neighbor $M.\texttt{src}$
               \EndIf
               \State \textbf{return} \texttt{False}
            \EndIf
         \Else
            \State \textit{There is no history entry for the pathlet withdrawn 
               by $M$, therefore $u$ cannot know anything about that pathlet. 
               However, the Withdrawlet must still be forwarded}
            \State Add 
            $\left<M.\texttt{f},M.\texttt{o},M.\texttt{s},M.\texttt{t},
            -\right>$ to $H_u$
            \ForAll{$n\in N(u,S(u),M.\texttt{s})\backslash \{M.\texttt{src}\}$}
               \State Send $M$ to neighbor $n$
            \EndFor
            \State \textbf{return} \texttt{False}
         \EndIf
      \EndFunction
   \end{algorithmic}
\end{algorithm*}

\remove{
   When a vertex $u$ receives an $\textbf{Withdrawlet}$ message several pathlets
   may be deleted. Vertex $u$ performs the following operations.
   First,  updates its history $H_u$ using timestamp $M.\texttt{t}$. Then, it 
   removes a pathlet $\pi=\left<\fid,o,\cdot,\cdot,\cdot\right>$ from $\Pi_u$
   in case such a pathlet exists. Regardless of the existence of $\pi$, if $H_u$
   has been updated, $u$ forwards $M$ to its neighbors, according to the 
   propagation 
   conditions. 
   Second, if such a pathlet $\pi$ exists, $u$ verifies whether some pathlets 
   that was constructed using $\pi$ are no longer valid compositions. 
   In that case, it deletes all these pathlets based on $\pi$, and announces 
   these changes to its neighbors, according to the propagation conditions.
   Third, $u$ checks whether the start vertex $o$ (end vertex $w$) of $\pi$ 
   is no longer a border vertex for an area $A_\sigma$ of interest for $u$. 
   In that case, it deletes all pathlets from sets $C_u(\sigma)$ and 
   $F_u{\sigma}$ 
   that have start vertex $u$, end vertex $o$ ($w$), and scope stack $\sigma$.
   Follow a more formal description of $u$'s actions.

   First, $u$ searches $\Pi_u$ for a pathlet $\pi=\left<\fid,o,\cdot,\sigma,\cdot
   \right>$. If such a pathlet does not exist, $u$ adds $<\fid,o,M.\texttt{t},
   \sigma>$ into $H_u$ and undertakes no other actions at all.\\
   Otherwise, if $\pi$ exists, $u$ updates $\left<\fid,o,\cdot,\sigma\right>$ 
   with 
   $\left<\fid,o,M.\texttt{t},\sigma\right>$. Also, it removes $\pi$ from 
   $\Pi_u$ 
   and forwards $M$ to all its neighbors in $N(u,S(u),\sigma)$. 
   Second, if $u$ is a border vertex for area $A_{\sigma'}$,  where $\sigma' 
   \sqsubset 
   \sigma$, it checks whether $\pi$ was used to create one or more crossing 
   pathlets 
   for $A_{\sigma'}$. If so, for each of these pathlets $\pi=\left<\fid',u,\cdot,
   \sigma'\right>$, $u$ sends a new \textbf{Withdrawlet} message $M'$, with 
   $M'.\texttt{fid}=\fid'$, $M'.\texttt{s}=\sigma'$, $M'.\texttt{origin}=u$, and 
   $M'.\texttt{t}=T$, to all its neighbors in $N(u,S(u),\sigma')$.

   Third, if $\pi$ is an atomic pathlet with scope stack $\sigma \circ (l)$, 
   $u$ considers each of its endpoint vertices to determine whether they are 
   border vertices for some area. Let $o$ be an endpoint vertex of $\pi$.
   Vertex $u$ checks whether $o$ is not a border vertex for an area $\sigma'$ 
   such that $\sigma \sqsubset \sigma' \sqsubseteq S(u)$, and $o \in 
   B_u(\sigma')$. 
   In that case,  $u$ removes $o$ from $B_u(\sigma')$ and removes from 
   $C_u(\sigma')$ 
   all pathlets $\bar P\subseteq crossing(\Pi_u,u,\sigma')$ whose end vertex is 
   $o$.
   For each pathlet $\left<\fid,u,o,\sigma',\cdot\right> \in \bar P$ removed so 
   far, $u$ sends a  \textbf{Withdrawlet} message $M'$, with 
   $M'.\texttt{f}=\fid$, 
   $M'.\texttt{origin}=u$, $M.\texttt{s}=\sigma'$, and $M.\texttt{t}=T$ to each 
   neighbor in $N(u,S(u),\sigma')$. 
   
   Finally, for each pathlet $\bar \pi=\left<\bar \fid,\bar o,\cdot,\bar \sigma 
   \circ (l),\cdot\right>$ that was created using $\pi_{old}$, $u$ checks 
   whether 
   there does not exist a concatenation 
    of $n$ pathlets from $u$ to 
   $\bar o$ with scope stacks $\sigma_1,\dots,\sigma_n$, such that, for each 
   $i=1,\dots,n$, we have 
   $(\sigma_i =\bar \sigma \circ (m))$, where $m\neq l$ if $\bar \pi$ is non 
   atomic. In that case, $u$ sets 
   $T_u(\bar \fid,\bar o))=T$, i.e. pathlet $\bar \pi$ will be removed from 
   $\Pi_u$ 
   at $T+K$, unless a proper connectivity to $\bar o$ is restored.

% Finally, if $\pi$ is an atomic pathlet, then $u$ checks whether $o$ is not a 
% border vertex for an area $\sigma'$ such that $\sigma \sqsubset \sigma' \sqsubseteq 
% S(u)$ and $u$ is a border vertex for $\sigma'$. In that case, $u$ removes $o$ from 
% $B_u(\sigma')$ and if $N(u,S(u),\sigma') \neq \emptyset$, it removes from $C_u(\sigma')$ 
% all pathlets $\bar P\subseteq crossing(\Pi_u,u,\sigma')$ whose end vertex is $o$.
% For each pathlet $\left<f,o,\cdot,\cdot,\cdot\right> \in \bar P$ removed so far, 
% $u$ sends a $\textbf{Withdrawlet}$ message $M'$, with $M'.\texttt{f}=f$, 
% $M'.\texttt{origin}=u$, and $M.\texttt{t}$ set to the current time value, to each 
% neighbor in $N(u,S(u),\sigma')$.
}

% % % % % % % % % % % % % % % % % % % % % % % % % % % % % % %
% % % % % % % % % % % % % % % % % % % % % % % % % % % % % % %
% % % % % % % % % % % % % % % % % % % % % % % % % % % % % % %

\vspace{1mm}
\textbf{Receipt of a Withdraw Message} --
Receiving a \textbf{Withdraw} message $M$ has the same effect of receiving 
several 
\textbf{Withdrawlet} messages, all with the same timestamp $M.\texttt{t}$, one 
for 
each $\fid$ of the pathlets in $\Pi_u$ that have scope stack $M.\texttt{s}$ 
and start vertex 
$M.\texttt{o}$. In order to handle this type of message, function 
\textsc{IsWithdrawletMessageFresher} needs to be 
slightly modified as follows: if $M$ carries fresher information for all the 
pathlets in $\Pi_u$ with scope stack $M.\texttt{s}$ and start vertex 
$M.\texttt{o}$, then history $H_u$ is 
appropriately updated for all these pathlets and only the single 
\textbf{Withdraw} message is further propagated by $u$; otherwise, if $u$ has a 
more recent history entry in $H_u$ for at least one of these pathlets, $u$ 
treats the \textbf{Withdraw} message exactly as a sequence of 
\textbf{Withdrawlet} messages, sending back to $M.\texttt{src}$ single 
\textbf{Pathlet} and \textbf{Withdrawlet} messages with updated information, and 
forwarding single \textbf{Withdrawlet} messages as appropriate. If $M$ is 
determined to carry fresh information, pathlets are then updated by $u$ as 
already explained for the \textbf{Withdrawlet} message.

% In fact, $u$ undertakes 
%the same actions as when receving a message  $\textbf{Withdrawlet}$ for $\pi$,
% except that it does not forward any $\textbf{Withdrawlet}$ message. In fact,  
%vertex $u$ forwards message $M$ to each of its neighbors, according to the 
%propagation
%condition for scope stack $\sigma$.
%
%% % formal
%More formally, vertex $u$ forwards message $M$ to each of its neighbors in 
%$N(u,S(u),S)$\footnote{this last sentence should be removed outside the TR}.

\remove{
	\item $type=\textbf{Request\_Crossing\_Pathlet}$. When a vertex $u$ receives a
	$\textbf{Request\_Crossing\_Pathlet}$, it sends to $M.\texttt{source}$ a message
	$M'$ with $M'.\texttt{rf}=M.\texttt{fs}$ and, if
	there exists a pathlet $\pi \in \Pi_u$ with
	$fids(\pi)=M.\texttt{fs}$,  it sets $M'.\texttt{fsr}=fids(\pi)$, otherwise
	$M'.\texttt{fsr}=()$.

	\item $type=\textbf{Response\_Crossing\_Pathlet}$. Let $sv=M.\texttt{sv}$, and $v=ù
	M.\texttt{source}$. When a vertex $u$ receives a 
	\textbf{Response\_Crossing\_Pathlet} message it means that $u$ has previously sent 
	a $\textbf{Request\_Crossing\_Pathlet}$ to $v$ for $M.\texttt{ops}$ and $u$'s 
	actions are described in~\ref{sec:stack-change}\footnote{mc: da modificare questo 
	salto indietro}.

	\item $type=\textbf{Update}$. If $M.\texttt{ops}=(\pi_1' \pi_2')$, vertex $u$ checks 
	if $\Pi_u$ contains $\pi_1'$ or $\pi_2'$. In that case, if there exists a pathlet 
	$\pi \in \Pi_u$ that is composed by a sequence of pathlets $S_{\pi}=(\pi_1\ \dots\ 
	\pi_n)$ such that $(\pi_k\ \pi_{k+1})=(\pi_1'\ \pi_2')$ for a certain $k\le n$, 
	then $u$ replaces $(\pi_k \pi_{k+1})$ by $M.\texttt{nps}$ in $S_{\pi}$. Also, it 
	removes $\pi_1'$ and $\pi_2'$ from $\Pi_u$. Otherwise, if $\Pi_u$ does not contain 
	neither $\pi_1'$ nor $\pi_2'$, it sends an \textbf{Acknowledgment} message $M_1$ 
	to $M.\texttt{source}$ with $M_1.\texttt{ops}=M.\texttt{ops}$, $M_1.\texttt{nps}=
	M.\texttt{nps}$, $M_1.\texttt{dv}$ is set to the start vertex of $\pi_1'$. 

	If $M.\texttt{ops}=(\pi')$, vertex $u$ checks if $\Pi_u$ contains $\pi'$. In that 
	case, if there exists a pathlet $\pi \in \Pi_u$ that is composed by a sequence of 
	pathlets $S_{\pi}=(\pi_1\ \dots\ \pi_n)$ such that $\pi_k=\pi'$ for a certain $k
	\le n$, then $u$ replaces $\pi_k$ by $M.\texttt{nps}$ in $S_{\pi}$. Also, it 
	removes $\pi'$ and $\pi_2$ from $\Pi_u$. Otherwise, if $\Pi_u$ does not contain 
	$\pi'$, it sends an \textbf{Acknowledgment} message $M_1$ to $M.\texttt{source}$ 
	with $M_1.\texttt{ops}=M.\texttt{ops}$, $M_1.\texttt{nps}=M.\texttt{nps}$, 
	$M_1.\texttt{dv}$ is set to the start vertex of $\pi'$. 

	\item $type=\textbf{Acknowlegment}$. When $u$ receives an \textbf{Acknowledgement} 
	message, if either (i) $u=M.\texttt{dv}$ and $u$ has received 
	\textbf{Acknowledgement}s messages from all its neighbors for the tuple 
	$(M.\texttt{ops}, M.\texttt{nps},M.\texttt{dv})$ or (ii) $u\neq M.\texttt{dv}$ and 
	$u$ has received \textbf{Acknowledgement}s messages from all its neighbors, except 
	one, for the tuple $(M.\texttt{ops}, M.\texttt{nps},M.\texttt{dv})$, then it 
	removes all pathlets in $M.\texttt{ops}$ from $\Pi_u$. Otherwise, it undertakes no 
	actions. 
}

\remove{

   \vspace{1mm}
   {\bf Handling network partitioning.}
   When a network is partitioned, it may happen that a vertex $u$ stores
   in its $\Pi_u$ a pathlet $\pi$ with start vertex $o$ that has been deleted by 
   $o$ 
   but $u$ has never received a $\textbf{Withdraw}$ message for it. A network 
   parition
   can be physical or logical. The former case happens when several links or 
   vertices
   fails and there is no physical connection between several pair of vertices. 
   The 
   latter case happens when a failure or a administrative configuration
   change does not physically disconnect a pair of vertices  $u$ and $v$ and
   % , for a certain 
   % scope stack $\sigma$, there does no longer exist a concatenation 
   % of $n$ pathlets from a vertex $u$ to a vertex $v$ with scope stacks 
   %$\sigma_1,\dots,
   % \sigma_n$, such that, for each $i=1,\dots,n$, we have $(\sigma_i =\bar 
   %\sigma 
   % \circ (m))$, where $m\neq l$ if $\bar \pi$ is non atomic. Roughly speaking,
   every routing information 
   change at $v$, with a scope stack $\sigma$, cannot be propagated to $u$ 
   because 
   of propagation conditions, while it was possible to propagate it before the 
   event 
   logically partitioned $u$ and $v$. 
   Two main problems arise when network partitioning happens
   (both physical or logical). First, vertices in one partition may still store 
   pathlets of the other partition, even if these pathlets has been deleted in 
   the 
   mean time. This problem is solved using a mapping $T_u$ that assigns to each 
   pathlet $\pi$ non-``reachable'' from $u$ an expiration time instant, after 
   which, $\pi$ 
   is deleted, unless $\pi$ becomes ``reachable'' again.  A pathlet with start 
   vertex $v$ 
   and scope stack $\sigma\circ(l)$ is \emph{reachable} from $u$ if there exists 
   a concatenation  
   of $n$ pathlets from $u$ to $v$ with scope stacks $\sigma_1,\dots, \sigma_n$, 
   such that, for each $i=1,\dots,n$, we have $\sigma_i =\sigma \circ (m)$, 
   where $m\neq l$ if $\bar \pi$ is non atomic. Roughly speaking, a pathlet 
   $\pi$ is reachable 
   from $u$ if a change of $\pi$ can be propagated to $u$. The expiration time 
   is 
   set to the time instant in which $\pi$ is discovered to be non-reachable plus
   $K$ time units, for a certain $K>0$.
   The second problem arises when two or more partitions are reconnected to each 
   other. 
   In that case, a mechanism must ensure that outdated information stored in any 
   of 
   these partitions is purged from the network. 
   To do this, each vertex stores an history $H_u$ that tracks the most recent 
   information of pathlets known by $u$ during the last $K$ time units. This 
   history
   is used to recreate both positive and negative routing information every time 
   a new routing session between two vertices is established. This mechanism 
   guarantees
   that, if for each outdated pathlet $\pi$, there exists a vertex with a 
   complete 
   history of $\pi$ for the last $K$ time units, all outdated information stored 
   in
   any partition, is either removed because of the mapping $T_u$, i.e., when 
   they 
   are not reachable from $u$ for too much time, or it is updated by routing 
   information recreated using $H_u$. If such a vertex does not exists and two
   partitions are reconnected before $K$ time units, both mechanisms fail. In 
   fact, 
   each outdated pathlet may become reachable again and may be spread through 
   the entire
   network. To overcome this problem, the protocol must violate the propagation 
   conditions in the following specific case: always announce a pathlet to its 
   start
   vertex, regardless of propagation conditions. In that case, the start vertex 
   can
   verify whether outdated information is still stored in the network and, 
   consequently,
   remove it by sending $\textbf{Withdrawlet}$ messages.

}

\section{Applicability Considerations}\label{sec:applicability}
In this section we describe how the control plane we have formally defined can
be implemented in real world, and we explain how further requirements, like 
support for Quality of Service levels, can easily be accommodated in our model.
It is out of the scope of this paper to detail the configuration language
that would have to be used to configure our control plane.

\textbf{Technologies} --
The control plane we have defined in the previous sections is completely
independent of the data plane that carries its messages: network
destinations carried by \textbf{Pathlet} messages are completely generic and
each vertex only communicates with its immediate neighbors, and to achieve this
a simple link-layer connectivity is required. However, the information collected
by our control plane can only be fully exploited if a data plane that can handle
pathlets is available. As also explained in 
Section~\ref{sec:dissemination-basics}, data packets
should have an additional header that specifies the sequence of FIDs of the
pathlets that the packet is to be forwarded along. When a router receives a
packet, it looks at the topmost FID, retrieves the next-hop router that
corresponds to that FID, removes the FID from the packet's header,
and forwards the packet to the next-hop router. If the FID corresponds to a
crossing or final pathlet, the router also alters the packet's header by
prepending the FIDs of the component pathlets before forwarding it. We
highlight that the sequence of FIDs can be represented by a stack of labels and
the operations we have described actually correspond to a label swap. For this
reason, it is easy to implement the data plane of pathlet routing, as well as 
our control plane, on top of MPLS.
The authors of~\cite{bib:pathlet} share the same vision 
in~\cite{bib:pathlet-web}, yet they underline that MPLS does not allow to 
implement an overlay topology, which is 
very useful to specify, e.g., local transit policies. We argue that, 
unlike~\cite{bib:pathlet}, our control plane is conceived for internal routing 
in an ISP's network, a different scenario where MPLS is a commonly adopted 
technology and different requirements exist in terms of routing policies.

\textbf{Incremental Deployment} --
It is of course unrealistic for an Intenet Service Provider to change the
internal routing protocol in the whole network in a single step. Our control
plane is therefore designed to support an incremental deployment, so that a
pathlet-enabled zone of the network that adopts our control plane and an MPLS
data plane can nicely coexist with other non-pathlet-enabled zones of the same
network that use different control and data planes. Assuming that
non-pathlet-enabled zones use IP (possibly in combination with MPLS), we have
two interesting situations: a pathlet-enabled zone is embedded in an IP-only
network (initial deployment phase) or an IP-only zone is embedded in a
pathlet-enabled network (legacy zones that may remain after the deployment). The
first scenario can be easily implemented by making routers at the boundary of
the two zones redistribute IP prefixes from the IP control plane to the pathlet 
control plane: this means that
boundary routers appear as the originators of these destination prefixes in the 
pathlet
zone. Each boundary router then creates final pathlets to get to the
destinations originated by the other boundary routers: the IP prefixes that
boundary routers learn from the pathlet zone in this way are then redistributed
from the pathlet control plane to the IP control plane. Likewise, the IP 
prefixes of destinations that are 
available at routers within the pathlet zone are also redistributed to the 
IP control plane. In this way, IP-only routers can reach destinations inside the 
pathlet zone or just traverse it as if it were a network link.
As a small exception to what we have shown in Section~\ref{sec:model}, in
this scenario boundary routers need to compose final pathlets even if they just
belong to area $A_{(l_0)}$. From the point of view of the data plane, packets 
that enter the pathlet-enabled zone will have suitable $\fid$s pushed on their 
header, indicating the pathlets to be used to reach another boundary router or a 
destination within the pathlet zone; these $\fid$s will be removed when packets 
exit the pathlet-enabled zone.
The second scenario can be implemented by assuming that routers at the boundary
of the two zones have a way to exchange the pathlets they have learned
by exploiting the IP-only control plane: for example, this could be
achieved by tunneling \textbf{Pathlet} messages in IP or by
transferring pathlet information suitably encoded in BGP messages (possibly in
the AS path attribute). Boundary routers then redistribute from the pathlet 
control plane to the IP control plane the IP prefixes they have learned from the 
final pathlets: in this
way, the boundary routers appear as the originators of these prefixes in the
IP-only zone. Moreover, each boundary router will disseminate final
and crossing pathlets that lead, respectively, to destinations inside the IP-only
zone or to other boundary routers. In this way, the IP-only 
zone can be
traversed (or its internal destinations be reached) without revealing its
internal routing mechanism, and appears just as if it were an area of the
pathlet zone. From the point of view of the data plane, a packet containing 
$\fid$s in 
its header must be enabled to traverse the IP-only zone: this can be easily 
achieved by establishing tunnels between pairs of boundary routers.
Of course there is no sharp frontier between the first and the second scenario,
because the roles of ``embedded'' and ``embedder'' zone can be easily swapped:
although they best fit specific phases of the deployment, 
both choices can indeed be permanently adopted, and it is up to the
administrator to decide which one is it best to apply.

\textbf{Quality of Service} --
We have designed our control plane to support the computation of multiple paths
between the same pair of routers. Besides improving robustness, this feature
can also be exploited to support Quality of Service. In particular, each
pathlet could be labeled with performance indicators (delay, packet loss,
jitter, etc.) that characterize the quality of the path that it exploits. Upon
creating a crossing or final pathlet, a router will
update the performance indicators according to those of the component
pathlets. When multiple pathlets are available between the same pair of routers,
a router will be able to choose the one that best fits the QoS requirements for
a specific traffic flow.

\textbf{Software Defined Networking} --
A relatively recent trend in computer networks is represented by the separation
of the logic of operation of the control plane of a device from the (hardware
or software) components that take care of actual traffic
forwarding. This trend, known as Software Defined Networking, has a concrete
realization in the OpenFlow protocol
specification~\cite{web-openflow}. We believe that our approach has several
elements that make it compatible with an OpenFlow scenario. First of all,
the fact that packets are forwarded according to the sequence of FIDs contained
in their header is a form of source routing: this matches with the
OpenFlow mechanism of setting up flow table entries to route all the packets of
a flow along an established path. Moreover, a recent
contribution~\cite{openflow-hierarchical} proposes a
hierarchical architecture for an OpenFlow network: the authors suggest that 
a set of devices under the coordination of a single controller can be seen
as a single logical device that is part of a larger OpenFlow network, in turn
having its own controller. Following this approach, we could assign an OpenFlow
controller instance to each area defined in our control plane, and
these instances could be organized in a hierarchy that simply reflects the
hierarchy of areas: in this way, each instance can direct traffic along the 
desired sequence of pathlets within the area that it controls, whereas instances 
at higher levels of the hierarchy can only see
lower controllers as a single entity, reflecting the idea of crossing pathlet.

\section{Experimental Evaluation}\label{sec:experiments}

In order to verify the effectiveness of our approach and to assess its
scalability, we have performed several experiments in a simulated scenario. For
this purpose we used OMNeT++~\cite{omnet}, a component-based C++ simulation
framework based on a discrete event model. We considered a few other alternative
platforms, including, e.g., the Click modular router~\cite{click}, but in the
end we selected OMNeT++ because it has a very accurate model of a router's
components, like Click, and it also allows to run on a single machine a complete
simulated network with realistic parameters, such as link delay. Moreover, there
exist lots of ready-to-use extensions for OMNeT++ that allow the simulation of 
specific scenarios,
including IP-based networks. To consider a realistic setup, we therefore chose
to build a prototype implementation of our control plane based on the IP 
implementation made available in the INET framework~\cite{bib:inet}, a companion 
project of OMNeT++. In our prototype, the messages of our control
plane are exchanged encapsulated in IP packets with a dedicated protocol number
in the IP header. We implemented most of the mechanisms
described in Sections~\ref{sec:dissemination-basics} and~\ref{sec:dynamics},
with very few exceptions that are not relevant for the purposes of our
experiments. In particular, we implemented all the message types (except fields 
carrying network destinations), all the propagation conditions, the mechanisms 
to discover border vertices for an area and to compose atomic and crossing 
pathlets, the history at each vertex, and a relevant portion of the forwarding 
state (the mapping between a pathlet and its component pathlets). Some of 
the algorithms adopted in our implementation may
still not be tuned for best efficiency, but this is completely irrelevant
because we measured routing convergence times by using the built-in OMNeT++
timer, which reflects the event timings of the simulation,
instead of the wall clock.

Each simulation we ran had two inputs: a topology specification, consisting of 
routers, links, assignment of label stacks to routers, and link delays; 
and an IP routing specification, consisting of assignments of IP addresses to 
routers' interfaces and of insertion of static routes for the networks that were 
directly connected to each router. In order to facilitate the automated 
generation of large topologies, we assigned to each link a /30 subnet selected 
according to a deterministic but completely arbitrary pattern.

We first executed several experiments in a small topology with a well-defined
structure encompassing border routers for several areas.
This topology consisted of 15 routers, 20 edges, 4 areas with a maximum length 
of the label stacks equal to 3 (including label $l_0$), and at least 3 vertices 
in each area.
This helped us to thoroughly verify the implementation for consistency. We then 
implemented a topology generator and used it to create larger topologies that 
could allow us to assess the scalability of 
our control plane. The topology generator works by creating a hierarchy of areas 
and by adding routers and links randomly to the areas. It takes the following 
parameters as input:
length $N$ of the label stack of all the routers;
number of routers having a stack of length $N$, specified as a range
$[R_{\min}, R_{\max}]$, with possibly $R_{\min}=R_{\max}$; 
number of areas contained in each area, specified as a
range $[A_{\min}, 
A_{\max}]$,  with possibly $A_{\min}=A_{\max}$;
probability $P$ of adding an edge between two vertices;
fraction $B$ of the routers within an area that act as border routers 
for that area (namely that can have links to vertices outside that area). 
The topology generator proceeds by recursively creating areas, 
starting from a single area that comprises all vertices. The complete procedure 
is described in Algorithm~\ref{alg:topology-generator}.

A detailed description of the experiments we carried on follows. This 
description is still in a drafty form and will be improved in a future release 
of this technical report.

Two preliminary experiments were carried out to assess the scalability of our 
control plane.
In both experiments, we ran several simulations where the size of the topology 
were increased. 
 To increase the size of the topology we adopted the following strategy. 
Every input parameter of the topology generator was fixed, except one that has 
been used to change the size of the topology. In the first experiment, the 
number of areas contained in each area was chosen as the variable input, while  
in the second experiment, the length of the label stack was chosen as the 
variable input.
For each simulation, we collected data regarding the number of messages sent by 
each router, the number of pathlets stored in each router, and the convergence
time of the protocol. Because of difficulty with the OMNeT framework, our simulation
were performed with topologies with a limited level of multipath. As a consquence,
we ran simulations on topologies whose bottom-level areas exposes a limited
level of multipath. On average, there are $3-4$ different paths between two 
border vertices of a bottom-level area. 

{\bf Statistical tools -} The analysis of the collected data involves the 
knowledge of several widely adopted statistical tool. We introduce them and we 
try to give intuitions of their meaning.
 We use linear regression analysis to determine the level of correlation 
(linear dependence) between two arbitrary variables $X$ and $Y$. A linear 
regression analysis returns a line $l_Y(X)=A+BX$ that is an estimate of the 
value $Y$ with respect to $X$. The estimation of $l_Y$ depends on the specific
measure of accuracy that is adopted. In our analysis, we use the 
Ordinary-Least-Square (OLS) method for estimating coefficients $A$ and $B$ of 
$l_Y$. Observe that $B$ is extremely relevant in the analysis of the result 
since it can be interpreted as an estimation of the increase of $Y$ for each 
additioanl unit of $X$. 
 To verify if there exists a linear relation between $X$ and $Y$, we 
look at the \emph{coefficient of determination} $R^2$. An $R^2$ close to $1.0$ indicates 
that the relation between $X$ and $Y$ is roughly linear, while an $R^2$ closer 
indicates that the relation does not seem to be linear. To determine the level
of dispersion of the observed values for $X$ and $Y$ with respect to $l_Y$, we 
look at the \emph{standard error of the regression} ($SER$) of $l_Y$. This value has the
same unit of values in $Y$ and can be interpreted as follows: approximately 95\% 
of the points lie within $2\times SER$ of the regression line. Sometimes, it is
better to look at a normalized value that represents the level of dispersion of 
a set of values $X$. This can be done, by computing the \emph{coefficient of 
variation} $c_v$ of $X$ which is independent of the unit in which the measurement has 
been  taken and can be interpreted as follows: given a set of values $X$, 
approximately 68\% of the values lie between $\mu(1 - c_v)$ and $\mu(1 + c_v)$,
approximately 95\% of the values lie between $\mu(1 - 2c_v)$ and $\mu(1 + 2c_v)$,
and approximately 99.7\% of the values lie between $\mu(1 - 3c_v)$ and $\mu(1 + 3c_v)$,
where $\mu$ is the arithmetic mean of $X$.

{\bf Experiment 1 - } We constructed network 
% In the first experiment, we constructed network 
topologies using the following fixed parameters: $R_{\min}=R_{\max}=
10$, $N=2$, $P=0.1$ and $B=5$. 
The variable input $A_{\min}=A_{\max}$ varied from $2$ to $7$. Roughly speaking,
we keep a constant number of levels in the area hierarchy while increasing the
number of areas in the same level. For each 
combination of these values, we generated $10$ different topologies and for each 
of these topologies, we ran an OMNeT++ simulation and collected relavant data as
previously specified. 
 In Fig.~\ref{fig:area-max-pathlet} (Fig.\ref{fig:area-avg-pathlet}) we show that the 
maximum (average) number of pathlets stored in each router (depicted as crosses)
grows with respect to the number of edges in the topology. In particular, the growth 
is approximately linear as confirmed by a linear regression analysis (depicted as a line) 
performed on the collected data. The slope of the line is $0.83$ ($0.75$), which can be 
interpreted as follows: for each new edge in the network, we expect that the value 
of the maximum (average) number of pathlets stored in each router 
increases by a factor of $0.83$ ($0.75$) on average. To assess the linear
dependence of the relation, we verified that the $R^2=0.87$ ($R^2=0.9$), which means that 
the linear regression is a good-fit of the points. The standard error of the 
regression is $39.8$ ($9.26$), which can be interpreted as follows: 
approximately 95\% of the points lie within $2\times 39.8$ ($2\times 9.26$) of 
the regression line. 
 We motivate the linear growth as follows. Observe that, each topology has a fixed 
number $A_{\max}$ of areas and therefore the expected number of crossing pathlets 
created for any arbitrary area is the same. Hence, by increasing 
$A_{\max}$, since the number of crossing  pathlet created in each area grows 
linearly with $A_{\max}$, we have that also the number of pathlets stored in 
each router, which contains a constant number of atomic pathlets plus each crossing 
pathlet created by border router of neighbors areas, grows linearly. 
In Fig.~\ref{fig:area-max-msg} (Fig.\ref{fig:area-avg-msg}) we show that similar
results hold also when we analyze the average/maximum number of messages by each 
router. In fact, we observed that many considerations that we observed with 
respect to the number of messages sent by each router, are also valid with 
respect to the number of pathlets stored in each router. In fact, as shown in 
Fig.~\ref{fig:area-max-msg-max-pathlet}, we show that there exists an interesting 
linear dependence, with $R^2=0.87$, between the number of messages sent by each router and the 
number of pathlets stored in each router. 
% As a consequence, many considerations
% that we observed with respect to the number of messages sent by each router, are 
% also valid with respect to the number of pathlets stored in each router.

{\bf Experiment 2 - } We constructed 
% In the second experiment, we constructed
network topologies using the following fixed values:
$R_{\min}=R_{\max}=10$, $A_{\min}=A_{\max}=2$, $P=0.1$ and $B=5$. The variable 
input $N$ varied  in the range between $1$ and $4$. Roughly speaking,
we keep a constant number of subareas inside an area, while we increase the 
level of area hierarchy. For each combination of these  values, we generated 
$10$ different topologies and for each of these topologies, we 
run an OMNeT++ simulation and the same data as in the first experiment.
 In Fig.~\ref{fig:stack-max-pathlet} (Fig.\ref{fig:stack-avg-pathlet}) we see that the 
maximum (average) number of pathlets stored in each router (depicted as crosses)
grows linearly with respect to the number of edges in the topology. A linear 
regression analysis computed over the collected data shows that the slope of the 
line is $1.38$ ($0.94$), which can be 
interpreted as follows: for each new edge in the network, we expect that the value 
of the maximum (average) number of pathlets stored in each router 
increases by a factor of $1.38$ ($0.94$) on average. To assess the linear
dependence of the relation, we verified that the $R^2=0.75$ ($R^2=0.81$), which 
can still be considered a good-fit of the points. 
As for the dispersion of the points with respect to the line, it is easy to 
observe that, the higher the number of edge, the higher the
dispersion. If we look to the standard error of the regression, since the measure
is not normalized, we may obtain an inaccurate value of the dispersion. Therefore, 
we do the following. We consider $4$ partition $P_1,\dots,P_4$ of the measurents,
where $X_i$ contains each measure collected when $N=i$, and compute the 
coefficient of variation $c_v$ of each subset. We observe that $c_v$ varies
between $0.21$ and $0.42$ ($0.24$ and $0.31$), which means that in each partition $X_i$, 95\% of 
the points lie within $2\cdot0.41\mu$ ($2\cdot0.31\mu$) of the mean $\mu$ of 
$X_i$. We have not enough data to check whether there exists a lineaer dependence 
between the value $c_v$ of a partition $X_i$ and the index $i$. 

We motivate the linear growth as follows.
Observe that each bottom area construct on average the same number of crossing 
pathlets, regardless the levels of the hierarchy. Because of the low values 
chosen for $P$, $A_{\min}$, and $A_{\max}$, the number of crossing pathlets
created by higher areas is roughly proportional to the number of crossing pathlets
created by its subareas. Now, consider the set of pathlets stored in a router with
stack label $(x_1\ \dots\ x_n)$. 
It contains atomic pathlets for the areas in which it belongs, which are logarithmic
with the number of edges, and it contains crossing pathlets from other areas, which
are roughly proportional to the number of bottom-level areas. 
Hence, each time $N$ is increased, 
both the number of edges and the number of bottom-level areas, grows exponentially 
with the same rate, and therefore the number of pathlets stored in each router
increases linearly with respect to the number of edges.
As for the first experiment, a similar trend can be seen also in 
Fig.~\ref{fig:stack-max-msg} and  Fig.\ref{fig:stack-avg-msg} with respected to
the maximum and average number of messages sent by a router, respectively. In 
fact, also in this experiment, we observe that there is a good linear relation
between the number of messages sent by a vertex and the number of pathlets stored
in each router (see Fig.~\ref{fig:stack-max-msg-max-pathlet}).

{\bf Convergence time - } We now consider the convergence time $T$ of the
protocol expressed in milliseconds. In both experiments we set link delays as 
random uniform variable in the range between $10$ and $50$ milliseconds. 
Convergence time for the first and the second experiments with respect to the
size of the network (expressed by the number of edges) are shown in 
Fig.~\ref{fig:area-convergence-time} and \ref{fig:stack-convergence-time}, 
respectively.
In {\bf Experiment 1} we achieved $T\in [363,611]$
whereas in {\bf Experiment 2} $T\in [259,736]$. The minimum value in both
the intervals correspond to the minimum value among the 10 different generated
topologies with the value $A=2$ and the value $L=1$ respectively for {\bf
Experiment 1} and {\bf Experiment 2}; on the other hand the maximum value in
both the intervals is the maximum value among the 10 different generated
topologies with the value $L=7$ and the value $L=4$ respectively for {\bf
Experiment 1} and {\bf Experiment 2}.

In the first experiment, the value $T=363\ \mathit{msec}$ was achieved
for a topology with two areas, $A_{(0\ 1)}$ and $A_{(0\ 2)}$, both contained 
into area $A_0$, with $R_{\min}=R_{\max}=10$ vertices per area. By random adding
edges, we obtained a network with 20 vertices and 22 edges. On the other hand, 
the value $T=611\ \mathit{msec}$ was achieved for a topology with $R_{\min}=
R_{\max}=10$ vertices per bottom-level area and $A_{\min}=A_{\max}=7$ bottom-level areas contained
into area $A_0$. By random adding edges, we obtained a topology with 70
vertices and 89 edges.
As for the experiments that involve the lenght $N$ of the label stack, we obtained
the value $T=0.259\ \mathit{msec}$ for a network  topology where all vertices belong
only to the same area $A_0$. The generated network has 10 vertices and 12 edges. On the
other hand, the value $T=736\ \mathit{msec}$ was achieved for a network topology with
$R_{\min}=R_{\max}=10$ vertices per bottom-level area and the length of the 
label stack is $N=4$. The generated topology has 80 vertices and 104 edges, 
with vertices organized into a network that have 2 areas at each level. 
In particular, $A_0$ contains two areas, $A_{(0\ 1)}$ and $A_{(0\ 2)}$. Each 
of these areas contains, in turn, two areas: $A_{(0\ 1\ 2)}$ and $A_{(0\ 1\ 3)}$ 
are contained into $A_{(0\ 1)}$, while $A_{(0\ 2\ 3)}$ and $A_{(0\ 2\ 4)}$ 
are contained into $A_{(0\ 2)}$ and so on. 
% that remains into an interval under 1 second, 

In both experiments, we achieved a convergence time below $1 \mathit{sec}$ and we observed 
that the correlation between the size of the network and the convergence time, 
does not exhibit a linear behaviour.
In Fig.~\ref{fig:area-convergence-time} it can clearly be observed that the 
regression line does not fit well the points. 
% n particular, Fig.~\ref{fig:area-convergence-time} 
% shows the correlation between the convergence time and the number of edges in 
% a topology.
The slope of the regression line is $1.2$, which can be interpreted as
follow: for each new edge in the network, we expect that convergence time increases
of $1.2$ milliseconds. However, we computed an $R^2$ value of $0.26$, which is 
close to $0$ and suggest a lack of correlation between convergence time and 
number of edges. In other word, the convergence time is independent from the 
number of edges. On the other hand, In Fig.~\ref{fig:stack-convergence-time}
a more strong correlation between convergence time and number of edges seems to 
hold. In this case, the slope is $2.9$ with an $R^2$ value of $0.57$. 
This means that the growth appears to be linear but there is a high dispersion of
the points from the line. 
% However, if the number of the edges is almost increased
% tenfold, the convergence time is almost redouble. 
We recall that both experiments has been run with a pathlet composition rule 
that force each border vertex to compose all the possible pathlets to every 
other border vertex of the same area. Changing this rule, we expect that 
convergence times will decrease.

\begin{figure}
   \centering
   \includegraphics[width=\columnwidth]{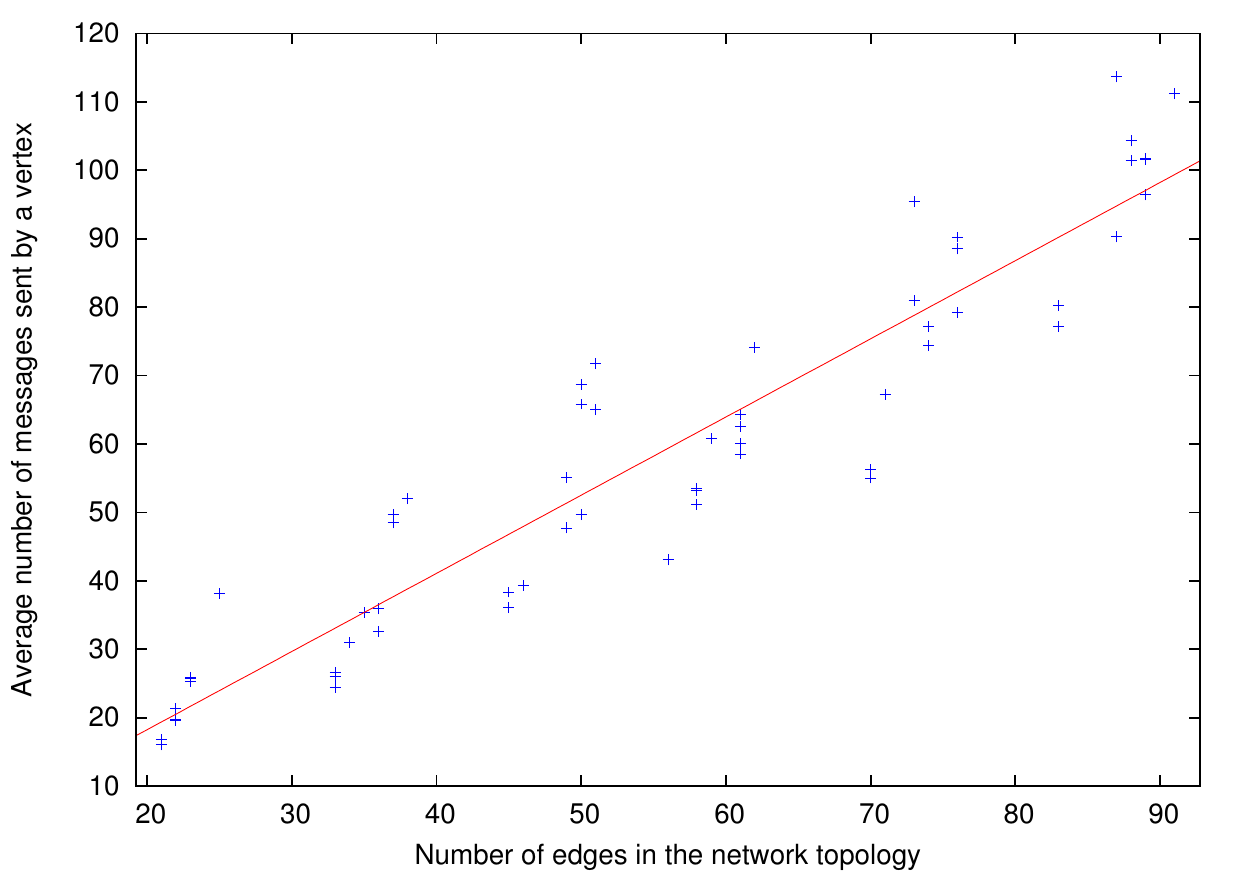}
   \caption{Average number of messages sent by a router with respect to 
   the number of edges contained in the topology. (Experiment 1)}
   \label{fig:area-avg-msg}
\end{figure}

\begin{figure}
   \centering
   \includegraphics[width=\columnwidth]{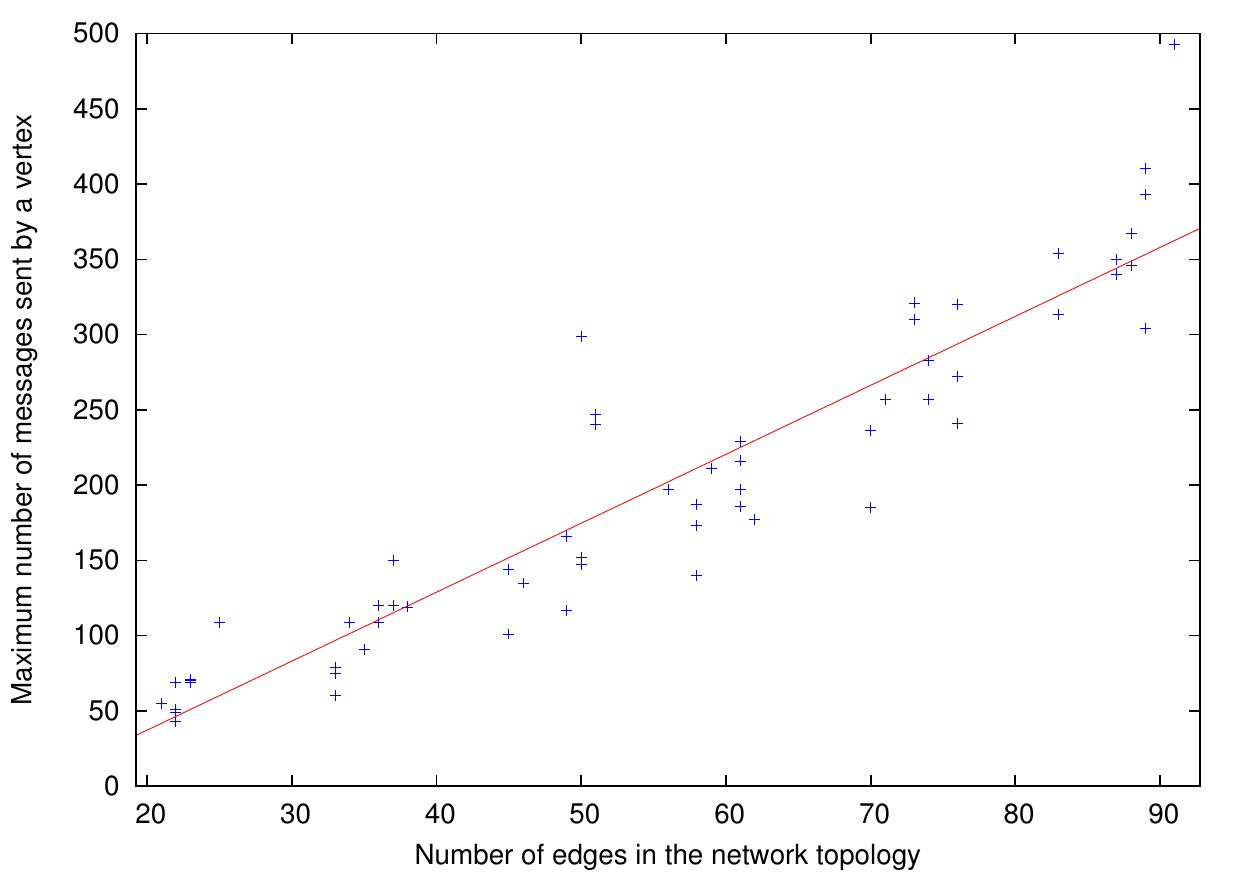}
   \caption{Maximum number of messages sent by a router with respect to 
   the number of edges contained in the topology. (Experiment 1)}
   \label{fig:area-max-msg}
\end{figure}

\begin{figure}
   \centering
   \includegraphics[width=\columnwidth]{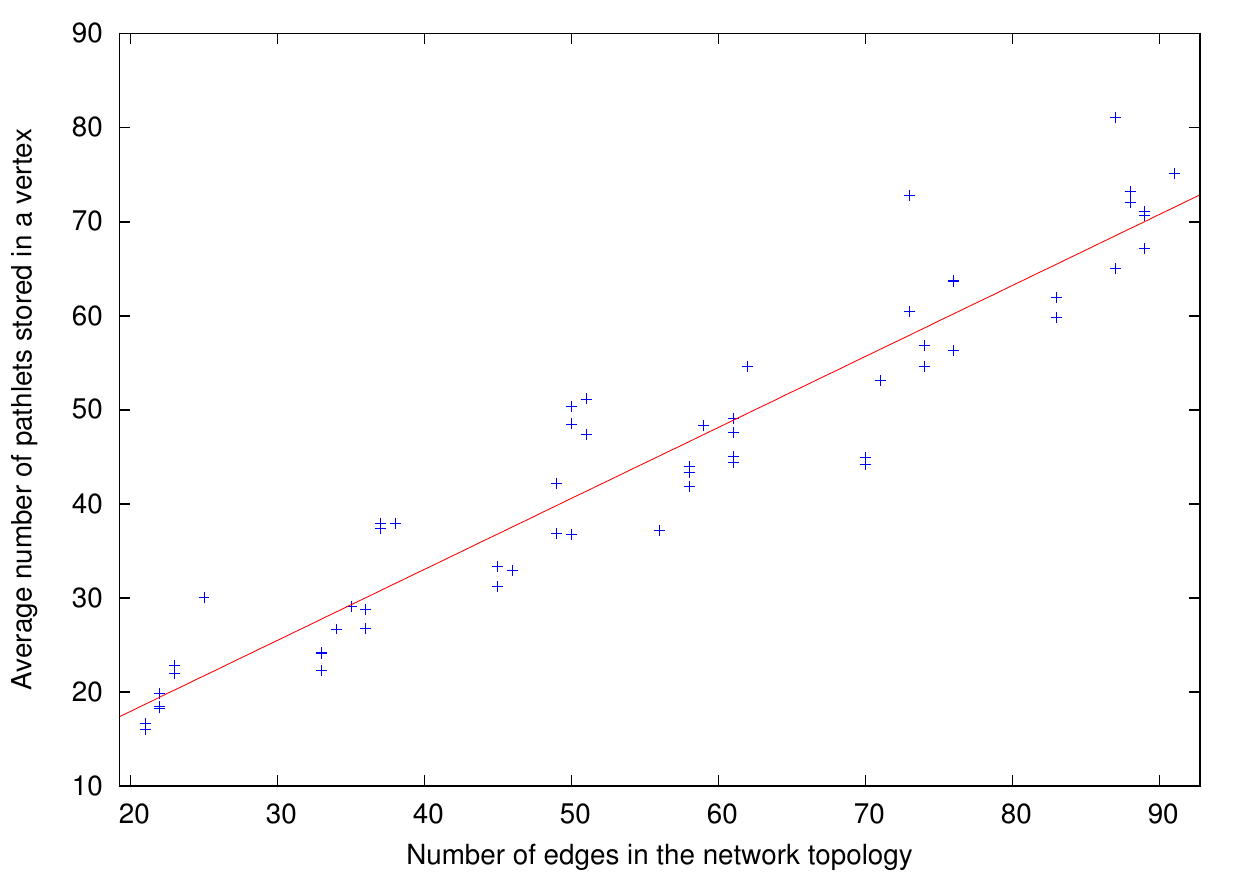}
   \caption{Average number of pathlets stored in a router with respect to 
   the number of edges contained in the topology. (Experiment 1)}
   \label{fig:area-avg-pathlet}
\end{figure}

\begin{figure}
   \centering
   \includegraphics[width=\columnwidth]{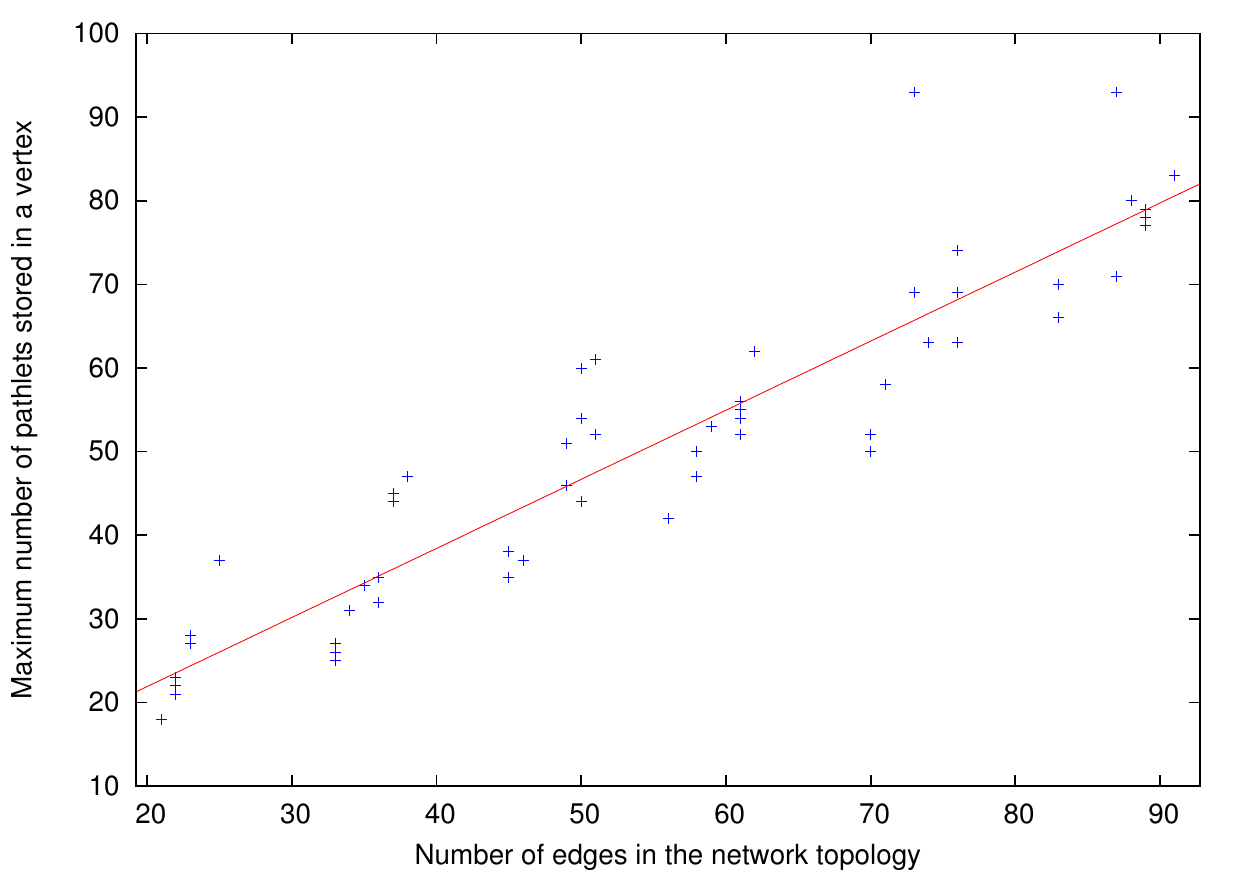}
   \caption{Maximum number of pathlets stored in a router with respect to 
   the number of edges contained in the topology. (Experiment 1)}
   \label{fig:area-max-pathlet}
\end{figure}

\begin{figure}
   \centering
   \includegraphics[width=\columnwidth]{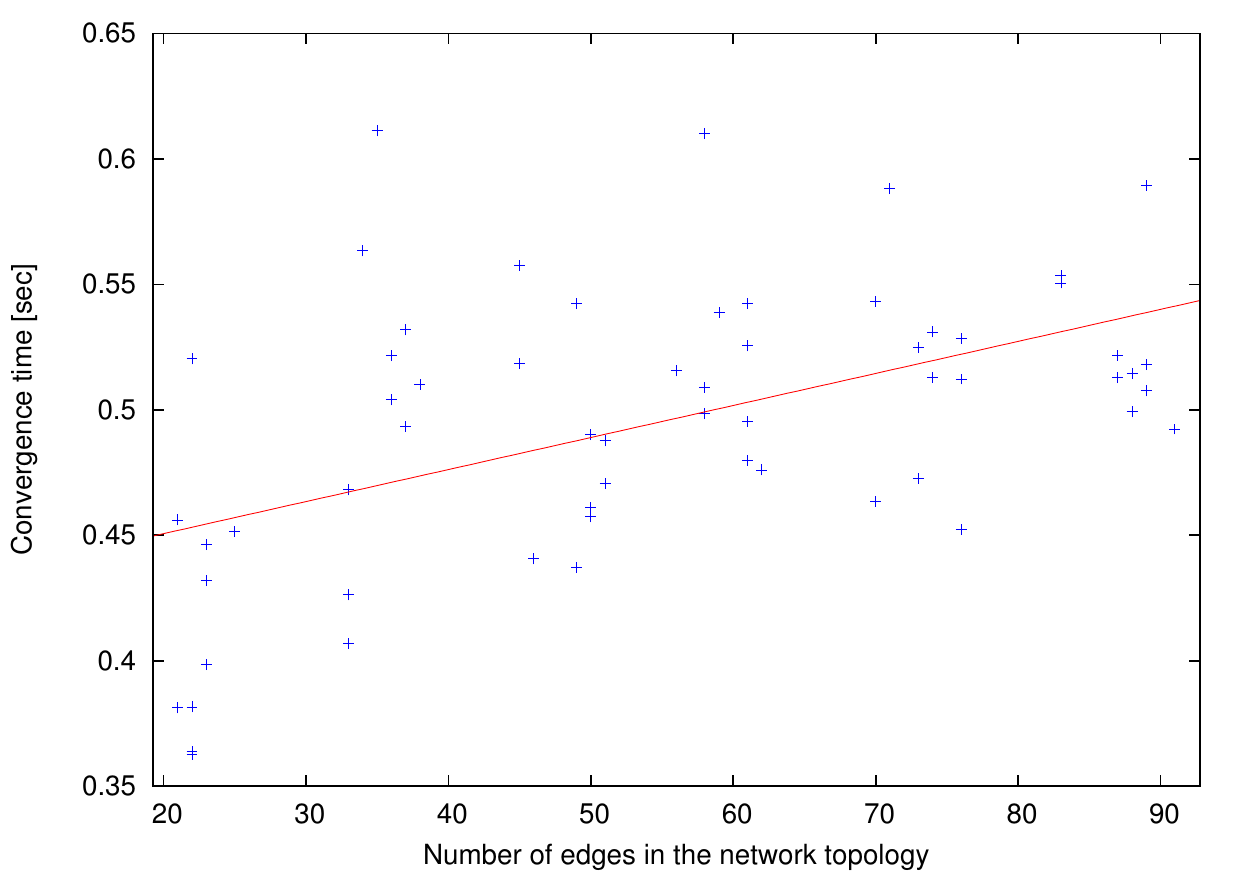}
   \caption{Network convergence time with respect to the number of edges 
   contained in the topology. (Experiment 1)}
   \label{fig:area-convergence-time}
\end{figure}

\begin{figure}
   \centering
   \includegraphics[width=\columnwidth]{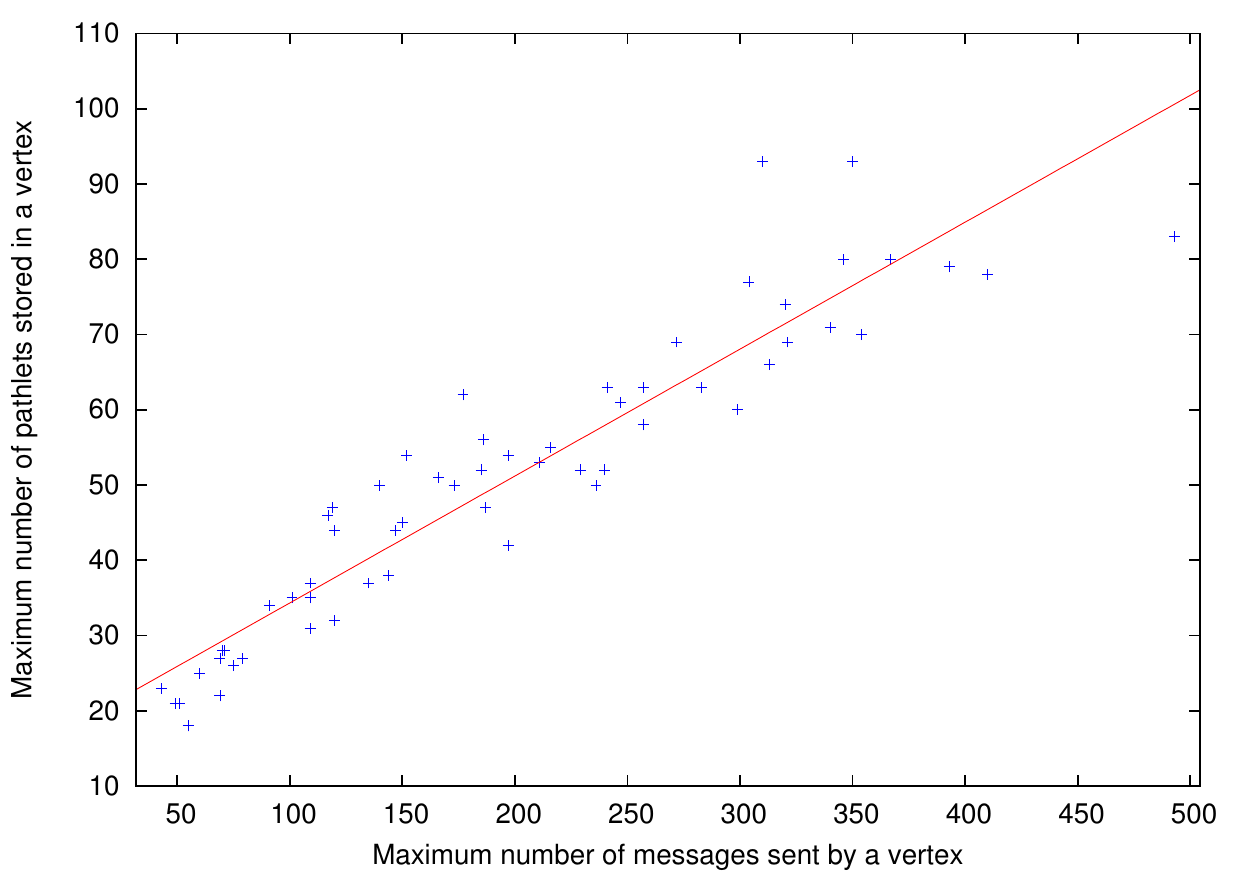}
   \caption{Maximum number of pathlets stored in a router with respect to 
   the maximum number of messages sent by a single router. (Experiment 1)}
   \label{fig:area-max-msg-max-pathlet}
\end{figure}

\begin{figure}
   \centering
   \includegraphics[width=\columnwidth]{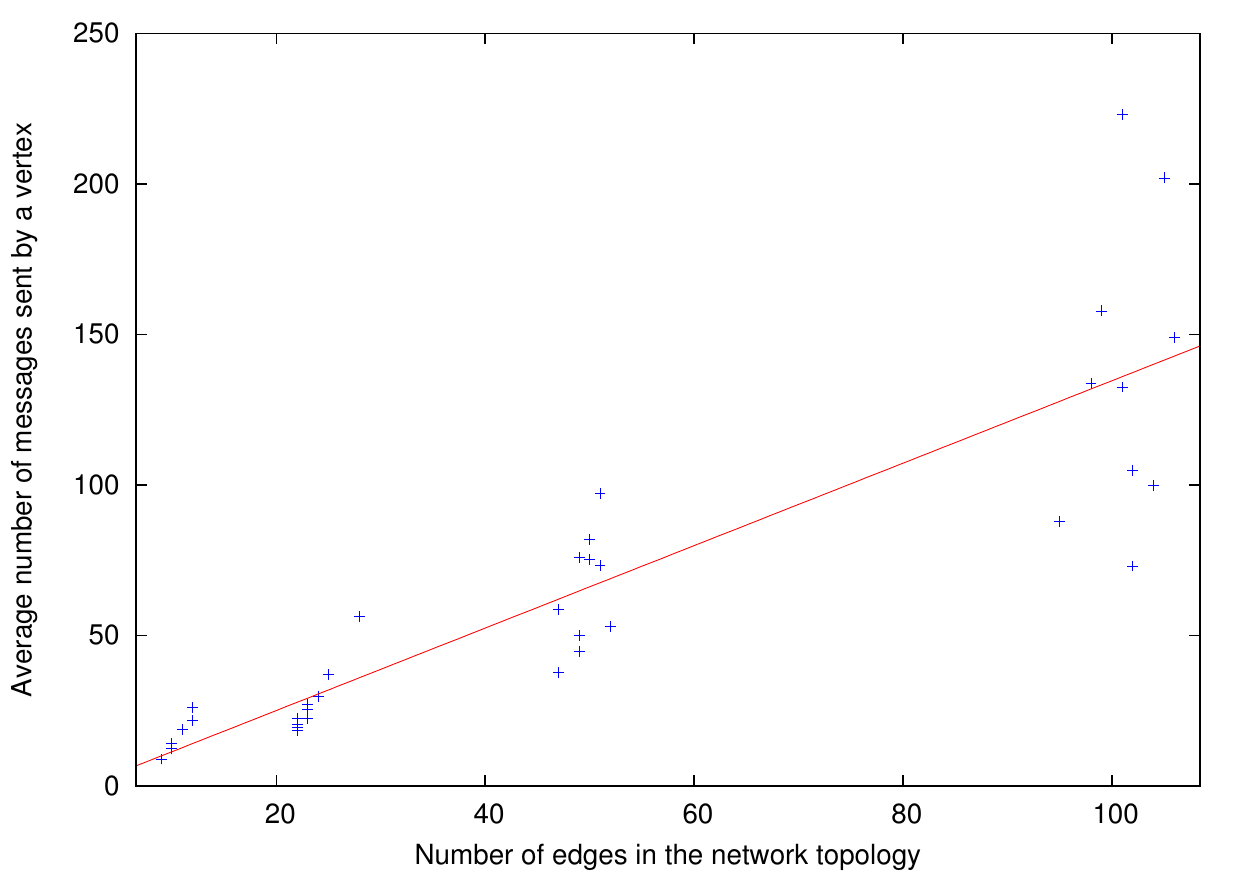}
   \caption{Average number of messages sent by a router with respect to 
   the number of edges contained in the topology. (Experiment 2)}
   \label{fig:stack-avg-msg}
\end{figure}

\begin{figure}
   \centering
   \includegraphics[width=\columnwidth]{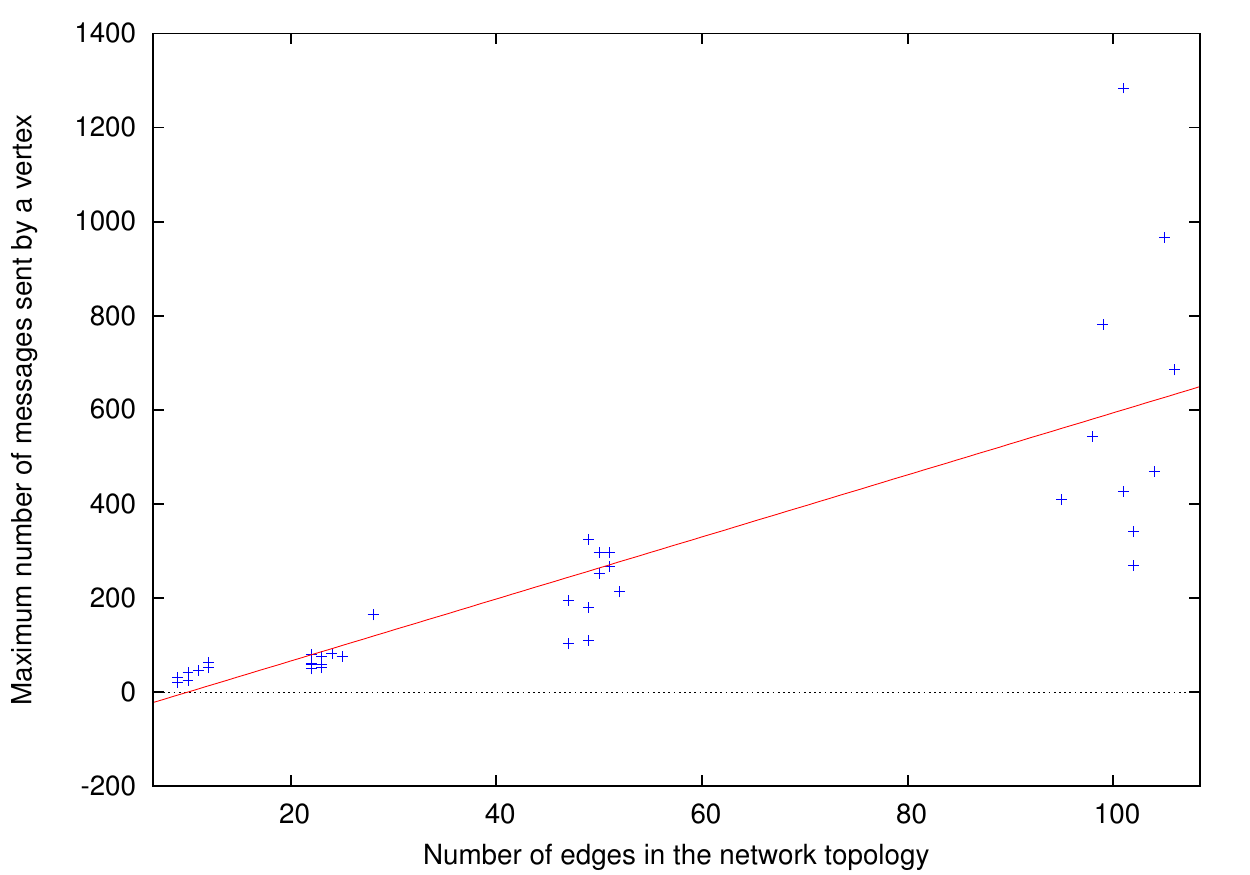}
   \caption{Maximum number of messages sent by a router with respect to 
   the number of edges contained in the topology. (Experiment 2)}
   \label{fig:stack-max-msg}
\end{figure}

\begin{figure}
   \centering
   \includegraphics[width=\columnwidth]{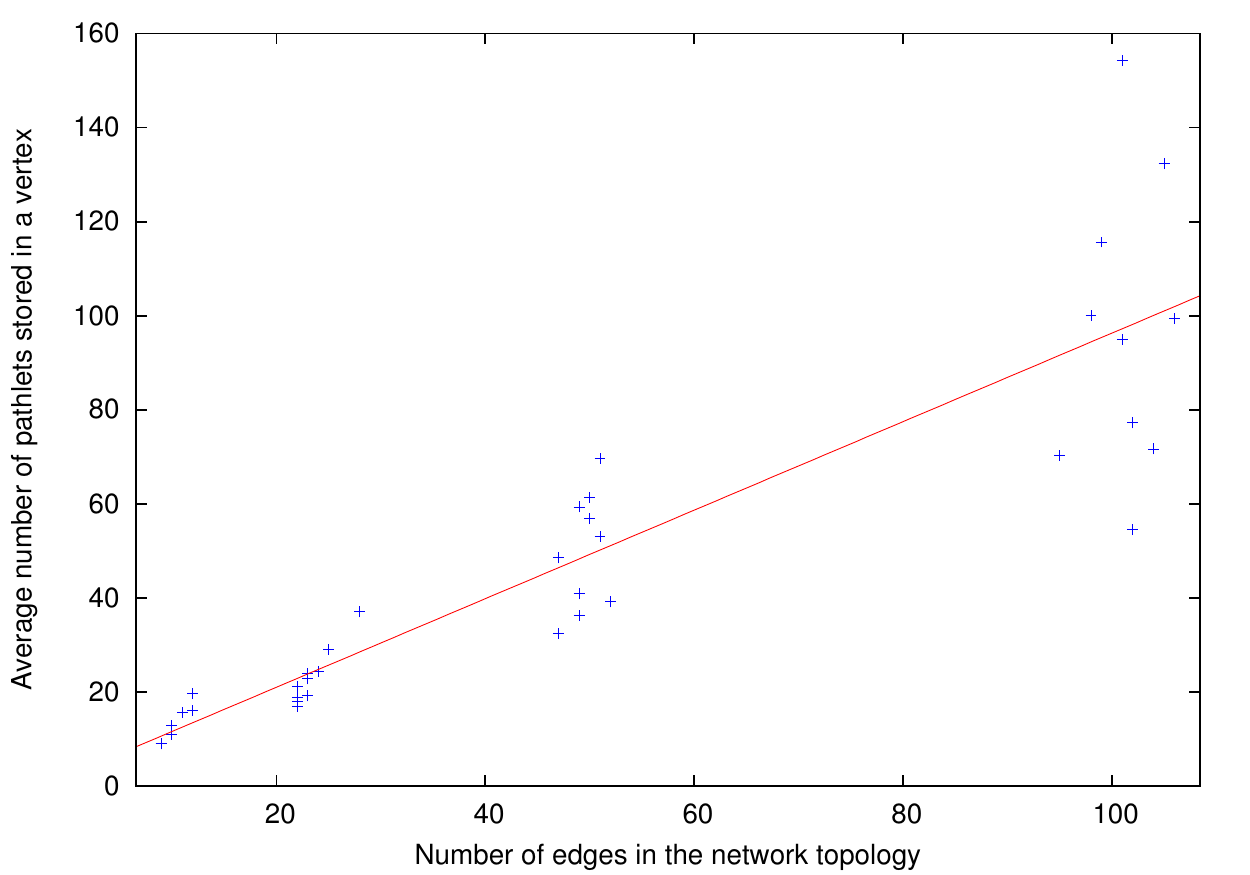}
   \caption{Average number of pathlets stored in each router with respect to 
   the number of edges contained in the topology. (Experiment 2)}
   \label{fig:stack-avg-pathlet}
\end{figure}

\begin{figure}
   \centering
   \includegraphics[width=\columnwidth]{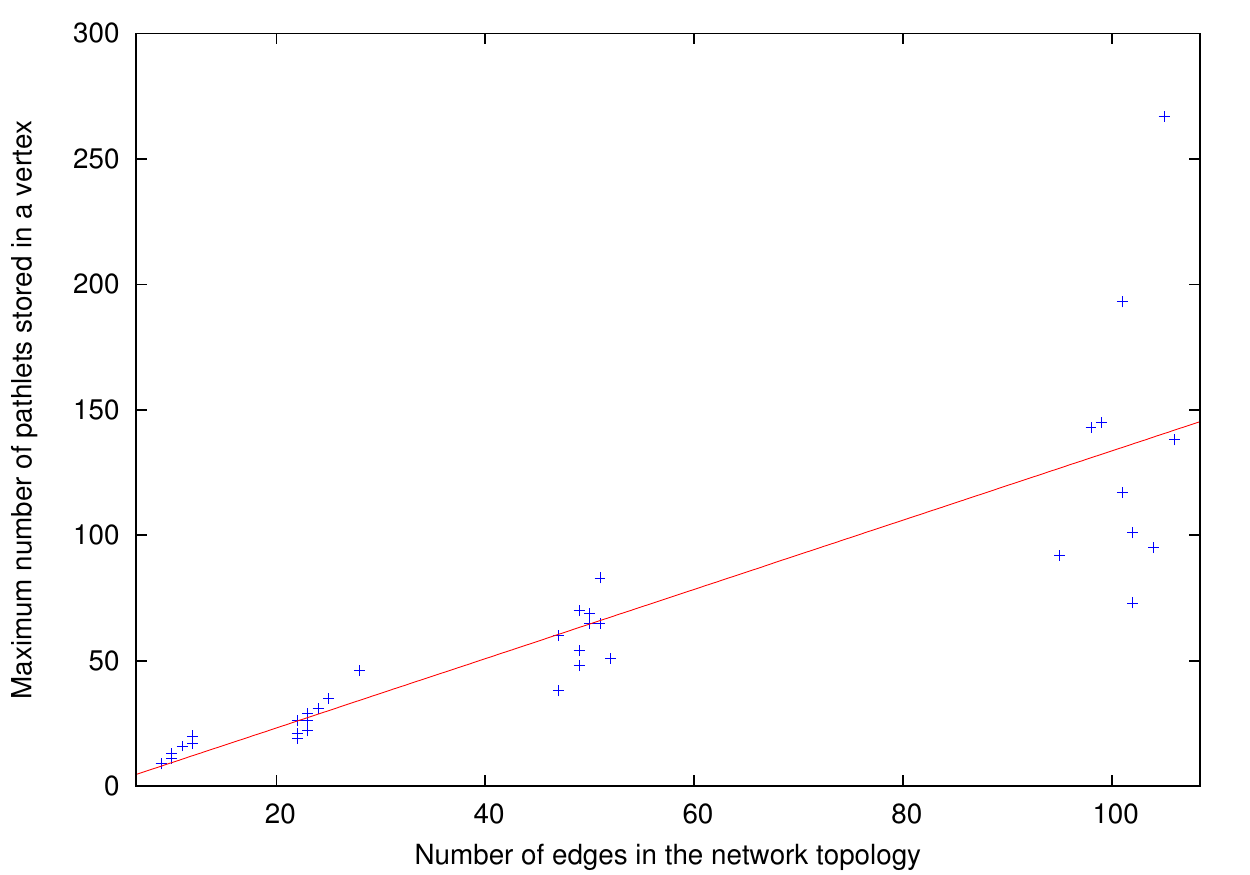}
   \caption{Maximum number of pathlets stored in a router with respect to 
   the number of edges contained in the topology. (Experiment 2)}
   \label{fig:stack-max-pathlet}
\end{figure}

\begin{figure}
   \centering
   \includegraphics[width=\columnwidth]{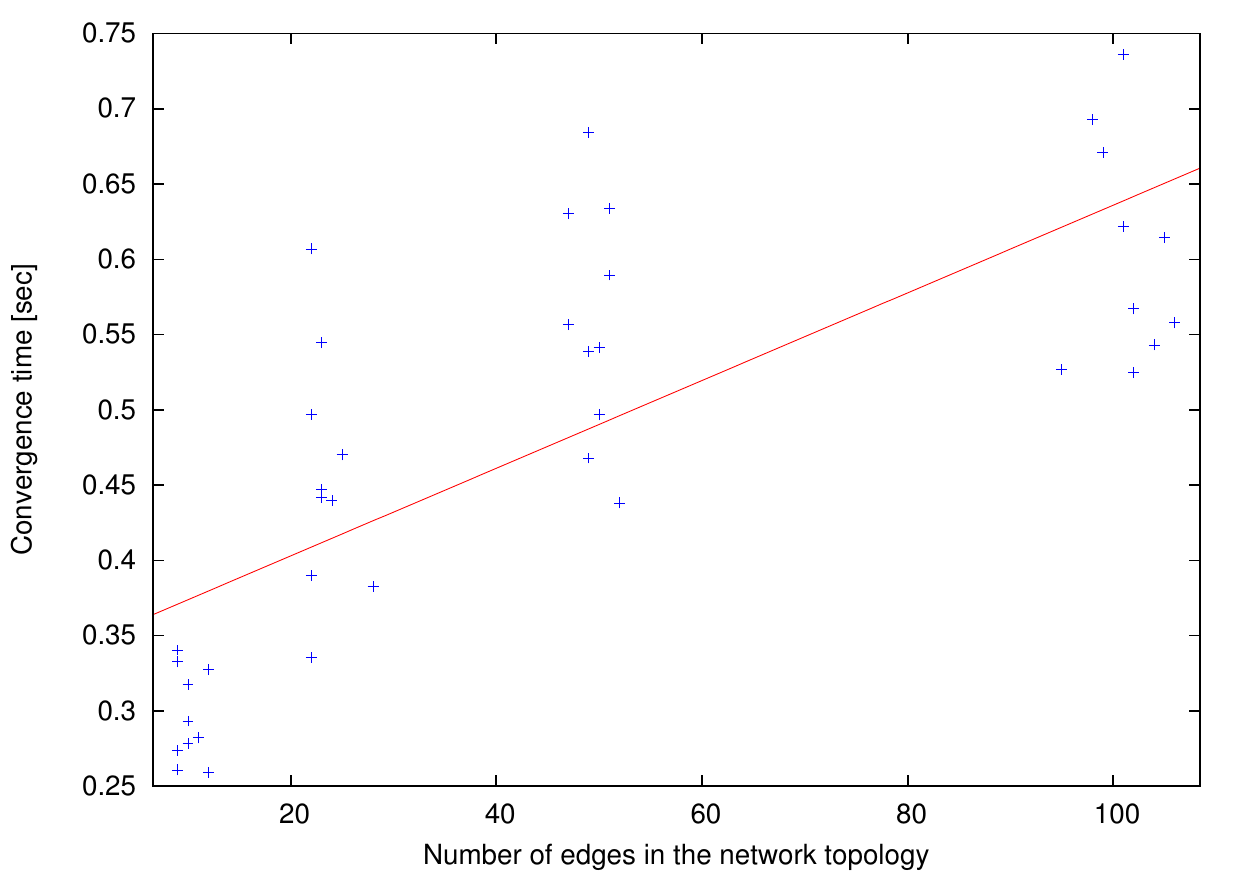}
   \caption{Network convergence time with respect to the number of edges 
   contained in the topology. (Experiment 2)}
   \label{fig:stack-convergence-time}
\end{figure}

\begin{figure}
   \centering
   \includegraphics[width=\columnwidth]{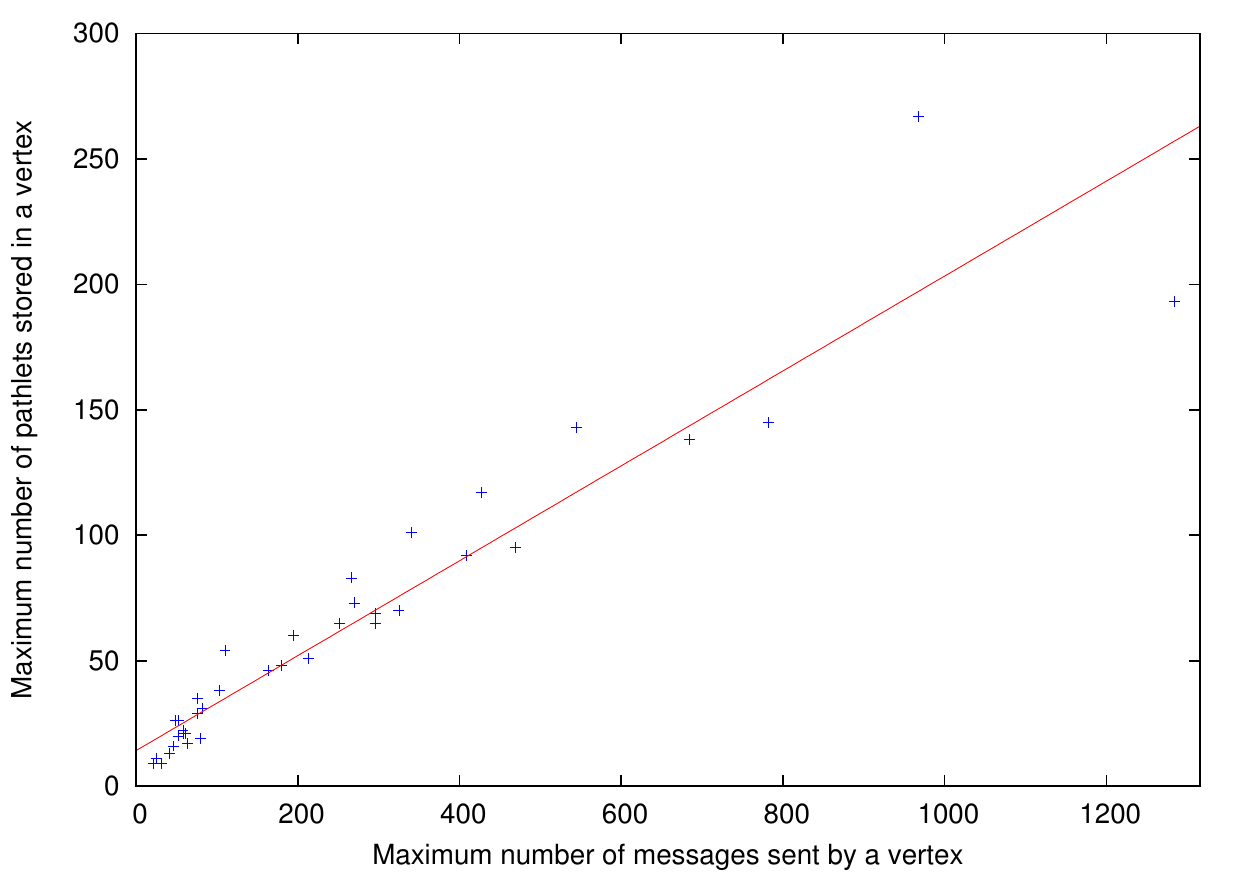}
   \caption{Maximum number of pathlets stored in a router with respect to 
   the number of messages sent by a vertex. (Experiment 2)}
   \label{fig:stack-max-msg-max-pathlet}
\end{figure}

% ***

% We used the topology generator to perform the following experiments:
% 
% **************

%l'evento non viene generato su un vertice a caso, si puo' scegliere se sia di 
%bordo e per che livello e' di bordo. inoltre puoi dirgli a che livello dello 
%stack lo vuoi spostare (ad es. sali di uno e scendi di due)

%genera una topologia e 5 stack change su questa topologia. Esegui la topologia 
%senza variazioni e poi rieseguila 4 volte, con uno stack change per ogni 
%esecuzione...
% generare una topologia fissa e 5 diversi file, ciascuno con un solo stack 
%change...quindi credo che per questo non cambierebbe nulla...la topologia 
%rimane 
%fissa

\begin{algorithm*}
   \caption{Algorithm used in our topology generator.}
   \label{alg:topology-generator}
   \begin{algorithmic}
      \Function{PopulateArea}{$A$, $\mathit{level}$, 
         $R_{\min}$, $R_{\max}$, $N$, $A_{\min}$, $A_{\max}$, $P$, $B$}
         \If{$\mathit{level}=N$}
            \State $r \leftarrow $ a random number in $[R_{\min}, R_{\max}]$
            \State Add $r$ vertices to $A$
            \Repeat
               \ForAll{pair $(u,v)$ of vertices in $A$}
                  \State Add an edge between $u$ and $v$ with probability $P$
               \EndFor
            \Until{$A$ is connected}
            \State Randomly pick $r\times B$ routers in $A$ and mark them as 
            border routers for $A$
         \Else
            \State $a \leftarrow $ a random number in $[A_{\min}, A_{\max}]$
            \State Create $a$ areas inside $A$; let $\cal A$ be the set of these 
            areas
            \State $\bar R \leftarrow \emptyset$
            \ForAll{$\bar A$ in $\cal A$}
               \State \Call{PopulateArea}{$\bar A$, $\mathit{level}+1$, 
                  $R_{\min}$, $R_{\max}$, $N$, $A_{\min}$, $A_{\max}$, $P$, $B$}
               \State $\bar R \leftarrow \bar R\ \cup$ all border routers for 
               $\bar A$
            \EndFor
            \Repeat
               \State $E \leftarrow \emptyset$
               \ForAll{pair $(u,v)$ of routers in $\bar R$}
                  \State Flip a coin with probability $P$
                  \If{heads}
                     \State Add an edge between $u$ and $v$
                     \State $E \leftarrow E \cup {(u,v)}$
                  \EndIf
               \EndFor
            \Until{the undirected graph formed by vertices in $\bar R$ and edges 
            in $E$ is connected}
            \State Randomly pick $|\bar R |\times B$ routers in $\bar R$ and 
            mark them 
            as border routers for $A$
         \EndIf
         \State \textbf{return} $A$
      \EndFunction
      
      \Function{TopologyGenerator}{$R_{\min}$, $R_{\max}$, $N$, $A_{\min}$, 
      $A_{\max}$, 
         $P$, $B$}
         \State Create an area $A$
         \State $\mathit{level} \leftarrow 1$
         \State \textbf{return} \Call{PopulateArea}{$A$, $\mathit{level}$, 
         $R_{\min}$, $R_{\max}$, $N$, $A_{\min}$, $A_{\max}$, $P$, $B$}
      \EndFunction
   \end{algorithmic}
\end{algorithm*}
      
Our prototype implementation, including the topology generator, is publicly 
available at~\cite{pathlet-omnet-implementation}.

%al passo ricorsivo mi trovo dentro un'area A a profondita' $l$ nello stack. Se 
%$l=#stack_levels$, popolo l'area con [n_min,n_max] vertici ed eleggo tot 
%vertici 
%di bordo che "restituisco" dalla chiamata ricorsiva. Altrimenti, creo 
%[num_area_per_livello_min,num_aree_per_livello_max] aree e faccio la ricorsione 
%all'interno di ciascusa area. Quando ritorno dalla ricorsione, prendo i vertici 
%di bordo delle aree figlie ed aggiungo un arco fra loro con una certa 
%probabilita' $p$. A questo punto verifico la connettivita', se e' connesso 
%procedo, altrimenti ripeto l'inserimento degli archi. Infine, seleziono tot 
%vertici di bordo fra quelli delle aree figlie e li "restituisco" al termine 
%della ricorsione.

%nel passo base facciamo:
% - #(vertici dell'area)/ratio
%nel passo ricorsivo:
% - #(vertici di bordo aree figlie)/ratio
% 

\section{Conclusions and Future Work}\label{sec:conclusions}
In this paper we introduce a control plane for internal routing inside an ISP's
network that has several desirable properties, ranging from fine-grained
control of routing paths to scalability, robustness, and QoS support. Besides
introducing the basic routing mechanisms, which are based on a well-known
contribution~\cite{bib:pathlet}, we provide a thorough and formally sound
description of the messages and algorithms that are required to design such a
control plane. We validate our approach through extensive experimentation in
the OMNeT++ simulator, which reveals very promising scalability and convergence
times. Our prototype implementation is available
at~\cite{pathlet-omnet-implementation}.

There are a lot of improvements that we are still interested in working on.
Some of them are possible optimizations, whereas others are foundational issues
that are still open: here we mention a few. Our current choice of messages
types imposes a strong coupling between routing paths and network destinations:
if a network destination changes its visibility (for example, a router starts
announcing a new IP prefix), several pathlets to that destination must be
(re)announced, even though the routing has not changed. Inspired by recent
research trends~\cite{bib:ietf-lisp-group}, we could change the protocol a bit
to separate these two pieces of information.
In line with this decoupling requirement, we would like to investigate on how to
deal with dynamic changes in QoS levels associated with pathlets.
Routing policies, especially pathlet composition rules, could of course be
refined to accommodate further requirements that we have not considered yet.
Moreover, their specification and application could be enhanced to improve
scalability in common usage scenarios (for example, when several
areas are grouped into a larger one). The pathlet expiration mechanism needs 
further improvements to correctly purge pathlets in the presence of routing 
policies.
The handling of stack change events could also be improved: in particular,
we could design more effective mechanisms to transparently replace a pathlet
that is no longer visible with other newly appeared pathlets, without spreading
messages to the whole network. Being modeled as stack change events, the
handling of faults could be improved likewise.

\bibliographystyle{IEEEtran.bst}
% Generated by IEEEtran.bst, version: 1.12 (2007/01/11)

\end{document}